\documentclass[12pt]{article}
\usepackage[utf8]{inputenc}
\usepackage{parskip}
\usepackage{setspace}
\usepackage{authblk}
\usepackage{amsmath,amssymb,amsthm}
\usepackage{mathtools}
\usepackage{bbm}
\usepackage{geometry}
\usepackage{hyperref}
\usepackage{xr-hyper}
\usepackage{bm}
\usepackage{graphicx}
\usepackage{epstopdf} 
\usepackage[labelsep=period,font=small]{caption}
\usepackage[font=footnotesize]{subcaption}
\makeatletter
\renewcommand\p@subfigure{\thefigure}
\makeatother
\usepackage[section]{placeins}
\usepackage{microtype}
\usepackage{natbib}
\usepackage{threeparttable}
\usepackage{booktabs}
\hypersetup{
    colorlinks=true,
    linkcolor=blue,
    citecolor=blue,
    urlcolor=blue
}
\usepackage{float}

\newtheorem{definition}{Definition}[section]
\newtheorem{proposition}{Proposition}[section]
\newtheorem{theorem}{Theorem}[section]
\newtheorem{lemma}{Lemma}[section]
\newtheorem{assumption}{Assumption}[section]

\newcommand{\Var}{\mathrm{Var}}

\allowdisplaybreaks[4]

\title{Optimal Life Insurance Decision in Mean-Variance DC Management with Mortality Improvements}

\author[$1$]{Yueman Feng}
\author[$2$]{Wenyuan Li}
\author[$3$]{Mengyi Xu}
\author[$4$]{Pengyu Wei}

\affil[$1$]{Department of Statistics and Actuarial Science, The University of Hong Kong}
\affil[$2$]{Department of Statistics and Actuarial Science, The University of Hong Kong}
\affil[$3$]{School of Risk and Actuarial Studies, UNSW Business School, UNSW Sydney}
\affil[$4$]{Division of Banking and Finance, Nanyang Business School, Nanyang Technological University}
\date{\today}

\begin{document}

\maketitle

\begin{abstract}
This paper studies the investment and insurance strategies of defined-contribution (DC) pension plans under the mean-variance framework. We consider a stochastic environment with time-varying interest rates, contributions, and mortality risk. The DC plan members are allowed to decide their bond and stock allocations, as well as their life insurance coverage. Adopting the martingale approach, we derive the closed-form optimal strategies and the mean-variance efficient frontier. Further numerical analysis investigates how mortality improvements affect investment and insurance decisions, as well as the sensitivity of the optimal decision to market parameters. Our analysis suggests that longevity raises expectations of future contributions, allowing pension members to adopt a less risky investment strategy. Meanwhile, insurance strategy shifts toward early adulthood to protect the high value of future income and decreases significantly at later ages. Moreover, we conduct sensitivity analyses on the target expected wealth, market price of risk, and contribution growth. These findings provide practical guidance for pension members on investment and offer insights for the design of DC pension plans.

\vspace{0.5cm}
\noindent \textbf{Keywords:} Defined-contribution pension plan; mean-variance optimization; martingale approach; life insurance; mortality improvement.
\end{abstract}

\pagebreak

\section{Introduction}
\label{sec:introduction}

Longevity and population aging have been reshaping pension systems around the world. Occupational pension plans are also shifting from defined benefit (DB) plans to defined contribution (DC) plans to adapt to these changes. According to \cite{oecd2025pension}, in 2024, assets in DC plans increased by 11.4\% across OECD countries and other jurisdictions covered by the OECD's Global Pension Statistics. This significantly outperformed the 4.0\% increase in DB plans. Moreover, the proportion of DB plans in total pension plan assets dropped from 39.7\% in 2014 to only 31.9\% at the end of 2024. This evidence highlights the importance of DC plans in global pension markets.

Unlike the guaranteed retirement income provided by DB plans, DC plans do not provide such coverage. The benefits DC members receive depend on their contributions and investment outcomes. They are typically required to make their own investment decisions across decades, which exposes them to various risk sources, including inflation, interest rates, income uncertainty, and mortality risk. The long investment horizon and rapidly changing capital markets complicate decision-making and raise concerns about how to provide adequate retirement savings for DC members.

Various studies have contributed to DC pension management. The literature can be categorized into two groups: utility maximization and mean-variance (MV) optimization. For utility maximization, \citet{boulier2001protecteddc} consider a DC pension management problem with a retirement guarantee under a stochastic interest-rate model. They show that the optimal fund can be decomposed into three parts: a loan related to the contribution, a contingent claim linked to a guarantee, and a hedge fund. \cite{battocchio2004stochasticpension} further investigates the salary risk and inflation risk in the DC pension management. They emphasize that adding the inflation risk makes risk-free assets risky, and ignoring such risk leads to significant investment losses. For pertinent work, we refer to \cite{han2012dcinflation}, \cite{guan2014optimal}, \cite{chen2015regimeswitching}, \cite{tang2018asset}, \cite{dong2020optimal}, \cite{chen2023portfolio}, \cite{li2024dcmortality}, \cite{gao2025bayesian}, \cite{huang2025portfolio}, \cite{wang2025equilibrium}, and \cite{zhu2025robust}. For MV framework, \cite{hojgaard2007mean} first study the optimal DC allocation under the MV objective and compare the MV strategy with the target-based strategy and lifestyle strategy. \cite{yao2013meanvarianceinflation} studies how the inflation risk influences the MV pension member. They obtain an explicit efficient frontier using Lagrange dual theory and conclude that the pension member needs to take more risk given the expected return in a high-inflation economy. For further studies under the MV framework, we refer to \cite{he2013returnpremium}, \cite{yao2014stochasticincome}, \cite{wu2015gmmvmortality}, \cite{sun2016jumpdiffusion}, \cite{menoncin2017targetbased}, \cite{bian2018regimeswitching}. In the existing literature, although pre-retirement death risk has long been acknowledged, the MV literature typically addresses it through a return-of-premium clause. Life insurance is not considered as a control variable, and pre-retirement death risk is not fully hedged.

This research gap has clear institutional and economic concerns. At the practical level, some DC plans, such as Australia's superannuation system, have already incorporated life insurance as an investment choice. This compulsory DC pension scheme directly deducts life cover premiums from members' accumulated balances, implying that savings and protection are managed within a single vehicle rather than kept separate \citep{asic2026insurance}. Moreover, for the 401(k) pension fund in the U.S., life insurance purchase is also allowed to protect the pension members from pre-retirement death \citep{TreasReg1401}. At the theoretical level, mortality risk affects both the expected present value of future labor income and the demand for life insurance. If this risk is managed only passively through the return-of-premium clause or exogenously determined death benefits, pension members' hedging demand for pre-retirement death will be systematically underestimated. Moreover, mortality is not a static factor. Longevity improvements reshape the survival probabilities and change the human capital value as well as the marginal benefit of insurance. The optimal allocation between assets and protection will be distorted when mortality is not considered as a hedgeable risk.

To address these research gaps, this paper develops a continuous-time mean-variance optimization framework for DC pension members considering pre-retirement death. During the accumulation phase, members face stochastic interest rates, stochastic labor incomes, financial market risks, and mortality risk. The pension member is allowed to purchase a bond, a stock equity, and life insurance to minimize the variance of the terminal benefit given a fixed expected retirement payoff. When the pension member dies before retirement, the beneficiary receives a death benefit consisting of the account balance and life insurance payout. By adopting the martingale approach, we can transform the dynamic stochastic control problem into a static constrained one. We then derive the closed-form optimal investment and insurance strategies and further deduce the mean-variance efficient frontier. Finally, we use a verification theorem to validate the optimality of our solution.

Our result suggests that mortality improvement affects both asset allocation and insurance decisions by altering the expected present value of future labor income. Under our fully spanned income setting, mortality improvement increases future labor income, leading to greater exposure to fixed-income assets and reducing the need for equity investment. On the insurance side, demand shifts toward early life to protect the increased accumulation of future income. Meanwhile, insurance demand decreases more sharply at later stages due to the short time window left. Our sensitivity analysis also suggests that optimal strategies depend on multiple model parameters, such as the target expected wealth, market price of risk, and contribution growth.

Based on the framework and analysis, this paper makes three contributions to the existing DC pension literature. We first introduce life insurance as an asset-allocation option within mean-variance DC management. This addresses the gap in which premature death risk has been treated passively without a hedging option. Second, we further incorporate mortality improvement into the DC mean-variance framework and find how longevity affects individuals' investment and insurance decisions by adjusting their expectations of accumulated labor income. This directly relates to the current regulations in DC plans. For example, the group life insurance coverage provided by Australia's superannuation system should be calibrated with explicit reference to cohort-specific mortality improvement. Ignoring longevity trends can result in systematic bias in coverage. Finally, we characterize how the target wealth level and key economic drivers, including market price of risk and contribution growth, shape individuals' optimal investment and insurance strategies. This provides an operational framework for pension managers to adjust the balance between equity exposure and insurance coverage as market conditions and member characteristics evolve.

The remainder of this paper is organized as follows. Section~\ref{sec:economic-setting} introduces the economic and model settings used in this study. Section~\ref{sec:optimization} constructs the MV optimization problem and derives the optimal strategy and efficient frontier. We also provide a verification theorem for the optimal solution at the end of this section. Section~\ref{sec:numerical-analysis} examines how mortality improvement affects optimal asset allocation and insurance demand. We also conduct a sensitivity analysis on key economic drivers. Finally, Section~\ref{sec:conclusion} concludes the paper.

\section{Economic setting}
\label{sec:economic-setting}

\subsection{Financial market}
\label{subsec:financial-market}

We consider a financial market similar to that presented in \citet{munk2010dynamic}. Let $(\Omega, \mathcal{F}, \mathbb{P})$ be a complete filtered probability space. The flow of information is represented by the filtration $\mathcal{F} := \{\mathcal{F}_t\}_{t \in [0,T]}$, which is generated by a three-dimensional vector of independent standard Brownian motions $Z_t = (Z_{r,t}, Z_{S,t}, 
Z_{C,t})^\top$.

We assume the instantaneous risk-free interest rate, $r_t$, is governed by the mean-reverting Vasicek process
\begin{equation}
\label{eq:short-rate}
dr_t = \kappa(\bar{r} - r_t) \, dt - \bar{\sigma}_r^\top dZ_t,
\end{equation}
where $\kappa > 0$ dictates the speed of mean reversion towards the long-term level $\bar{r} > 0$. The interest rate volatility vector is defined as $\bar{\sigma}_r = (\sigma_r, 0, 0)^\top$, implying that the short rate is exclusively driven by the first Brownian motion, $Z_{r,t}$.

The financial market provides the investor with two tradable risky assets: a zero-coupon bond and an equity index. The price dynamics of the bond, $B_t$, and the stock index, $S_t$, are described by the following stochastic differential equations, respectively
\begin{equation*}
\begin{aligned}
dB_t &= B_t (r_t + \sigma_1^\top \Lambda) \, dt + B_t \sigma_1^\top dZ_t, \\
dS_t &= S_t (r_t + \sigma_2^\top \Lambda) \, dt + S_t \sigma_2^\top dZ_t,
\end{aligned}
\end{equation*}
where $\Lambda = (\lambda_r, \lambda_S, 0)^\top$ denotes the market price of risk. We denote the time-to-maturity of the zero-coupon bond by $\tau_B>0$. The volatility vectors for the bond and stock are respectively given by $\sigma_1^\top = (\sigma_r \beta(\tau_B), 0, 0)$ and $\sigma_2^\top = (\sigma_S \rho_{SB}, \sigma_S \sqrt{1-\rho_{SB}^2}, 0)$. Here, $\beta(\tau)=(1-e^{-\kappa\tau})/\kappa$ captures the interest rate sensitivity of a zero-coupon bond with remaining maturity $\tau$, $\sigma_S$ is the stock return volatility, and $\rho_{SB}$ represents the correlation between the equity and bond.

The investor continuously adjusts their portfolio, with the absolute amounts of wealth allocated to the bond and the stock denoted by the vector $\pi_t = (\pi_{B,t}, \pi_{S,t})^\top$. The remaining wealth is held in the risk-free cash account earning the short rate $r_t$.

To preclude arbitrage opportunities within this market, we fix a strictly positive pricing kernel, $\phi_t$, which follows
\begin{equation}
\label{eq:pricing-kernel-sde}
\frac{d\phi_t}{\phi_t} = -r_t \, dt - \Lambda^\top dZ_t, \quad \phi_0 = 1.
\end{equation}
Applying It\^o's lemma yields the explicit exponential representation for the pricing kernel
\begin{equation*}
\phi_t = \exp\left(
- \int_0^t r_s \, ds
- \int_0^t \Lambda^\top dZ_s
- \frac{1}{2} \int_0^t \|\Lambda\|^2 ds
\right).
\end{equation*}

\subsection{Spanned contribution dynamics}
\label{subsec:contribution}

In addition to financial investments, the account receives a stochastic stream of contributions, $C_t$, which follows a geometric Brownian motion with time-varying drift:
\begin{equation}
\label{eq:contribution-sde}
dC_t = C_t \mu_C(t) \, dt + C_t \sigma_3^\top dZ_t,
\end{equation}
where the deterministic drift function $\mu_C(t) = \xi_0(x_0+t) + \xi_1$ captures the expected growth rate with $x_0$ denoting the individual's entry age. The general contribution volatility vector is formulated as
\[
\sigma_3^\top = (\sigma_C \rho_{CB}, \sigma_C \hat{\rho}_{CS}, \sigma_C \sqrt{1-\rho_{CB}^2 - \hat{\rho}_{CS}^2}).
\]
Here, $\sigma_C$ is the contribution volatility, $\rho_{CB}$ is its loading on the bond or interest rate shock, and $\hat{\rho}_{CS}$ is its loading on the adjusted stock-specific shock after eliminating the bond shock; the hat distinguishes this adjusted loading from a raw stock-contribution correlation.
Consequently, the comprehensive $3 \times 3$ volatility matrix governing the entire economic system can be simplified to a lower triangular form
\begin{equation*}
\Sigma =
\begin{pmatrix}
\sigma_1^\top \\
\sigma_2^\top \\
\sigma_3^\top
\end{pmatrix}
=
\begin{pmatrix}
\sigma_r \beta(\tau_B) & 0 & 0 \\
\sigma_S \rho_{SB} & \sigma_S \sqrt{1-\rho_{SB}^2} & 0 \\
\sigma_C \rho_{CB} & \sigma_C \hat{\rho}_{CS} & \sigma_C \sqrt{1-\rho_{CB}^2 - \hat{\rho}_{CS}^2}
\end{pmatrix}.
\end{equation*}
Let $\mathcal{G}:=\{\mathcal{G}_t\}_{0 \le t \le T}$ denote the market-spanned filtration generated by $(Z_{r,t}, Z_{S,t})$.
To ensure market completeness, we follow \citet{munk2010dynamic} and impose the \emph{no unspanned income risk} condition:
\begin{equation}
\label{eq:no-unspanned-risk}
\rho_{CB}^2 + \hat{\rho}_{CS}^2 = 1.
\end{equation}
Economically, this condition eliminates the idiosyncratic shock component in the contribution process. It implies that the contribution risk is perfectly spanned by the tradable financial assets (the bond and the stock). Under \eqref{eq:no-unspanned-risk}, the third component of $\sigma_3$ vanishes, so the contribution process is adapted to $\mathcal{G}$. 

To summarize the risk exposures of the tradable assets compactly, we define the financial market volatility matrix, $\Sigma_F$, as:
\begin{equation}
\label{eq:sigma-F}
\Sigma_F = \begin{pmatrix} \sigma_1^\top \\ \sigma_2^\top \end{pmatrix}
= \begin{pmatrix}
\sigma_r \beta(\tau_B) & 0 & 0 \\
\sigma_S \rho_{SB} & \sigma_S \sqrt{1-\rho_{SB}^2} & 0
\end{pmatrix}.
\end{equation}

\subsection{Mortality}
\label{subsec:mortality}

Let $T_x \ge 0$ denote the remaining lifetime of an individual aged $x$. We assume $T_x$ is a random variable strictly independent of the financial market filtration $\mathcal{F}$.

Define the survival and death probabilities:
\begin{eqnarray*}
	_{t}p_x = \mathbb{P}[T_x>t], \quad 
	_{t}q_x = \mathbb{P}[T_x\leq t] = 1- {_{t}p_x}, \quad 
	\lim \limits_{t\rightarrow \infty} {_{t}p_x} =0, \quad 
 \lim \limits_{t\rightarrow \infty} {_{t}q_x} =1,
\end{eqnarray*}
and the instantaneous force of mortality $\mu_{x+t} = -\dfrac{d}{dt} \ln ({}_t p_x)$, so that
\[
{}_t p_x = \exp\left( - \int_0^t \mu_{x+s} ds \right), \quad
{}_t q_x = \int_0^t {}_s p_x \mu_{x+s} ds.
\]



\subsection{Wealth process}
\label{subsec:wealth-process}

The individual enters the DC pension plan at age $x_0$ at time 0 and retires at time $T$. During the accumulation period, the individual dynamically allocates his or her wealth between the zero-coupon bond and the stock index. In particular, the individual uses the rolling bond strategy for purchasing bonds as outlined in \citet{munk2010dynamic}. Let the vector $\pi_t$ denote the absolute amount of wealth allocated to these financial instruments at time $t$, with the remaining balance held in the risk-free cash account introduced before.

Additionally, the individual can continuously purchase life insurance to hedge against pre-retirement death risk. Let $I_t$ denote the continuous insurance-premium control conditional on survival. If the individual passes away at time $T_x = t$ prior to the retirement date $T$, the designated beneficiary will receive the insurance face value, $I_t/\mu_{x+t}$, alongside the accumulated wealth balance.

Denote the initial wealth by $W_0$. Incorporating the stochastic contribution $C_t$, the DC account balance $W_t$ then evolves according to the following stochastic differential equation
\begin{equation}
\label{eq:wealth-sde-original}
dW_t = ( r_t W_t + \pi_t^\top \Sigma_F \Lambda ) \, dt + \pi_t^\top \Sigma_F dZ_t + (C_t - I_t) \, dt,
\end{equation}
for $0 \le t < T$. By rearranging the terms, we can equivalently express the dynamics of the wealth process as
\begin{equation}
\label{eq:wealth-sde-M}
dW_t =
\left[ (r_t + \mu_{x+t}) W_t + \pi_t^\top \Sigma_F \Lambda \right] dt
+ \pi_t^\top \Sigma_F dZ_t
+ C_t \, dt - \mu_{x+t} M_t \, dt,
\end{equation}
where $M_t$ is the bequest value, equal to the account balance plus the death benefit delivered to the beneficiary if the individual dies at time $T_x = t < T$
\begin{equation}
\label{eq:bequest-M}
M_t = W_{t} + \frac{I_t}{\mu_{x+t}}.
\end{equation}

\section{Optimization problem}
\label{sec:optimization}

\subsection{Problem formulation and admissible strategies}
\label{subsec:problem-formulation}

To ensure the stochastic control problem is mathematically well-posed, we restrict the agent's decisions to a specific class of admissible strategies.

\begin{definition}[Admissible strategies]
\label{def:admissible}
The investment and insurance strategy pair $(\pi, I)$ is admissible, denoted by $(\pi, I) \in \mathcal{A}(0, T)$, if and only if it satisfies the following conditions:
\begin{enumerate}
    \item The portfolio process $\pi_t \in \mathbb{R}^2$, defined for $0 \le t < T$, and the insurance premium process $I_t \in \mathbb{R}$, defined for $0 \le t < T$, are progressively measurable with respect to the filtration $\mathcal{F} = \{\mathcal{F}_t\}_{0 \le t \le T}$;
    \item The wealth process driven by Equation~\eqref{eq:wealth-sde-original} on $0 \le t < T$ admits a unique strong solution $W_t$, with the initial wealth $W_0 = w_0$; the actually realized pension account is the stopped process $W_{t \wedge T_x}$;
    \item The control variables satisfy the square-integrability conditions:
    \begin{equation*}
    \mathbb{E}\left[ \int_0^T \|\pi_t\|^2 dt \right] < \infty, \quad
    \mathbb{E}\left[ \int_0^T I_t^2 dt \right] < \infty;
    \end{equation*}
    \item The resulting wealth process $W_t$ satisfies the uniform square-integrability condition:
    \begin{equation*}
    \mathbb{E}\left[ \sup_{t \in [0,T]} |W_t|^2 \right] < \infty.
    \end{equation*}
\end{enumerate}
\end{definition}

Let $Y_{T\wedge T_x}$ denote the payout of the DC pension plan, then it satisfies
\begin{equation}
Y_{T\wedge T_x} = W_T \mathbbm{1}_{\{T_x > T\}} + M_{T_x} \mathbbm{1}_{\{T_x \leq T\}}.
\end{equation}
The pension member seeks to achieve a predetermined expected return while minimizing financial risk
\begin{align}\label{eq:obj1}
	\min \Var(Y_{T\wedge T_x}), \text{ s.t. } \mathbb{E}[Y_{T\wedge T_x}] = \mathcal{K},
\end{align}
where $ \mathcal{K} $ is the expected required return. It is obvious that $\Var(Y_{T\wedge T_x})=\mathbb{E}[Y^2_{T\wedge T_x}]-(\mathbb{E}[Y_{T\wedge T_x}])^2=\mathbb{E}[Y^2_{T\wedge T_x}]-\mathcal{K}^2$ under the constraint $\mathbb{E}[Y_{T\wedge T_x}] = \mathcal{K}$. Thus, the optimization problem \eqref{eq:obj1} can be simplied to 
\begin{align}\label{eq:obj2}
	\min \mathbb{E}[Y^2_{T\wedge T_x}], \text{ s.t. } \mathbb{E}[Y_{T\wedge T_x}] = \mathcal{K}.
\end{align}

Let the expected target be $\mathcal{K} > 0$, then at $t=0$, the target expectation constraint is given by
\begin{equation*}
\label{eq:dynamic-constraint}
 \mathbb{E}[Y_{T\wedge T_x}]={_T p_x} \mathbb{E}[W_T] + \int_0^T {}_t p_x \mu_{x+t} \mathbb{E}[M_t] \, dt = \mathcal{K}.
\end{equation*}
Furthermore, the second moment can be expressed as 
\begin{equation*}
\mathbb{E}[Y_T^2] = {}_T p_x \mathbb{E}[W_T^2] + \int_0^T {}_t p_x \mu_{x+t} \mathbb{E}[M_t^2] \, dt.
\end{equation*}

Finally, the pension member's optimization problem \eqref{eq:obj2} can be rewritten as
\begin{align}
\label{eq:dynamic-objective}
\min_{(\pi, I) \in \mathcal{A}(0,T)} \quad & {}_T p_x \mathbb{E}[W_T^2] + \int_0^T {}_t p_x \mu_{x+t} \mathbb{E}[M_t^2] \, dt, \\
\label{eq:dynamic-constraint}
\text{subject to} \quad & {}_T p_x \mathbb{E}[W_T] + \int_0^T {}_t p_x \mu_{x+t} \mathbb{E}[M_t] \, dt = \mathcal{K}.
\end{align}

\subsection{The martingale approach and static optimization}
\label{subsec:martingale-approach}

We employ the martingale method to convert the dynamic stochastic control problem into a static optimization problem. The following two propositions establish the necessary and sufficient conditions for the equivalence between the dynamic wealth process and the static budget constraint.

\begin{proposition}
\label{prop:static-budget-necessity}
For any admissible strategy $(\pi, I) \in \mathcal{A}(0,T)$,
\begin{equation}
\label{eq:budget-necessity}
\mathbb{E}\left[
\int_0^T {}_t p_x \phi_t (\mu_{x+t} M_t - C_t) \, dt
+ {}_T p_x \phi_T W_T
\right] = w_0.
\end{equation}
\end{proposition}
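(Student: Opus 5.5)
The plan is to apply It\^o's product rule to the discounted, survival-weighted wealth $X_t := {}_t p_x \phi_t W_t$. Since ${}_t p_x$ is deterministic with $\frac{d}{dt}{}_t p_x = -\mu_{x+t}\,{}_t p_x$, the product of $\phi_t W_t$ with the survival factor produces exactly the $\mu_{x+t}$ terms needed to cancel the extra drift in \eqref{eq:wealth-sde-M}. Concretely, I first compute $d(\phi_t W_t)$ from \eqref{eq:pricing-kernel-sde} and \eqref{eq:wealth-sde-M}:
\begin{equation*}
d(\phi_t W_t) = \phi_t\left[\mu_{x+t} W_t + C_t - \mu_{x+t} M_t\right]dt + \phi_t\left(\pi_t^\top \Sigma_F - W_t \Lambda^\top\right) dZ_t .
\end{equation*}
The $r_t W_t$ terms cancel against the $-r_t$ drift of $\phi_t$. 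The term $\pi_t^\top\Sigma_F\Lambda\,dt$ cancels against the covariation $-\phi_t \pi_t^\top \Sigma_F \Lambda\,dt$. Multiplying by ${}_t p_x$ then gives
\begin{equation*}
dX_t = {}_t p_x \phi_t\left(C_t - \mu_{x+t} M_t\right)dt + {}_t p_x \phi_t\left(\pi_t^\top \Sigma_F - W_t\Lambda^\top\right)dZ_t .
\end{equation*}

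Next I integrate from $0$ to $T$. Using ${}_0p_x=1$, $\phi_0=1$ and $W_0=w_0$, this yields
\begin{equation*}
{}_T p_x \phi_T W_T + \int_0^T {}_t p_x \phi_t(\mu_{x+t}M_t - C_t)\,dt = w_0 + N_T ,
\end{equation*}
where $N_t$ denotes the stochastic integral. Since $\mu_{x+t}M_t = \mu_{x+t}W_t + I_t$ by \eqref{eq:bequest-M}, the left side is well defined. The claim \eqref{eq:budget-necessity} then follows by taking expectations, provided $\mathbb{E}[N_T]=0$ and the $dt$-integral has finite expectation, so that Fubini applies.

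The main obstacle is showing that $N$ is a true martingale, not merely a local one. I would first try to prove $\mathbb{E}\bigl[(\int_0^T |{}_t p_x\phi_t|^2 \|\Sigma_F^\top\pi_t - W_t\Lambda\|^2 dt)^{1/2}\bigr]<\infty$ and then invoke Burkholder--Davis--Gundy. By Cauchy--Schwarz, this reduces to $\mathbb{E}[\sup_t \phi_t^2]<\infty$ together with the admissibility bounds $\mathbb{E}\int_0^T\|\pi_t\|^2dt<\infty$ and $\mathbb{E}[\sup_t W_t^2]<\infty$ from Definition~\ref{def:admissible}. For the kernel, $\phi_t$ is a lognormal-type functional of the Gaussian Vasicek integral $\int_0^t r_s ds$ and of $Z$. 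Hence $\sup_t\phi_t^p$ has all moments by Doob's inequality applied to the exponential martingale part, combined with Gaussian moment bounds for $\sup_t|\int_0^t r_sds|\le T\sup_t|r_t|$.

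The same bounds handle the $dt$-integral. Since $I$ is square integrable and $\mu$ is bounded on $[0,T]$, we get $\mathbb{E}\int_0^T \phi_t|\mu_{x+t}W_t+I_t|dt<\infty$. Moreover $C_t$ is a geometric Brownian motion with bounded drift, so all of its moments are finite. Together these give the exchange of expectation and integral, and therefore \eqref{eq:budget-necessity}.
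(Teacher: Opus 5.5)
Your proof is correct and follows essentially the same route as the paper: you apply It\^o's product rule to the survival-weighted deflated wealth ${}_t p_x \phi_t W_t$, then integrate over $[0,T]$ and take expectations. The paper first computes $d\Xi_t$ for $\Xi_t={}_tp_x\phi_t$ while you first compute $d(\phi_tW_t)$, which makes no difference. Your BDG argument, using $\mathbb{E}[\sup_t\phi_t^2]<\infty$ and the admissibility bounds, supplies the true-martingale justification that the paper only asserts by appealing to admissibility and ``the finite-moment property of the pricing kernel.''
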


\begin{proof}
    See Appendix~\ref{app:proof-prop-static-necessity}.
\end{proof}

\begin{proposition}
\label{prop:static-budget-sufficiency}
Assume additionally that $\Sigma_F \, \Sigma_F^\top \succ 0$, where $\succ$ means positive definite. For any progressively measurable process $M = \{M_t : 0 \le t \le T\}$ adapted to the market-spanned filtration $\mathcal{G}$ and satisfying $\mathbb{E}\left[ \int_0^T |M_t|^2 dt \right] < \infty$, and any $\mathcal{G}_T$-measurable random variable $\eta$ with $\mathbb{E}[\phi_T^2 \eta^2] < \infty$ such that
\begin{equation}
\label{eq:budget-sufficiency}
\mathbb{E}\left[
\int_0^T {}_t p_x \phi_t (\mu_{x+t} M_t - C_t) \, dt
+ {}_T p_x \phi_T \eta
\right] = w_0,
\end{equation}
there exists a portfolio process $\pi = \{\pi_t : 0 \le t < T\}$ and an associated wealth process $W$ with $W_T = \eta$ such that, upon defining $I_t = \mu_{x+t}(M_t - W_t)$, the pair $(\pi, I)$ belongs to $\mathcal{A}(0,T)$.
\end{proposition}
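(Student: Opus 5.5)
We construct the wealth process directly as the conditional expectation of discounted future cash flows, and then recover $\pi$ from martingale representation on $\mathcal{G}$. The key reformulation is the one already hidden in \eqref{eq:wealth-sde-M}: substituting $I_t=\mu_{x+t}(M_t-W_t)$ turns \eqref{eq:wealth-sde-original} into a linear SDE with effective discount rate $r_t+\mu_{x+t}$, inflow $C_t$ and outflow $\mu_{x+t}M_t$. Its natural deflator is ${}_tp_x\phi_t$, since $d({}_tp_x\phi_t)={}_tp_x\phi_t\bigl(-(r_t+\mu_{x+t})dt-\Lambda^\top dZ_t\bigr)$. We therefore define the candidate wealth
\begin{equation*}
W_t=\frac{1}{{}_tp_x\phi_t}\,\mathbb{E}\left[\int_t^T {}_sp_x\phi_s(\mu_{x+s}M_s-C_s)\,ds+{}_Tp_x\phi_T\eta\,\Big|\,\mathcal{G}_t\right],\qquad 0\le t\le T.
\end{equation*}
By \eqref{eq:budget-sufficiency}, $W_0=w_0$, and clearly $W_T=\eta$.

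Next we set $N_t={}_tp_x\phi_tW_t+\int_0^t {}_sp_x\phi_s(\mu_{x+s}M_s-C_s)\,ds$, which is a square-integrable $\mathcal{G}$-martingale. Square integrability follows from $\mathbb{E}[\phi_T^2\eta^2]<\infty$, $\mathbb{E}\int_0^T M_t^2dt<\infty$, the boundedness of ${}_tp_x$ and $\mu$ on $[0,T]$, and finite moments of $\phi_t$ and $C_t$ (Vasicek rate plus GBM contributions; we use Cauchy--Schwarz, with the fact that $\sup_t\phi_t$ has all moments handling the $\phi M$ term). By the martingale representation theorem on the Brownian filtration $\mathcal{G}$ generated by $(Z_r,Z_S)$, we have $N_t=w_0+\int_0^t\psi_s^\top dZ_s$ for some $\mathcal{G}$-progressive $\psi$ whose third component is zero. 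We then apply Itô's product rule to $W_t=(N_t-\int_0^t\cdots)/({}_tp_x\phi_t)$. This gives
\begin{equation*}
dW_t=\bigl[(r_t+\mu_{x+t})W_t+C_t-\mu_{x+t}M_t+\theta_t^\top\Lambda\bigr]dt+\theta_t^\top dZ_t,
\qquad
\theta_t=\frac{\psi_t}{{}_tp_x\phi_t}+W_t\Lambda .
\end{equation*}
Since both $\psi$ and $\Lambda$ have zero third component, $\theta_t$ lies in the row space of $\Sigma_F$, which is spanned by its two rows because $\Sigma_F\Sigma_F^\top\succ0$. We set $\pi_t=(\Sigma_F\Sigma_F^\top)^{-1}\Sigma_F\theta_t$, so that $\Sigma_F^\top\pi_t=\theta_t$. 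Then $W$ satisfies \eqref{eq:wealth-sde-M}, and with $I_t=\mu_{x+t}(M_t-W_t)$ this is exactly \eqref{eq:wealth-sde-original}. Because the SDE is linear with locally bounded coefficients given $(\pi,I)$, $W$ is its unique strong solution.

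It remains to check admissibility conditions 3 and 4, which is the main obstacle. For $\mathbb{E}[\sup_t W_t^2]<\infty$, we would write $W_t=\mathbb{E}[\,\cdot\,|\mathcal{G}_t]/\phi_t$ and bound it by Hölder and Doob's inequality. This requires moments of $\sup_t\phi_t^{-1}$, which exist since $\phi^{-1}$ is a lognormal-type functional of a Gaussian process. It also requires slightly more than second moments of the terminal payoff. We would try first to obtain the needed higher integrability from the Gaussian structure, or else to localize and argue in the spirit of the paper's setting, where the optimal candidates are affine in $\phi$ and hence have all moments. The bound $\mathbb{E}\int\|\pi\|^2<\infty$ then follows from the Burkholder--Davis--Gundy inequality applied to $\int\theta^\top dZ$ together with the bound on $W$. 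Finally, $\mathbb{E}\int I^2<\infty$ follows from the bounds on $M$ and $\sup W$ and the boundedness of $\mu$ on $[0,T]$.
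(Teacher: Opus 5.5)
Your construction is the paper's proof: the deflator $\Xi_t={}_tp_x\phi_t$, the $\mathcal{G}$-martingale $N$, martingale representation in the filtration of $(Z_r,Z_S)$, $\pi_t=(\Sigma_F\Sigma_F^\top)^{-1}\Sigma_F(\psi_t/\Xi_t+\Lambda W_t)$ justified by the vanishing third component, and $I_t=\mu_{x+t}(M_t-W_t)$. Your It\^o computation and the row-space argument are correct. The only difference is that you leave admissibility conditions 3--4 open. The paper dismisses them in one sentence (``the square-integrability assumptions and the martingale representation ensure\ldots'').

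That step is a genuine gap, and neither of your suggested repairs can close it, because under the stated hypotheses the conclusion is false. Since $W_T=\eta$, condition 4 forces $\mathbb{E}[\eta^2]<\infty$, and this does not follow from $\mathbb{E}[\phi_T^2\eta^2]<\infty$. The variable $\phi_T$ is $\mathcal{G}_T$-measurable and lognormal (nondegenerate as soon as $\sigma_r>0$). Let $f$ be its density and set $k(y)=y^{-1}f(y)^{-1/2}\mathbbm{1}_{\{0<y<1\}}$. Then $\mathbb{E}[\phi_T^2k(\phi_T)^2]=1$ and $\mathbb{E}[\phi_Tk(\phi_T)]=\int_0^1f(y)^{1/2}\,dy<\infty$, while $\mathbb{E}[k(\phi_T)^2]=\int_0^1y^{-2}\,dy=\infty$. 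Take $M\equiv0$ and $\eta=c+k(\phi_T)$, with the constant $c$ chosen so that \eqref{eq:budget-sufficiency} holds. All hypotheses are then satisfied, yet every wealth process with $W_T=\eta$ has $\mathbb{E}[\sup_tW_t^2]=\infty$.

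The Gaussian structure only supplies moments of $\phi^{\pm1}$, not of $\eta$, and localization cannot produce a global bound. What is missing is a hypothesis, for example $\mathbb{E}|\eta|^p<\infty$ and $\mathbb{E}\int_0^T|M_t|^p\,dt<\infty$ for some $p>2$. With it your plan goes through:
\begin{itemize}
\item H\"older against the all-order moments of $\sup_t\phi_t^{\pm1}$, together with Doob, bounds $\mathbb{E}[\sup_t|W_t|^{2+\delta}]$ for small $\delta>0$.
\item The bound on $I$ is then immediate.
\item The BDG step for $\pi$ works once you absorb the drift term $\theta^\top\Lambda$ (on short subintervals, or under the risk-neutral measure) and handle the $r_tW_t$ term with the extra integrability.
\end{itemize}

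Similarly, your Cauchy--Schwarz claim that $N$ is square integrable when only $\mathbb{E}\int_0^TM_t^2\,dt<\infty$ is unjustified; it needs $M$ in $L^{2+\varepsilon}$. This is harmless, though: $N_T\in L^1$ does hold and suffices for the representation, which gives $\int_0^T\|\psi_t\|^2dt<\infty$ a.s. None of this damages the paper's main results, because the verification theorem simply assumes admissibility of the candidate, which is affine in $\phi$ and has all moments.
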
 
\begin{proof}
    See Appendix~\ref{app:proof-prop-static-sufficiency}.
\end{proof}

\begin{lemma}
\label{lem:conditional-expectations}
For any time $t \in [0, T]$ and $u \ge t$, let $\tau = u - t$. The conditional expectations under measure $\mathbb{P}$ are given by:
\begin{align}
\label{eq:Et-phiu}
\mathbb{E}_t[\phi_u] &= \phi_t \exp\left[f_1(\tau) + f_2(\tau) r_t\right],\\
\mathbb{E}_t[\phi_u^2] &= \phi_t^2 \exp\left[f_3(\tau) + f_4(\tau) r_t\right], \notag \\
\mathbb{E}_t[\phi_u C_u] &= \phi_t C_t \exp\left[f_5(t, \tau) + f_6(\tau) r_t\right], \notag
\end{align}
where the deterministic functions $f_1(\tau), \ldots, f_4(\tau), f_6(\tau)$, together with $f_5(t,\tau)$, satisfy the following closed-form solutions with initial conditions $f_i(0) = 0$ for $i = 1,2,3,4,6$ and $f_5(t,0) = 0$:
\begin{align*}
f_1(\tau) &=
(\kappa\bar r+\bar\sigma_r^\top\Lambda)\left(\dfrac{1-e^{-\kappa\tau}}{\kappa^2}-\dfrac{\tau}{\kappa}\right)
+ \dfrac{\bar\sigma_r^\top\bar\sigma_r}{2}\left(\dfrac{\tau}{\kappa^2}-\dfrac{3}{2\kappa^3}
+ \dfrac{2e^{-\kappa\tau}}{\kappa^3}
- \dfrac{e^{-2\kappa\tau}}{2\kappa^3}\right),\\[0.5em]
f_2(\tau) &= \dfrac{1}{\kappa} (e^{-\kappa\tau} - 1),\\[0.5em]
f_3(\tau) &= \begin{aligned}[t]
&\left( \Lambda^\top\Lambda - 2\bar r - \dfrac{4\bar\sigma_r^\top\Lambda}{\kappa}
+ \dfrac{2\bar\sigma_r^\top\bar\sigma_r}{\kappa^2} \right)\tau\\
&\quad + \dfrac{2}{\kappa}\left( \bar r + \dfrac{2\bar\sigma_r^\top\Lambda}{\kappa}
- \dfrac{2\bar\sigma_r^\top\bar\sigma_r}{\kappa^2} \right)(1-e^{-\kappa\tau})
- \dfrac{\bar\sigma_r^\top\bar\sigma_r}{\kappa^3}(e^{-2\kappa\tau}-1),
\end{aligned}\\[0.5em]
f_4(\tau) &= \dfrac{2}{\kappa}(e^{-\kappa\tau}-1),\\[0.5em]
f_5(t,\tau) &=
\big(\xi_1-\sigma_3^\top\Lambda\big)\tau
- \dfrac{\Lambda^\top\bar\sigma_r-\sigma_3^\top\bar\sigma_r}{\kappa}\tau
- \bar r\tau
+ \dfrac{\bar\sigma_r^\top\bar\sigma_r}{2\kappa^2}\tau
- \dfrac{3\,\bar\sigma_r^\top\bar\sigma_r}{4\kappa^3}\\
&\quad
+ \dfrac{\bar\sigma_r^\top\bar\sigma_r}{\kappa^3}e^{-\kappa\tau}
- \dfrac{\bar\sigma_r^\top\bar\sigma_r}{4\kappa^3}e^{-2\kappa\tau}
+ \dfrac{\Lambda^\top\bar\sigma_r-\sigma_3^\top\bar\sigma_r}{\kappa^2}(1-e^{-\kappa\tau})\\
&\quad
+ \dfrac{\bar r}{\kappa}(1-e^{-\kappa\tau})
+ \xi_0\!\left[(x_0+t)\tau+\dfrac{\tau^2}{2}\right],\\[0.5em]
f_6(\tau) &= \dfrac{1}{\kappa}(e^{-\kappa\tau}-1).
\end{align*}
\end{lemma}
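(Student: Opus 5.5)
Since $r$ is a Gaussian (Vasicek) process and $\phi$, $C$ are exponentials of affine functionals of $r$ and $Z$, each conditional expectation is the expectation of the exponential of a Gaussian random variable. I plan to prove the three formulas by the affine (exponential-affine ansatz plus Feynman--Kac) method, and then cross-check against the direct Gaussian computation. Concretely, for $\mathbb{E}_t[\phi_u]$ I write $\phi_u/\phi_t=\exp\bigl(-\int_t^u r_s\,ds-\int_t^u\Lambda^\top dZ_s-\tfrac12\|\Lambda\|^2\tau\bigr)$. By the Markov property of $r$, $\mathbb{E}_t[\phi_u]/\phi_t=g(t,r_t)$ for a deterministic function $g$. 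Equivalently, $\phi_t\,g(t,r_t)$ must be a $\mathbb{P}$-martingale on $[t,u]$ with terminal value $\phi_u$. I posit $g=\exp(f_1(\tau)+f_2(\tau)r)$ with $\tau=u-t$ and apply It\^o's lemma to $\phi_t g(t,r_t)$, using \eqref{eq:short-rate} and \eqref{eq:pricing-kernel-sde}. The cross-variation term is $d\langle\phi,r\rangle=\phi\,\Lambda^\top\bar\sigma_r\,dt$, with a sign coming from the minus signs in both SDEs. Setting the drift to zero and matching coefficients of $r$ and of the constant yields the Riccati-type system
\[
f_2'=-1-\kappa f_2,\qquad f_1'=(\kappa\bar r+\bar\sigma_r^\top\Lambda)f_2+\tfrac12\bar\sigma_r^\top\bar\sigma_r f_2^2 ,
\]
with $f_1(0)=f_2(0)=0$. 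The first equation gives $f_2=(e^{-\kappa\tau}-1)/\kappa$. Integrating $f_2$ and $f_2^2$ then reproduces the stated $f_1$.

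For $\mathbb{E}_t[\phi_u^2]$ I follow the same route with $\phi^2$. Its dynamics are $d\phi^2/\phi^2=(-2r+\|\Lambda\|^2)dt-2\Lambda^\top dZ$. The ansatz $\exp(f_3+f_4r)$ gives $f_4'=-2-\kappa f_4$, so $f_4=2(e^{-\kappa\tau}-1)/\kappa$, together with a linear ODE for $f_3$. That ODE involves $\Lambda^\top\Lambda$, the cross term $2\bar\sigma_r^\top\Lambda f_4$ and $\tfrac12\bar\sigma_r^\top\bar\sigma_r f_4^2$; integrating it gives $f_3$. For $\mathbb{E}_t[\phi_uC_u]$ I use the product $\phi C$. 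By \eqref{eq:contribution-sde} it satisfies $d(\phi C)/(\phi C)=(-r+\mu_C(s)-\sigma_3^\top\Lambda)ds+(\sigma_3-\Lambda)^\top dZ$. The ansatz $\exp(f_5(t,\tau)+f_6(\tau)r)$ yields $f_6'=-1-\kappa f_6$, hence $f_6=f_2$. The constant-term ODE picks up $\mu_C(t+s)=\xi_0(x_0+t+s)+\xi_1$ and the covariation $-(\sigma_3-\Lambda)^\top\bar\sigma_r f_6$. Because $\mu_C$ depends on calendar time, $f_5$ depends on $t$ as well as $\tau$. Integrating over $s\in[0,\tau]$ produces $\xi_0[(x_0+t)\tau+\tau^2/2]+\xi_1\tau$ plus the same interest-rate terms as in $f_1$, with $\Lambda$ replaced by $\Lambda-\sigma_3$.

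To make the verification rigorous rather than heuristic, I note that the candidate processes are exponentials of Gaussian processes with bounded deterministic coefficients. Their local-martingale parts are therefore true martingales, for example by Novikov's condition applied piecewise or via the Gaussian moment bounds. Alternatively, I can bypass the ODEs altogether: solve $r_s=r_te^{-\kappa(s-t)}+\bar r(1-e^{-\kappa(s-t)})-\int_t^s e^{-\kappa(s-v)}\bar\sigma_r^\top dZ_v$, so that $\int_t^u r_s\,ds$ is Gaussian with mean $\bar r\tau-f_2(\tau)(r_t-\bar r)$ and stochastic part $\int_t^u f_2(u-v)\,\bar\sigma_r^\top dZ_v$. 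Then I compute $\mathbb{E}[e^X]=e^{\mathbb{E}X+\frac12\Var X}$ for the relevant Gaussian $X$.

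The main obstacle is purely the bookkeeping in $f_3$ and $f_5$. This means tracking the signs of the cross-covariations, which arise because both $r$ and $\phi$ load negatively on $Z$, and collecting the constants $-\tfrac{3}{2\kappa^3}$ and $-\tfrac{3}{4\kappa^3}$ that come from $\int_0^\tau f^2$. I will check each closed form by differentiating it and confirming that it satisfies the ODE and vanishes at $\tau=0$.
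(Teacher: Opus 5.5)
Your proposal is correct and follows essentially the same route as the paper. Both use an exponential-affine ansatz for $\mathbb{E}_t[\phi_u]/\phi_t$, $\mathbb{E}_t[\phi_u^2]/\phi_t^2$ and $\mathbb{E}_t[\phi_uC_u]/(\phi_tC_t)$, apply It\^o's lemma, and solve the vanishing-drift ODEs for $(f_1,f_2)$, $(f_3,f_4)$, $(f_5,f_6)$, which integrate to the stated closed forms. You are slightly more careful than the paper on two points it glosses over:
\begin{itemize}
\item You integrate $\mu_C(t+s)=\xi_0(x_0+t+s)+\xi_1$ over $s\in[0,\tau]$, which is what produces the $\tau^2/2$ term; the paper's ODE for $\partial_\tau f_5$ shows only $\xi_0(x+t)$.
\item You check that the candidate processes are true martingales. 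This is immediate, since once the drift vanishes each is a stochastic exponential with a deterministic integrand.
\end{itemize}
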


\begin{proof}
    See Appendix~\ref{app:proof-lem-conditional-expectations}.
\end{proof}

Based on Lemma~\ref{lem:conditional-expectations}, we define the auxiliary aggregated functions $A(t, r_t)$, $B(t, r_t)$, and $H(t, r_t)$ to compactly express the pension member's conditional expectations
\begin{equation}
\label{eq:A-B-H-def}
\begin{aligned}
A(t, r_t) &=
\int_t^T {}_{u-t}p_{x+t} \mu_{x+u}
e^{f_1(u-t) + f_2(u-t)r_t} \, du
+ {}_{T-t}p_{x+t}
e^{f_1(T-t) + f_2(T-t)r_t}, \\
B(t, r_t) &=
\int_t^T {}_{u-t}p_{x+t} \mu_{x+u}
e^{f_3(u-t) + f_4(u-t)r_t} \, du
+ {}_{T-t}p_{x+t}
e^{f_3(T-t) + f_4(T-t)r_t}, \\
H(t, r_t) &=
\int_t^T {}_{u-t}p_{x+t}
e^{f_5(t, u-t) + f_6(u-t)r_t} \, du.
\end{aligned}
\end{equation}

\begin{assumption}[Non-degeneracy]
\label{assump:nondegeneracy}
Assume that
\[
\Sigma_F \, \Sigma_F^\top \succ 0,
\]
and that $A(0, r_0)$, $B(0, r_0)$, and $H(0, r_0)$ are finite with
\[
B(0, r_0) - A(0, r_0)^2 > 0.
\]
\end{assumption}

Propositions~\ref{prop:static-budget-necessity} and~\ref{prop:static-budget-sufficiency} allow for the exclusion of the dynamic control processes. The dynamic problem is thereby reduced to minimizing the objective \eqref{eq:dynamic-objective} over $W_T$ and $M_t$, subject to the expectation constraint \eqref{eq:dynamic-constraint} and the static budget constraint \eqref{eq:budget-necessity}. By constructing the Lagrangian, we obtain the optimal static solutions.

\begin{proposition}
\label{prop:static-solution}
Under Assumption~\ref{assump:nondegeneracy}, for the static optimization problem, the optimal terminal wealth $W_T^\ast$ and the optimal bequest value $M_t^*$ are given by:
\begin{align}
\label{eq:WT-star}
W_T^* &= \frac{\lambda_1^*}{2} + \frac{\lambda_2^*}{2} \phi_T,\\
\label{eq:Mt-star}
M_t^* &= \frac{\lambda_1^*}{2} + \frac{\lambda_2^*}{2} \phi_t,
\end{align}
where the optimal Lagrange multipliers $\lambda_1^*$ and $\lambda_2^*$ are deterministic constants determined by:
\begin{align}
\label{eq:lambda1-star}
\lambda_1^* &=
2 \cdot \frac{
\mathcal{K} \cdot B(0, r_0)
- \big(w_0 + C_0 H(0, r_0)\big) \cdot A(0, r_0)
}{
B(0, r_0) - A(0, r_0)^2
},\\
\label{eq:lambda2-star}
\lambda_2^* &=
2 \cdot \frac{
\big(w_0 + C_0 H(0, r_0)\big)
- \mathcal{K} \cdot A(0, r_0)
}{
B(0, r_0) - A(0, r_0)^2
}.
\end{align}
\end{proposition}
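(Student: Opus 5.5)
We treat the static problem as a convex quadratic program in the pair $(W_T, M)$, where $W_T$ ranges over $\mathcal{G}_T$-measurable square-integrable random variables and $M$ over square-integrable $\mathcal{G}$-progressive processes. By Propositions~\ref{prop:static-budget-necessity} and~\ref{prop:static-budget-sufficiency}, this is equivalent to the dynamic problem: every admissible strategy yields a feasible pair, and every feasible pair is attained by some admissible strategy. Both constraints are linear. The first is the expectation constraint \eqref{eq:dynamic-constraint}. The second is the budget constraint \eqref{eq:budget-necessity}, rewritten as
\[
\mathbb{E}\Big[{}_T p_x \phi_T W_T + \int_0^T {}_t p_x \mu_{x+t}\phi_t M_t\,dt\Big] = w_0 + \mathbb{E}\Big[\int_0^T {}_t p_x \phi_t C_t\,dt\Big].
\]
By Lemma~\ref{lem:conditional-expectations} at $t=0$ and Fubini, the right-hand side equals $w_0 + C_0 H(0,r_0)$. (Note ${}_{u}p_{x}$ here means ${}_u p_{x_0}$, with $x=x_0$ at time $0$.)

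We form the Lagrangian with multipliers $\lambda_1$ for the mean constraint and $\lambda_2$ for the budget constraint:
\[
\mathcal{L} = \mathbb{E}\Big[{}_T p_x\big(W_T^2 - \lambda_1 W_T - \lambda_2 \phi_T W_T\big) + \int_0^T {}_t p_x\mu_{x+t}\big(M_t^2 - \lambda_1 M_t - \lambda_2\phi_t M_t\big)dt\Big],
\]
plus constants. Since the weights ${}_T p_x$ and ${}_t p_x\mu_{x+t}$ are nonnegative, we minimize pointwise in $\omega$ and in $t$. Each integrand is a strictly convex quadratic, which gives directly $W_T = \tfrac{\lambda_1}{2}+\tfrac{\lambda_2}{2}\phi_T$ and $M_t = \tfrac{\lambda_1}{2}+\tfrac{\lambda_2}{2}\phi_t$. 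These candidates are admissible: they are $\mathcal{G}$-measurable because $\phi$ depends only on $(Z_r,Z_S)$, as the third component of $\Lambda$ vanishes. They are square integrable, and $\mathbb{E}[\phi_T^2 W_T^2]<\infty$ holds, by Gaussian moments of the Vasicek rate.

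To identify the multipliers, we substitute the candidates into both constraints and use $\mathbb{E}[\phi_u]=e^{f_1(u)+f_2(u)r_0}$ and $\mathbb{E}[\phi_u^2]=e^{f_3(u)+f_4(u)r_0}$ from Lemma~\ref{lem:conditional-expectations}. We also use the identity ${}_T p_x + \int_0^T {}_t p_x \mu_{x+t}\,dt = 1$. This yields the $2\times 2$ linear system
\[
\tfrac{\lambda_1}{2} + \tfrac{\lambda_2}{2}A = \mathcal{K}, \qquad \tfrac{\lambda_1}{2}A + \tfrac{\lambda_2}{2}B = w_0 + C_0H,
\]
with $A,B,H$ evaluated at $(0,r_0)$. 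Its determinant is $B-A^2>0$ by Assumption~\ref{assump:nondegeneracy}. Cramer's rule then gives exactly \eqref{eq:lambda1-star}–\eqref{eq:lambda2-star}.

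The main obstacle is rigor rather than algebra: we must show that the pointwise minimizer is a global minimizer over the whole feasible set, not merely a stationary point. We plan a direct completion-of-squares argument. Take any feasible pair $(W_T,M)$ and write $W_T = W_T^* + \Delta$, $M = M^*+\delta$. Both constraints are linear, so $(\Delta,\delta)$ satisfies the homogeneous versions of both constraints. Consequently, the cross term in the objective,
\[
2\,\mathbb{E}\Big[{}_T p_x W_T^*\Delta + \int_0^T {}_t p_x\mu_{x+t} M_t^*\delta_t\,dt\Big],
\]
is a linear combination of those homogeneous constraints with coefficients $\lambda_1^*/2$ and $\lambda_2^*/2$, and therefore vanishes. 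The objective is then the optimum plus a nonnegative quadratic term, so $(W_T^*,M^*)$ is optimal, and strictly so up to null sets. Finiteness of every expectation along the way is guaranteed by Assumption~\ref{assump:nondegeneracy} together with the square-integrability conditions.
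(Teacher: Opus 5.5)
Your proposal is correct and follows the paper's route: you form the Lagrangian, take pointwise first-order conditions, substitute into both constraints via Lemma~\ref{lem:conditional-expectations} and ${}_T p_x + \int_0^T {}_t p_x\mu_{x+t}\,dt = 1$, and apply Cramer's rule; your completion-of-squares step is the same argument the paper defers to the Verification Theorem~\ref{thm:verification} (Appendix~\ref{app:proof-thm-verification}). One small correction: admissible strategies are only $\mathcal{F}$-progressive, so the pairs they induce need not be $\mathcal{G}$-measurable as your first paragraph asserts, but your completion-of-squares argument never uses $\mathcal{G}$-measurability of $(\Delta,\delta)$, so it already covers that larger feasible set.
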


\begin{proof}
    See Appendix~\ref{app:proof-prop-static-solution}.
\end{proof}

\subsection{Optimal dynamic strategies and the efficient frontier}
\label{subsec:optimal-strategies}

By applying It\^o's lemma to the optimal wealth process and using the auxiliary functions defined above, we obtain the explicit solutions for the dynamic controls.

\begin{theorem}
\label{thm:optimal-strategies}
For the continuous-time optimization problem, the optimal wealth process $W_t^*$ and the optimal insurance strategy $I_t^*$ for $0 \le t < T$, together with the optimal investment strategy $\pi_t^*$ for $0 \le t < T$ whenever $\Sigma_F\Sigma_F^\top$ is invertible, are explicitly given by
\begin{align}
\label{eq:Wt-star-dynamic}
W_t^* &=
\frac{\lambda_1^*}{2} A(t, r_t)
+ \frac{\lambda_2^*}{2} \phi_t B(t, r_t)
- C_t H(t, r_t),\\
\label{eq:pi-star}
\pi_t^* &=
(\Sigma_F \, \Sigma_F^\top)^{-1} \Sigma_F
\Bigg[
- \bar{\sigma}_r \left(
\frac{\lambda_1^*}{2} \frac{\partial A}{\partial r}
+ \frac{\lambda_2^*}{2} \phi_t \frac{\partial B}{\partial r}
- C_t \frac{\partial H}{\partial r}
\right)
- \frac{\lambda_2^*}{2} \phi_t B(t, r_t) \Lambda
- C_t H(t, r_t) \sigma_3
\Bigg],\\
\label{eq:I-star}
I_t^* &=
\mu_{x+t} \left[
\frac{\lambda_1^*}{2} \bigl(1 - A(t, r_t)\bigr)
+ \frac{\lambda_2^*}{2} \phi_t \bigl(1 - B(t, r_t)\bigr)
+ C_t H(t, r_t)
\right],
\end{align}
where the partial derivatives with respect to the short rate $r_t$ are:
\begin{equation*}
\begin{aligned}
\frac{\partial A}{\partial r} &=
\int_t^T {}_{u-t}p_{x+t} \mu_{x+u} f_2(u-t)
e^{f_1(u-t) + f_2(u-t)r_t} \, du
+ {}_{T-t}p_{x+t} f_2(T-t)
e^{f_1(T-t) + f_2(T-t)r_t}, \\
\frac{\partial B}{\partial r} &=
\int_t^T {}_{u-t}p_{x+t} \mu_{x+u} f_4(u-t)
e^{f_3(u-t) + f_4(u-t)r_t} \, du
+ {}_{T-t}p_{x+t} f_4(T-t)
e^{f_3(T-t) + f_4(T-t)r_t}, \\
\frac{\partial H}{\partial r} &=
\int_t^T {}_{u-t}p_{x+t} f_6(u-t)
e^{f_5(t, u-t) + f_6(u-t)r_t} \, du.
\end{aligned}
\end{equation*}
\end{theorem}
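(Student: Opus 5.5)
Throughout, the approach is to represent $W_t^*$ as a conditional expectation via the budget identity, evaluate it with Lemma~\ref{lem:conditional-expectations}, and read off $\pi^*$ and $I^*$ by matching It\^o differentials.

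\textbf{Step 1: conditional budget identity.} Repeating the argument of Proposition~\ref{prop:static-budget-necessity} on $[t,T]$ instead of $[0,T]$ gives
\[
\phi_t W_t^* = \mathbb{E}_t\Big[\int_t^T {}_{u-t}p_{x+t}\,\phi_u(\mu_{x+u}M_u^*-C_u)\,du + {}_{T-t}p_{x+t}\,\phi_T W_T^*\Big].
\]
Concretely, $\phi_t\,{}_tp_x W_t+\int_0^t {}_sp_x\phi_s(\mu_{x+s}M_s-C_s)\,ds$ is a martingale, and we divide by ${}_tp_x$. The martingale property uses the admissibility integrability together with $\mathbb{E}[\sup_t\phi_t^2]<\infty$ for Gaussian $r$.

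\textbf{Step 2: evaluate the expectation.} Substitute \eqref{eq:WT-star}--\eqref{eq:Mt-star} and use Tonelli/Fubini to move $\mathbb{E}_t$ inside the $du$ integral. The formulas of Lemma~\ref{lem:conditional-expectations} then apply termwise:
\begin{itemize}
\item the $\lambda_1^*$ terms give $\phi_t A$;
\item the $\lambda_2^*$ terms give $\phi_t^2 B$;
\item the contribution terms give $\phi_t C_t H$.
\end{itemize}
Dividing by $\phi_t$ yields \eqref{eq:Wt-star-dynamic}. The survival probabilities are deterministic because $T_x$ is independent of $\mathcal{F}$, so they factor out of the expectations.

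\textbf{Step 3: insurance.} Using $I_t=\mu_{x+t}(M_t-W_t)$ from \eqref{eq:bequest-M}, insert $M_t^*$ from \eqref{eq:Mt-star} and $W_t^*$ from Step 2. This gives \eqref{eq:I-star} directly.

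\textbf{Step 4: investment (the main work).} Write $W_t^*=G(t,r_t,\phi_t,C_t)$ and apply It\^o's lemma. Only the $dZ_t$ part is needed, using the three state dynamics:
\begin{itemize}
\item $dr_t$ has diffusion $-\bar\sigma_r^\top dZ_t$;
\item $d\phi_t$ has diffusion $-\phi_t\Lambda^\top dZ_t$;
\item $dC_t$ has diffusion $C_t\sigma_3^\top dZ_t$.
\end{itemize}
The diffusion coefficient of $W_t^*$ is therefore $v_t^\top dZ_t$ with
\[
v_t=-\bar\sigma_r\Big(\tfrac{\lambda_1^*}{2}\partial_rA+\tfrac{\lambda_2^*}{2}\phi_t\partial_rB-C_t\partial_rH\Big)-\tfrac{\lambda_2^*}{2}\phi_tB\,\Lambda-C_tH\,\sigma_3 .
\]
By \eqref{eq:wealth-sde-original}, the diffusion of the wealth process is $\Sigma_F^\top\pi_t$, so we must solve $\Sigma_F^\top\pi_t=v_t$. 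Under \eqref{eq:no-unspanned-risk}, $\bar\sigma_r$, $\Lambda$ and $\sigma_3$ all have zero third component, so $v_t$ lies in the row space of $\Sigma_F$. Because $\Sigma_F\Sigma_F^\top$ is invertible, the unique solution is $\pi_t^*=(\Sigma_F\Sigma_F^\top)^{-1}\Sigma_F v_t$, which is \eqref{eq:pi-star}.

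The derivatives $\partial_rA$, $\partial_rB$, $\partial_rH$ are obtained by differentiating under the integral sign. This is justified by local uniform bounds on the exponential integrands, since $f_2$, $f_4$, $f_6$ are bounded on $[0,T]$.

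The drift need not be checked separately. Both $W^*$ from Step 2 and the wealth generated by $(\pi^*,I^*)$ satisfy the same budget martingale relation, so uniqueness of the martingale representation in Proposition~\ref{prop:static-budget-sufficiency} identifies them.

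The main obstacle is the rigorous justification in Steps 1 and 4. This covers the interchange of $\mathbb{E}_t$ with the time integral and the true-martingale (not merely local-martingale) property. I would handle both via Gaussian moment bounds on $r$, $\ln\phi$ and $\ln C$, which give square-integrability of $\sup_t\phi_t$, $\sup_t C_t$ and hence of $\pi^*$ and $I^*$, confirming admissibility.
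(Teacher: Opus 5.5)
Your proposal is correct and follows the paper's own proof. The paper likewise uses the conditional budget identity at time $t$, evaluates it with Lemma~\ref{lem:conditional-expectations} to get $W_t^*$, matches the It\^o diffusion of $W^*(t,r_t,\phi_t,C_t)$ with $\Sigma_F^\top\pi_t$, and sets $I_t^*=\mu_{x+t}(M_t^*-W_t^*)$; your rerun of the necessity argument on $[t,T]$ and your row-space and integrability remarks are sound refinements of steps the paper leaves implicit. One small correction: the drift consistency follows from the vanishing drift of the martingale ${}_tp_x\phi_tW_t^*+\int_0^t {}_sp_x\phi_s(\mu_{x+s}M_s^*-C_s)\,ds$ together with uniqueness of solutions to the linear wealth SDE, not from uniqueness of the martingale representation.
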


\begin{proof}
    See Appendix~\ref{app:proof-thm-optimal-strategies}.
\end{proof}

\begin{theorem}
\label{thm:efficient-frontier}
The mean-variance efficient frontier on the efficient branch
\[
\mathcal{K} \ge \frac{W_0^{\textnormal{eff}}}{A(0, r_0)}
\]
is mathematically expressed as
\begin{equation}
\label{eq:efficient-frontier}
\mathcal{K}
= \frac{W_0^{\textnormal{eff}}}{A(0, r_0)}
+ \frac{\sqrt{B(0, r_0) - A(0, r_0)^2}}{A(0, r_0)} \,
\sigma_{Y_T},
\end{equation}
where $W_0^{\textnormal{eff}} = w_0 + C_0 H(0, r_0)$ defines the agent's effective initial wealth, and $\sigma_{Y_T}$ represents the standard deviation of the mortality-contingent payout.
\end{theorem}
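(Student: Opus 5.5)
We substitute the optimal static solution of Proposition~\ref{prop:static-solution} into the mean and the second moment of the mortality-contingent payout. Write $A=A(0,r_0)$, $B=B(0,r_0)$, $W_0^{\textnormal{eff}}=w_0+C_0H(0,r_0)$, and $\Delta=B-A^2>0$ (Assumption~\ref{assump:nondegeneracy}). With $W_T^*=\tfrac{\lambda_1^*}{2}+\tfrac{\lambda_2^*}{2}\phi_T$ and $M_t^*=\tfrac{\lambda_1^*}{2}+\tfrac{\lambda_2^*}{2}\phi_t$, the payout is $Y^*=\tfrac{\lambda_1^*}{2}+\tfrac{\lambda_2^*}{2}\phi_{T\wedge T_x}$. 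Because $T_x$ is independent of $\mathcal F$, we have
\begin{equation*}
\mathbb E[\phi_{T\wedge T_x}]=\int_0^T {}_tp_x\mu_{x+t}\mathbb E[\phi_t]\,dt+{}_Tp_x\mathbb E[\phi_T]=A,
\end{equation*}
by Lemma~\ref{lem:conditional-expectations} at $t=0$ (with $\phi_0=1$) and the definition \eqref{eq:A-B-H-def}. The same computation with $\phi^2$ gives $\mathbb E[\phi_{T\wedge T_x}^2]=B$.

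Likewise, the contribution term in the budget constraint \eqref{eq:budget-necessity} equals $\mathbb E\int_0^T {}_tp_x\phi_tC_t\,dt=C_0H(0,r_0)$. The budget constraint therefore reads
\begin{equation*}
\tfrac{\lambda_1^*}{2}A+\tfrac{\lambda_2^*}{2}B=W_0^{\textnormal{eff}},
\end{equation*}
and the target constraint reads $\tfrac{\lambda_1^*}{2}+\tfrac{\lambda_2^*}{2}A=\mathcal K$. These two identities are consistent with \eqref{eq:lambda1-star}--\eqref{eq:lambda2-star}, which serves as a check.

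Next we compute the variance:
\begin{equation*}
\Var(Y^*)=\Big(\tfrac{\lambda_2^*}{2}\Big)^2\Var(\phi_{T\wedge T_x})=\Big(\tfrac{\lambda_2^*}{2}\Big)^2\Delta.
\end{equation*}
Substituting $\tfrac{\lambda_2^*}{2}=(W_0^{\textnormal{eff}}-\mathcal K A)/\Delta$ gives
\begin{equation*}
\sigma_{Y_T}^2=\frac{(\mathcal K A-W_0^{\textnormal{eff}})^2}{\Delta},\qquad
\sigma_{Y_T}=\frac{|\mathcal KA-W_0^{\textnormal{eff}}|}{\sqrt{\Delta}}.
\end{equation*}
On the branch $\mathcal K\ge W_0^{\textnormal{eff}}/A$ (note $A>0$, since it is a mixture of positive discount factors), the absolute value drops. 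Solving for $\mathcal K$ then yields \eqref{eq:efficient-frontier}. The complementary branch gives the inefficient lower half of the parabola, because each variance level is attained by two means and the larger one is efficient. This justifies the restriction in the statement.

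The main point requiring care is that this static optimum is indeed the minimum-variance payout for each $\mathcal K$, so that the curve is the true frontier and not merely the value at a stationary point. For this we rely on Propositions~\ref{prop:static-budget-necessity}--\ref{prop:static-budget-sufficiency} and Proposition~\ref{prop:static-solution}. If needed, a direct argument is available: for any feasible $Y$, both $Y-Y^*$ and $\phi_{T\wedge T_x}(Y-Y^*)$ have zero mean under the mortality-weighted measure. Hence $\mathbb E[Y^*(Y-Y^*)]=0$, which gives $\mathbb E[Y^2]\ge\mathbb E[Y^{*2}]$ and confirms optimality.
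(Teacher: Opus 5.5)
Your proposal is correct and follows essentially the same route as the paper. You substitute the static optimum into the target and budget constraints to obtain the linear system $\tfrac{\lambda_1^*}{2}+\tfrac{\lambda_2^*}{2}A=\mathcal K$, $\tfrac{\lambda_1^*}{2}A+\tfrac{\lambda_2^*}{2}B=W_0^{\textnormal{eff}}$, compute $\Var(Y^*)=(\lambda_2^*/2)^2(B-A^2)$, insert $\lambda_2^*/2=(W_0^{\textnormal{eff}}-\mathcal K A)/(B-A^2)$, and take square roots.

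Your use of $A>0$ to drop the absolute value on the branch $\mathcal K\ge W_0^{\textnormal{eff}}/A$ makes explicit the step the paper calls ``rearranging.'' Your closing orthogonality check $\mathbb E[Y^*(Y-Y^*)]=0$ is the same completing-the-square argument as in the paper's verification theorem.
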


\begin{proof}
    See Appendix~\ref{app:proof-thm-efficient-frontier}.
\end{proof}

\subsection{Verification theorem}
\label{subsec:verification}

This theorem formally proves that the strategy satisfying the first-order conditions indeed minimizes the objective functional within the admissible set.

\begin{theorem}[Verification Theorem]
\label{thm:verification}
Suppose that the candidate optimal strategy $(\pi^*, I^*)$ is given by the equations in Theorem~\ref{thm:optimal-strategies}, and let $(W_T^*, M_t^*)$ be the corresponding optimal terminal wealth and bequest value processes. Assume that $(\pi^*, I^*) \in \mathcal{A}(0, T)$. Then, for any other admissible strategy $(\pi, I) \in \mathcal{A}(0, T)$ generating processes $W_T$ and $M_t$ that satisfy both the expectation constraint \eqref{eq:dynamic-constraint} and the static budget constraint \eqref{eq:budget-necessity}, the following inequality holds:
\begin{equation*}
\mathbb{E}[(Y_T^*)^2] \le \mathbb{E}[(Y_T)^2].
\end{equation*}
Consequently, $(\pi^*, I^*)$ is the globally optimal strategy for the mean-variance optimization problem.
\end{theorem}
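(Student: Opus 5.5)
The argument is a completion-of-squares (convexity) comparison that uses the linear structure of $(W_T^*,M_t^*)$ from Proposition~\ref{prop:static-solution}. Fix an admissible $(\pi,I)$ whose processes $(W_T,M_t)$ satisfy \eqref{eq:dynamic-constraint} and \eqref{eq:budget-necessity}. Put $\Delta_T = W_T - W_T^*$ and $\Delta_t = M_t - M_t^*$. Since $T_x$ is independent of $\mathcal{F}$, the second moment of the payout splits as
\[
\mathbb{E}[Y_T^2] = {}_T p_x \mathbb{E}[W_T^2] + \int_0^T {}_t p_x \mu_{x+t} \mathbb{E}[M_t^2]\,dt ,
\]
and the same holds for $Y_T^*$. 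Expanding $W_T^2 = (W_T^*)^2 + 2W_T^*\Delta_T + \Delta_T^2$, and likewise for $M_t$, gives
\[
\mathbb{E}[Y_T^2] - \mathbb{E}[(Y_T^*)^2] = 2\Big({}_T p_x \mathbb{E}[W_T^*\Delta_T] + \int_0^T {}_t p_x\mu_{x+t}\mathbb{E}[M_t^*\Delta_t]\,dt\Big) + {}_T p_x\mathbb{E}[\Delta_T^2] + \int_0^T {}_t p_x\mu_{x+t}\mathbb{E}[\Delta_t^2]\,dt .
\]
The last two terms are nonnegative, so the whole proof reduces to showing that the cross term vanishes.

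Substitute $W_T^* = \frac{\lambda_1^*}{2} + \frac{\lambda_2^*}{2}\phi_T$ and $M_t^* = \frac{\lambda_1^*}{2} + \frac{\lambda_2^*}{2}\phi_t$. The cross term then becomes
\[
\frac{\lambda_1^*}{2}\Big({}_T p_x\mathbb{E}[\Delta_T] + \int_0^T {}_t p_x\mu_{x+t}\mathbb{E}[\Delta_t]\,dt\Big) + \frac{\lambda_2^*}{2}\Big({}_T p_x\mathbb{E}[\phi_T\Delta_T] + \int_0^T {}_t p_x\mu_{x+t}\mathbb{E}[\phi_t\Delta_t]\,dt\Big).
\]
The first bracket is zero because both $(W_T,M)$ and $(W_T^*,M^*)$ satisfy the expectation constraint \eqref{eq:dynamic-constraint} with the same $\mathcal{K}$. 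The second bracket is zero because both satisfy the budget identity \eqref{eq:budget-necessity} with the same $w_0$ and the same contribution term $\mathbb{E}[\int_0^T {}_t p_x \phi_t C_t\,dt]$, which does not depend on the controls and so cancels on subtraction. For the competitor this identity is an assumption of the theorem (it also follows from Proposition~\ref{prop:static-budget-necessity}). For the candidate it follows from Proposition~\ref{prop:static-budget-necessity}, since $(\pi^*,I^*)\in\mathcal{A}(0,T)$ by hypothesis. It is also consistent with how $\lambda_1^*,\lambda_2^*$ were solved for in Proposition~\ref{prop:static-solution}, using the identities $\mathbb{E}[\phi_u]$, $\mathbb{E}[\phi_u^2]$, $\mathbb{E}[\phi_uC_u]$ of Lemma~\ref{lem:conditional-expectations}, which produce $A(0,r_0)$, $B(0,r_0)$, $H(0,r_0)$. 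Hence $\mathbb{E}[(Y_T^*)^2]\le\mathbb{E}[Y_T^2]$.

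The main obstacle is integrability: every expectation in the expansion must be finite so that splitting and subtracting is legitimate. I would handle it as follows.
\begin{itemize}
\item Admissibility condition 4, $\mathbb{E}[\sup_t|W_t|^2]<\infty$, bounds $\mathbb{E}[W_T^2]$ and $\mathbb{E}[W_t^2]$.
\item For the bequest, $M_t = W_t + I_t/\mu_{x+t}$, together with condition 3 and $\mu_{x+t}$ bounded away from $0$ on $[0,T]$, gives $\int_0^T \mathbb{E}[M_t^2]\,dt<\infty$.
\item The pricing kernel is lognormal with Gaussian (Vasicek) $r$, so $\mathbb{E}[\phi_t^2]$ is finite and continuous on $[0,T]$; Lemma~\ref{lem:conditional-expectations} gives it explicitly.
\item Cauchy--Schwarz then controls $\mathbb{E}[\phi_T\Delta_T]$ and $\int_0^T\mathbb{E}[\phi_t\Delta_t]\,dt$, and Fubini justifies exchanging the expectation with the $dt$-integral.
\end{itemize}
Finally, since the expectation constraint fixes $\mathbb{E}[Y_T]=\mathcal{K}$, minimizing the second moment is the same as minimizing the variance, as in \eqref{eq:obj2}. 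Therefore $(\pi^*,I^*)$ is globally optimal for \eqref{eq:obj1}.
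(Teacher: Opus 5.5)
Your proposal is correct and is essentially the paper's own argument. The paper applies the pointwise inequality $x^2-2xy\ge -y^2$ with $2y=\lambda_1^*+\lambda_2^*\phi$, which is your completion of squares with the $\Delta^2$ terms dropped, and then cancels the linear terms using the expectation and budget constraints exactly as in your cross-term step; your integrability bookkeeping (finiteness of the second moments, Cauchy--Schwarz, Fubini) is a useful addition that the paper leaves implicit.
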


\begin{proof}
    See Appendix~\ref{app:proof-thm-verification}.
\end{proof}

\section{Numerical analysis}
\label{sec:numerical-analysis}

\subsection{Model calibration}
\label{subsec:model-calibration}

This section provides a numerical illustration of the optimal strategies and the implied efficient frontier. Baseline parameters for the financial market and the contribution dynamics are not separately estimated in this paper; they are adopted from standard values in the literature, primarily following \citet{munk2010dynamic}.

We consider a pension plan member who enters the plan at the age of $x_0 = 22$ and retires at age 67. The accumulation period of this individual is therefore $T = 45$ years. The initial wealth and initial contribution rate are both normalized to $w_0 = 0$ and $C_0 = 1$, respectively. 

For the financial market, the interest-rate parameters are set to $\kappa=0.50$, $\bar{r}=0.02$, and $\sigma_r=0.02$, with $r_0=\bar r=0.02$. The rolling-bond strategy on a zero-coupon bond is implemented with fixed time-to-maturity $\tau_B=10$ years. Stock-market and contribution volatilities are both set to 20\% ($\sigma_S = 0.20, \sigma_C = 0.20$). To maintain market completeness with respect to background risk, we assume contribution risk is fully spanned by the equity index, i.e., $\rho_{CB} = 0$ and $\hat\rho_{CS} = 1$. The market price of interest-rate risk is set to $\lambda_r = 0$, and the market price of equity risk to $\lambda_S = 0.2$. The growth rate factors of the contribution process are set to $\xi_0 = 0$ and $\xi_1 = 0.02$.

We further set the target expected payout to $\mathcal{K} = 200$. Assume a 10\% contribution rate and a 20-year level retirement income, this target is equivalent to about 49.3\% replacement ratio under the baseline interest rate $\bar r=0.02$. This value is consistent with the replacement rate target reported by OECD, in which the gross replacement rate of full career average earners is about 52\% \citep{oecd2025pag}. The alternative targets $\mathcal{K}=100$ and $\mathcal{K}=300$ correspond to replacement ratios of 24.7\% and 74.0\%, respectively.

Table~\ref{financial_market_estimate_table} reports the baseline parameters used throughout the numerical analysis.


\begin{table}[!htbp]
	\centering
	\begin{threeparttable}
		\caption{Baseline economic and financial parameters.}
		\label{financial_market_estimate_table}
		\begin{tabular}{c r c r c r}
			\toprule
			Parameter     & Value & Parameter     & Value  & Parameter     & Value \\
			\midrule
			\multicolumn{6}{l}{\textit{Interest rate process}}\\
			$\kappa$      & 0.50  & $\bar{r}$     & 0.02   & $\sigma_{r}$  & 0.02  \\
			\multicolumn{6}{l}{\textit{Financial and contribution risks}}\\
			$\sigma_{S}$  & 0.20  & $\sigma_{C}$  & 0.20   & $\rho_{SB}$   & 0.00  \\  
				$\rho_{CB}$   & 0.00  & $\hat\rho_{CS}$ & 1.00 & $\tau_B$      & 10.00 \\
			\multicolumn{6}{l}{\textit{Expected contribution growth and market prices of risk}}\\
			$\xi_0$       & 0.00  & $\xi_1$       & 0.02   &               &       \\
			$\lambda_r$   & 0.00  & $\lambda_S$   & 0.20   &               &       \\
			\multicolumn{6}{l}{\textit{State variables}}\\
			$x_0$         & 22    & $T$           & 45     & $\mathcal{K}$ & 200.0 \\    
			$w_0$         & 0.00  & $C_0$         & 1.00   & $r_0$         & 0.02  \\ 
			\bottomrule
		\end{tabular}
		\begin{tablenotes}[para]
			\small
			Parameters are annualized. Contribution risk is assumed to be fully spanned by the traded assets. Continuous-time simulations use an Euler--Maruyama scheme with 400 time steps and 10{,}000 Monte Carlo paths.
		\end{tablenotes}
	\end{threeparttable}
\end{table}
\FloatBarrier

For the individual mortality rate, we use U.S. mortality data from the Mortality Improvement Model (MIM-2021-v4) published by the Society of Actuaries (SOA). Our benchmark is a non-improvement mortality model fixed in the year 2017, and a comparison with the improvement model is in Section \ref{subsec:mortality-improvement}. Following \citet{forfar1988graduation}, we assume that the force of mortality follows a generalized Gompertz--Makeham $GM(s_1,s_2)$ specification: 
\begin{equation*}
\mu_x 
= \sum_{i=1}^{s_1} a_i (x-22)^{i-1}
+ \exp\left\{\sum_{i=s_1+1}^{s_1+s_2} a_i (x-22)^{i-s_1-1}\right\}, \quad 22 \le x \le 67.
\end{equation*}
We estimate candidate GM models under alternative polynomial orders, considering all $(s_1,s_2)$ combinations in a $3\times 4$ grid, and determine the final model according to coefficient significance and the Bayesian information criterion (BIC) calculated as $\mathrm{BIC}=k\log n-2\ell$, where $\ell$ denotes the maximized log likelihood, $k$ is the number of estimated parameters, and $n$ is the sample size. This selection procedure leads to a $GM(2,3)$ specification for the baseline mortality model. The estimation results are presented in Table~\ref{mortality_estimate_table}.


\begin{table}[!htbp]
	\caption{Estimation results for the force of mortality.}
	\label{mortality_estimate_table}
	\centering
	\begin{tabular}{c r r r}
	\toprule
	Model        &$GM(2,3)$                &BIC                         &$258{,}111.64$               \\
	\midrule
	Parameter    &$a_1$                    &$a_2$                     &$a_3$                       \\
	Value        &$1.385293 \times 10^{-3}$ &$3.961326 \times 10^{-5}$ &$-1.355493 \times 10^{1}$   \\
	
    Parameter    &$a_4$                    &$a_5$                     &                            \\
	Value        &$3.637433 \times 10^{-1}$ &$-3.500351 \times 10^{-3}$&                            \\
    \bottomrule		
	\end{tabular}
\end{table}
\FloatBarrier

\subsection{Baseline results}
\label{subsec:baseline-results}

This subsection illustrates the time trends of optimal wealth, insurance, and portfolio decisions. Under the baseline parameters in Tables~\ref{financial_market_estimate_table} and \ref{mortality_estimate_table}, we simulate the model via Euler--Maruyama discretization (400 time steps) with 10{,}000 Monte Carlo paths. 

Figure~\ref{fig:3D_efficient_frontier} shows how the efficient frontier changes with the market prices of risk $\lambda_r$ and $\lambda_S$, together with the labor-income growth parameter $\xi_0$. When the parameters are fixed, the volatility $\sigma_{Y_T}$ is linear in the expected payoff $\mathbb{E}[Y_T^*]$, consistent with equation~\eqref{eq:efficient-frontier}. Taking the Sharpe ratios $\lambda_r$ and $\lambda_S$ into account, a higher market price of risk generally reduces the risk required to achieve the same target, although the effect of $\lambda_r$ is non-monotonic over the range considered. As $\xi_0$ increases, the present value of future contributions rises, and less risk is needed to achieve the same level of expected terminal payoff.

\begin{figure}[!htbp]
    \centering
    \begin{subfigure}[b]{0.32\textwidth}
        \centering
        \includegraphics[width=\textwidth]{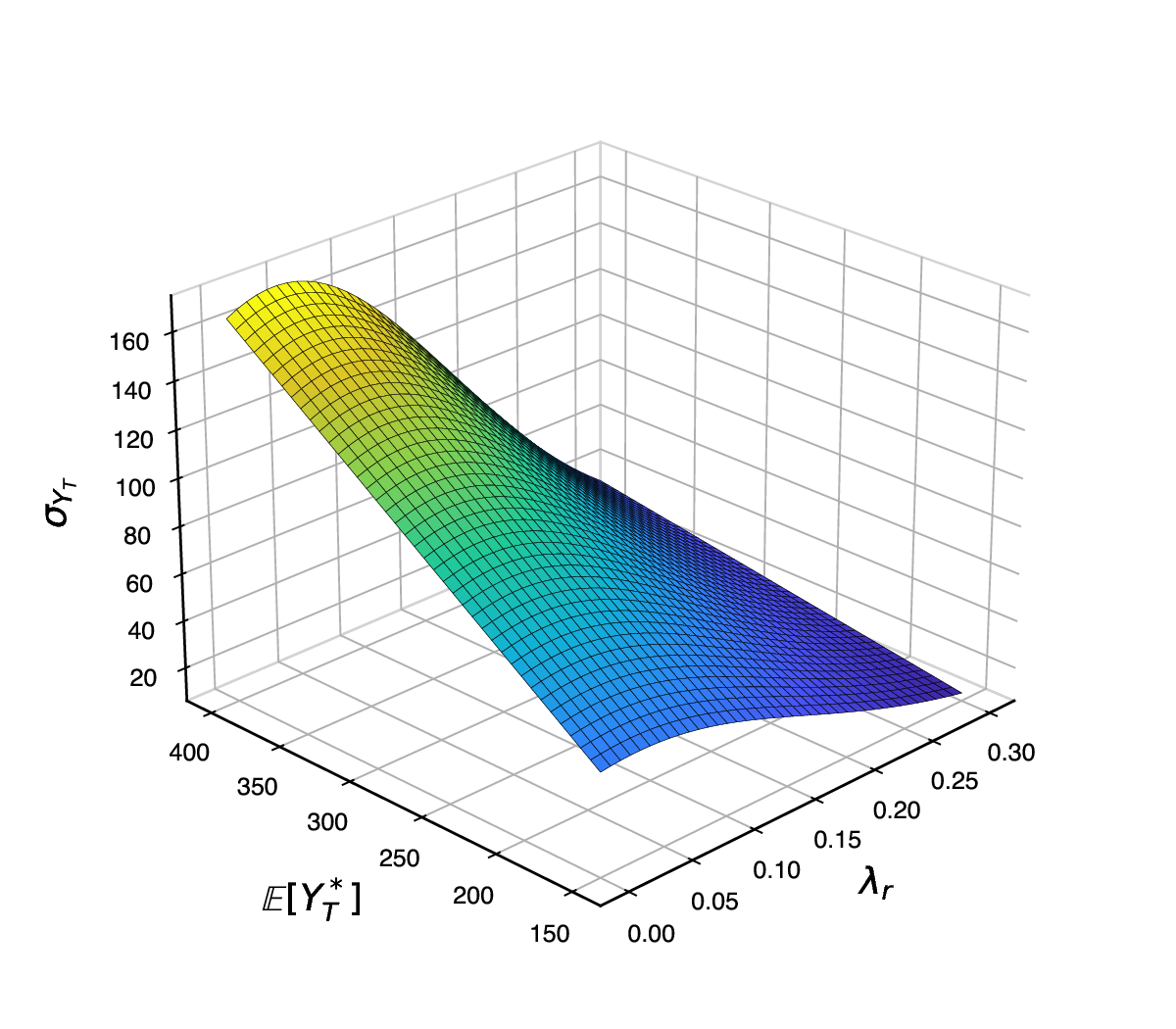}
        \caption{Interest-rate risk price $\lambda_r$}
        \label{fig:3D_efficient_frontier_lambda_r}
    \end{subfigure}
    \hfill
    \begin{subfigure}[b]{0.32\textwidth}
        \centering
        \includegraphics[width=\textwidth]{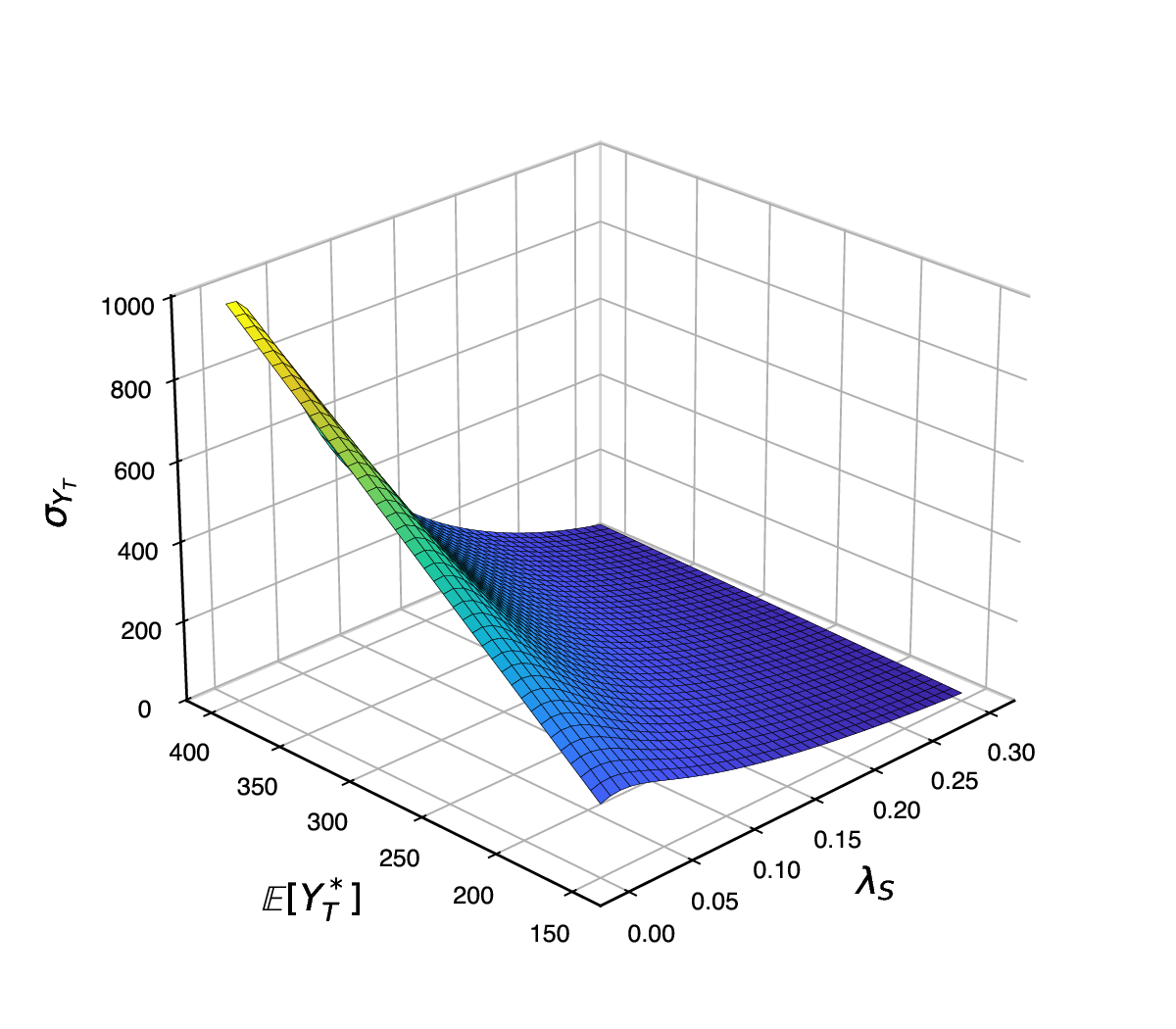}
        \caption{Equity risk price $\lambda_S$}
        \label{fig:3D_efficient_frontier_lambda_S}
    \end{subfigure}
    \hfill
    \begin{subfigure}[b]{0.32\textwidth}
        \centering
        \includegraphics[width=\textwidth]{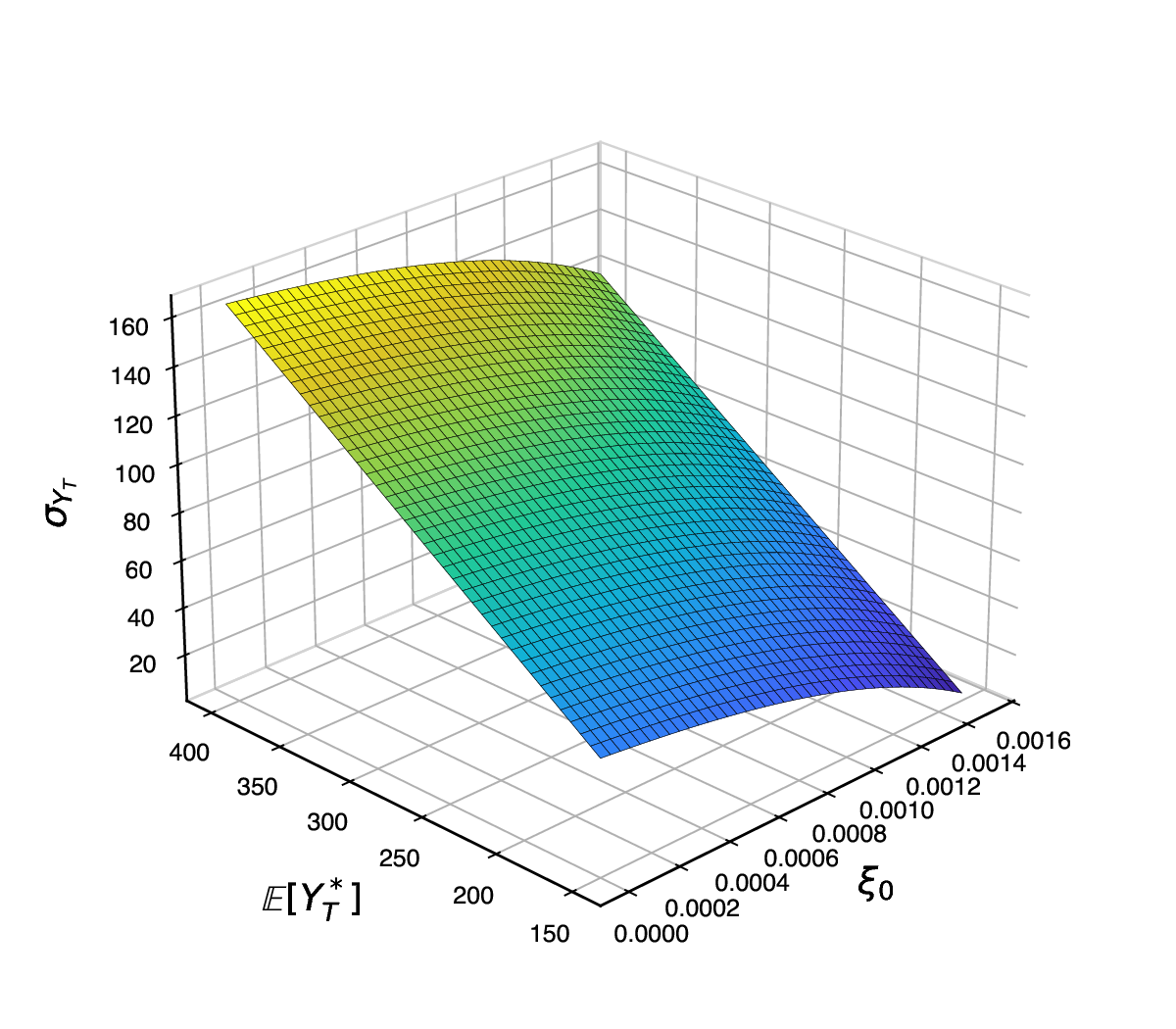}
        \caption{Contribution growth $\xi_0$}
        \label{fig:3D_efficient_frontier_xi_0}
    \end{subfigure}
    \caption{Three-dimensional efficient frontier surfaces with respect to the market prices of risk and the contribution growth parameter.}
    \label{fig:3D_efficient_frontier}
\end{figure}

Figure~\ref{fig:analysis3_prop} reports the expected optimal bond and stock allocation together with the optimal insurance premium. Both dollar amounts and ratios to wealth are reported. As shown in both Figures~\ref{fig:analysis3_piB} and \ref{fig:analysis3_prop_piB}, the pension member tends to short bonds at an early age to purchase the stock index. The bond position gradually shifts from negative to positive as the member deleverages and reduces risk exposure. Near retirement, it declines toward zero because the future contribution stream has almost been exhausted, leaving little remaining interest-rate risk to hedge. Stock position, on the other hand, moves in the almost opposite direction. Figures~\ref{fig:analysis3_piS} and \ref{fig:analysis3_prop_piS} demonstrate such a de-leveraging behavior. Insurance demand shown in Figures~\ref{fig:analysis3_I} and ~\ref{fig:analysis3_prop_I} is consistent with the classical result of a hump-shaped curve. At an early age, the mortality risk is very limited and can almost be ignored. People start paying attention to it in midlife, and demand for life insurance increases. As retirement approaches, premature death becomes less of a concern, pushing $I_t^*$ towards zero.

\begin{figure}[!htbp]
    \centering
    \begin{subfigure}[b]{0.32\textwidth}
        \centering
        \includegraphics[width=\textwidth]{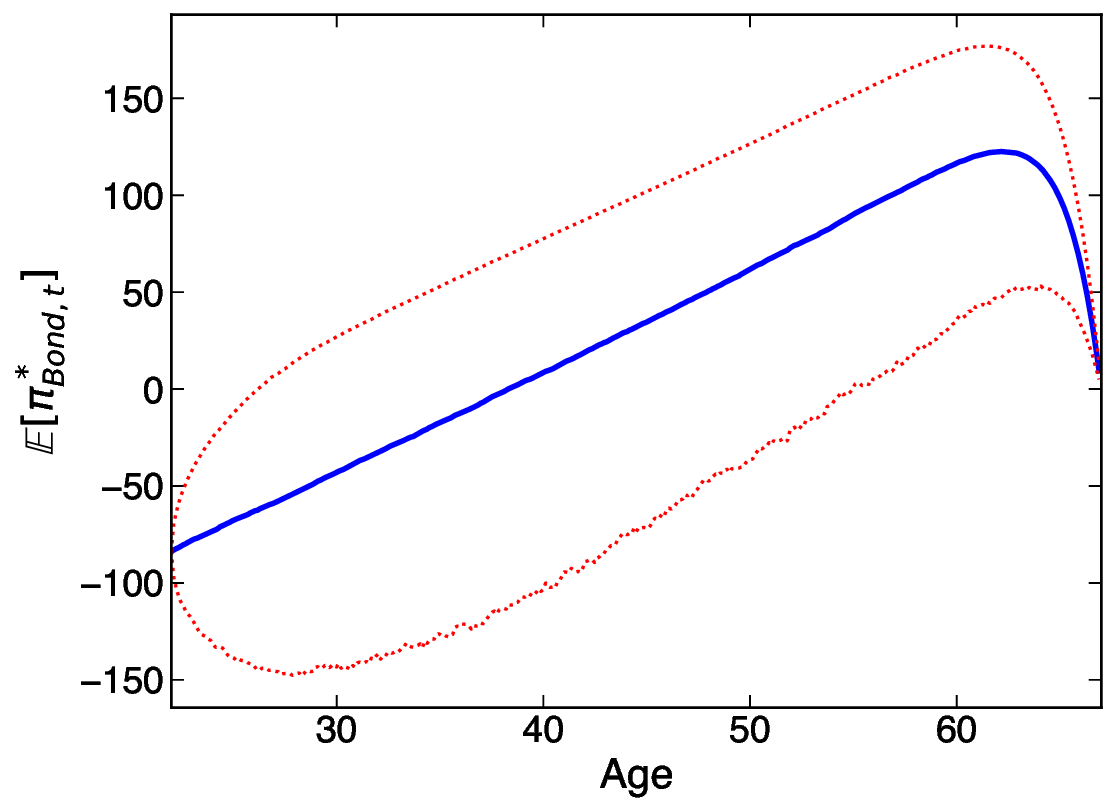}
        \caption{Bond Allocation ($\pi_{Bond}^*$)}
        \label{fig:analysis3_piB}
    \end{subfigure}
    \hfill
    \begin{subfigure}[b]{0.32\textwidth}
        \centering
        \includegraphics[width=\textwidth]{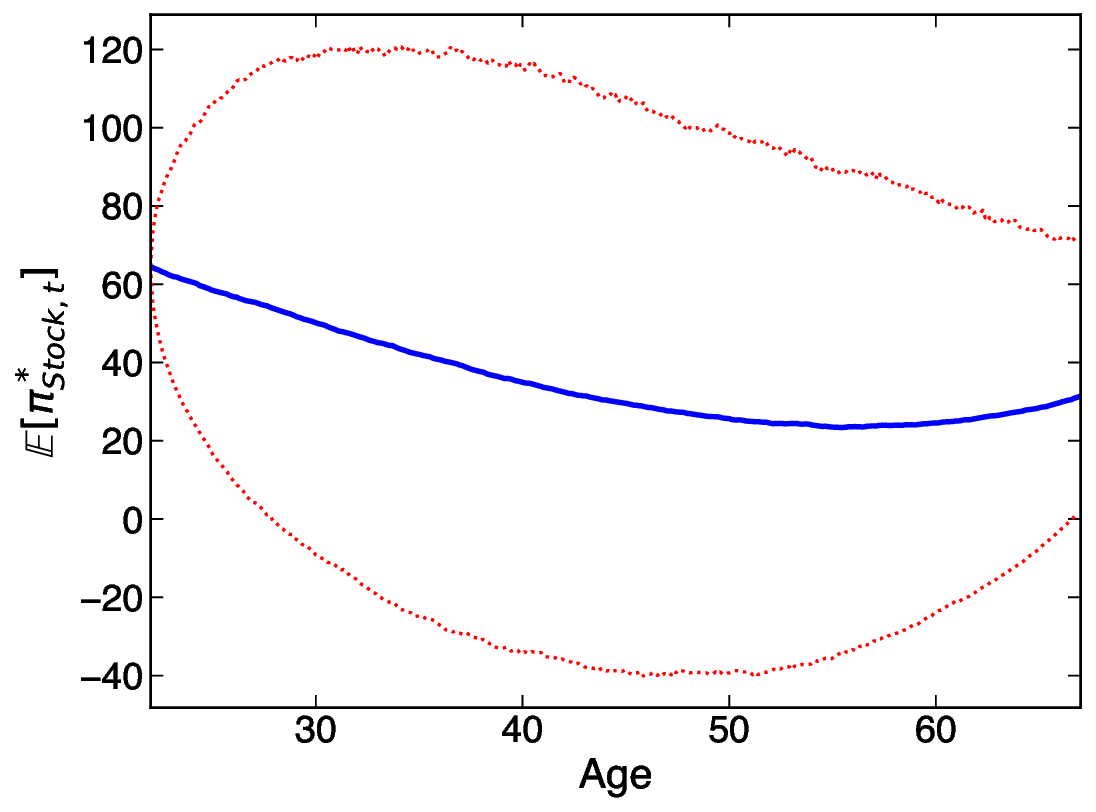}
        \caption{Stock Allocation ($\pi_{Stock}^*$)}
        \label{fig:analysis3_piS}
    \end{subfigure}
    \hfill
    \begin{subfigure}[b]{0.32\textwidth}
        \centering
        \includegraphics[width=\textwidth]{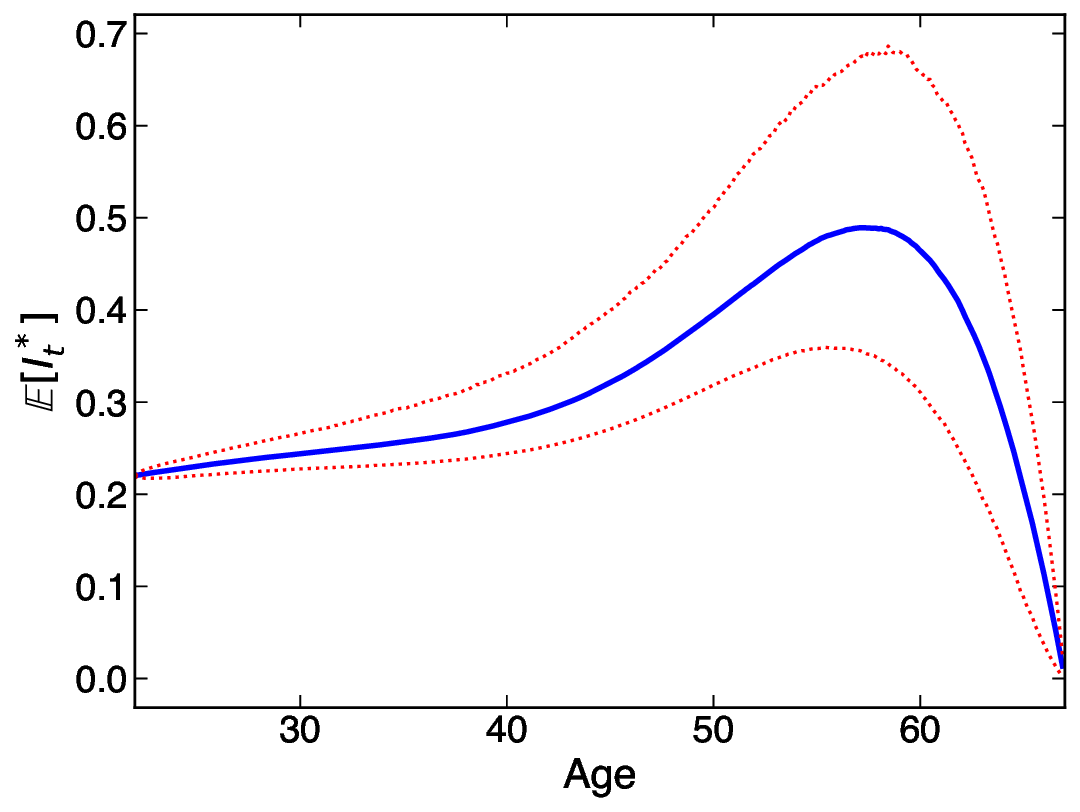}
        \caption{Insurance Premium ($I_t^*$)}
        \label{fig:analysis3_I}
    \end{subfigure}
    
    \vspace{0.6em}
    
    \begin{subfigure}[b]{0.32\textwidth}
        \centering
        \includegraphics[width=\textwidth]{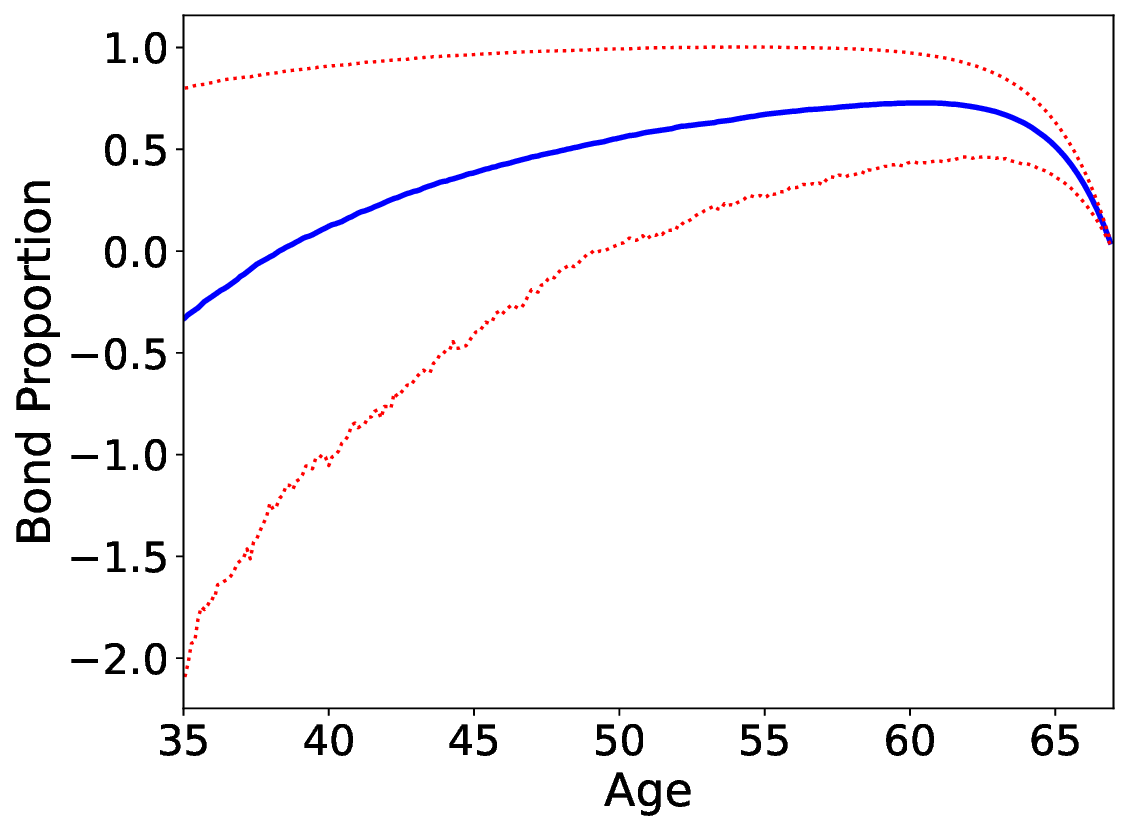}
        \caption{Bond Proportion}
        \label{fig:analysis3_prop_piB}
    \end{subfigure}
    \hfill
    \begin{subfigure}[b]{0.32\textwidth}
        \centering
        \includegraphics[width=\textwidth]{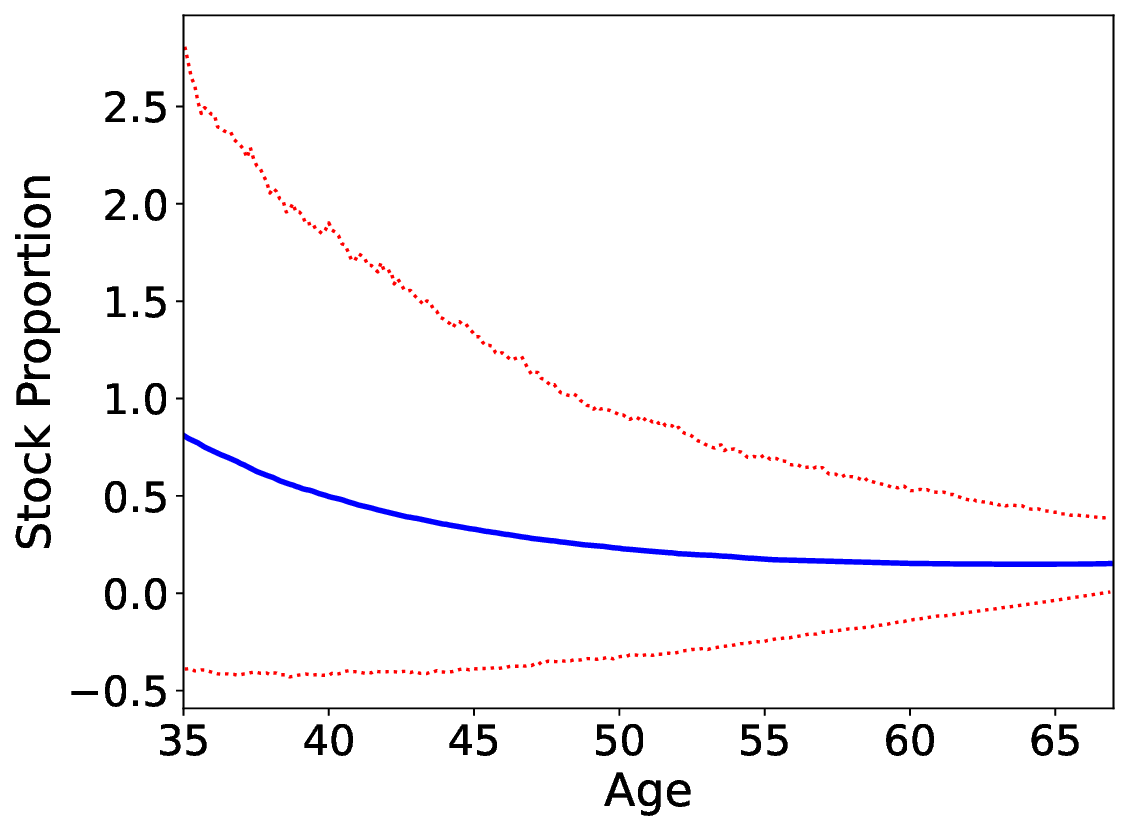}
        \caption{Stock Proportion}
        \label{fig:analysis3_prop_piS}
    \end{subfigure}
    \hfill
    \begin{subfigure}[b]{0.32\textwidth}
        \centering
        \includegraphics[width=\textwidth]{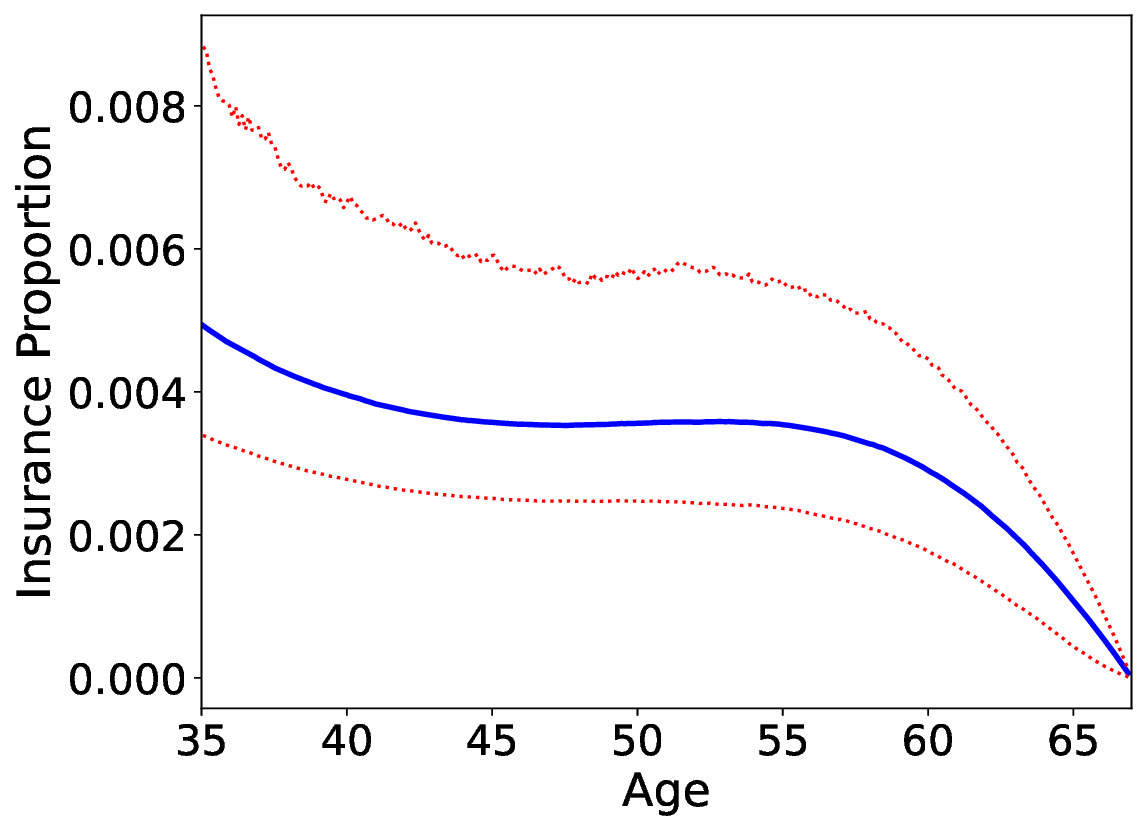}
        \caption{Insurance Proportion}
        \label{fig:analysis3_prop_I}
    \end{subfigure}
    \caption{Detailed simulation results for the amount and proportion.}
    \label{fig:analysis3_prop}
\end{figure}
\FloatBarrier

\subsection{Impact of mortality improvement}
\label{subsec:mortality-improvement}

The Gompertz-Makeham model introduced in Section \ref{subsec:model-calibration} is relatively simple and ignores the fact that mortality rates may change over time. Increased life expectancy and longevity can flatten the force of mortality curve and significantly affect the optimal investment and insurance decisions. To incorporate this trend, we adopt the Mortality Improvement Model (MIM-2021-v4) published by the Society of Actuaries (SOA) for calibration. We consider a male participant of age 22 entering the plan at calendar year 2017. This participant will then retire at age 67 in 2062 according to our setting. Two scenarios are then constructed: 
\begin{enumerate}
    \item \textbf{Static model}: Gompertz-Makeham model parameters are estimated using the mortality rates of the single base year 2017. 
    \item \textbf{Projected model}: Mortality rates are extracted diagonally across the projection table from age 22 in 2017 to age 67 in 2062.
\end{enumerate}

Figure~\ref{fig:analysis4_mu} compares the natural logarithm of force of mortality estimated under both scenarios. At an early stage, the values are similar. However, in the 50s and 60s (calendar years 2040s--2050s), the projected model delivers substantially lower mortality, which is consistent with the anticipated longevity trends.

\begin{figure}[!htbp]
    \centering
    \begin{subfigure}[b]{0.48\textwidth}
        \centering
        \includegraphics[width=\textwidth]{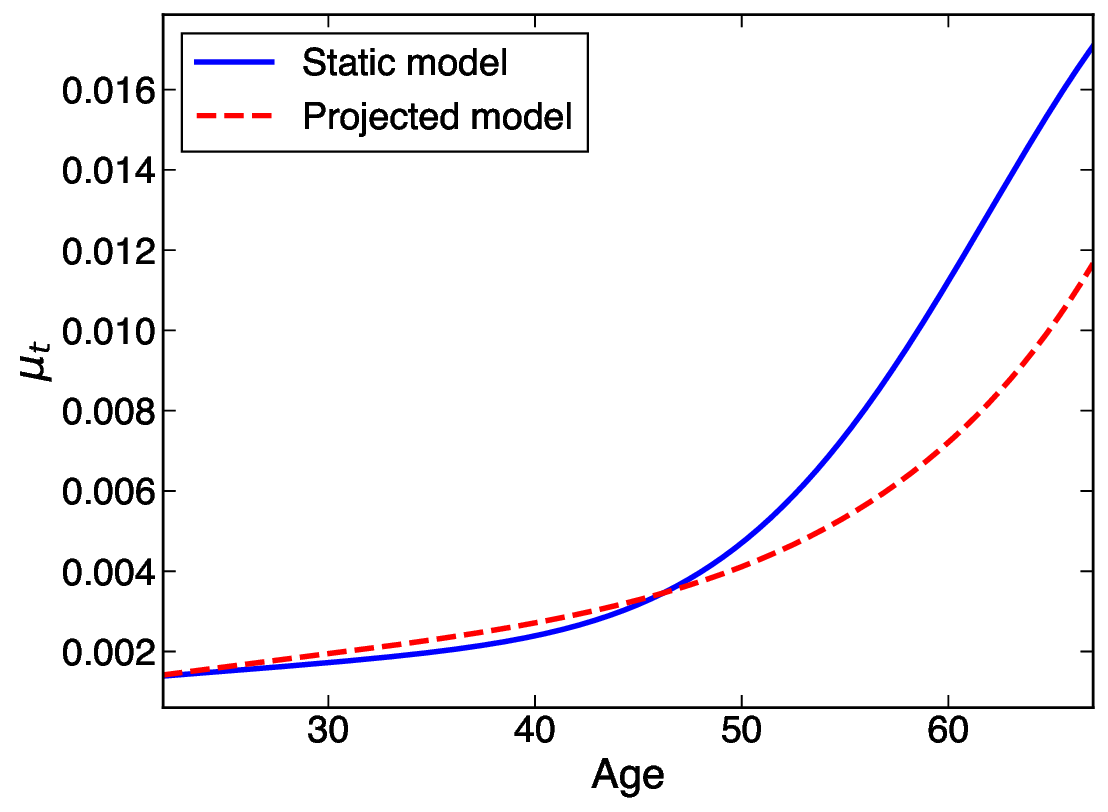}
        \caption{Mortality rates $\mu_t$}
    \end{subfigure}
    \hfill
    \begin{subfigure}[b]{0.48\textwidth}
        \centering
        \includegraphics[width=\textwidth]{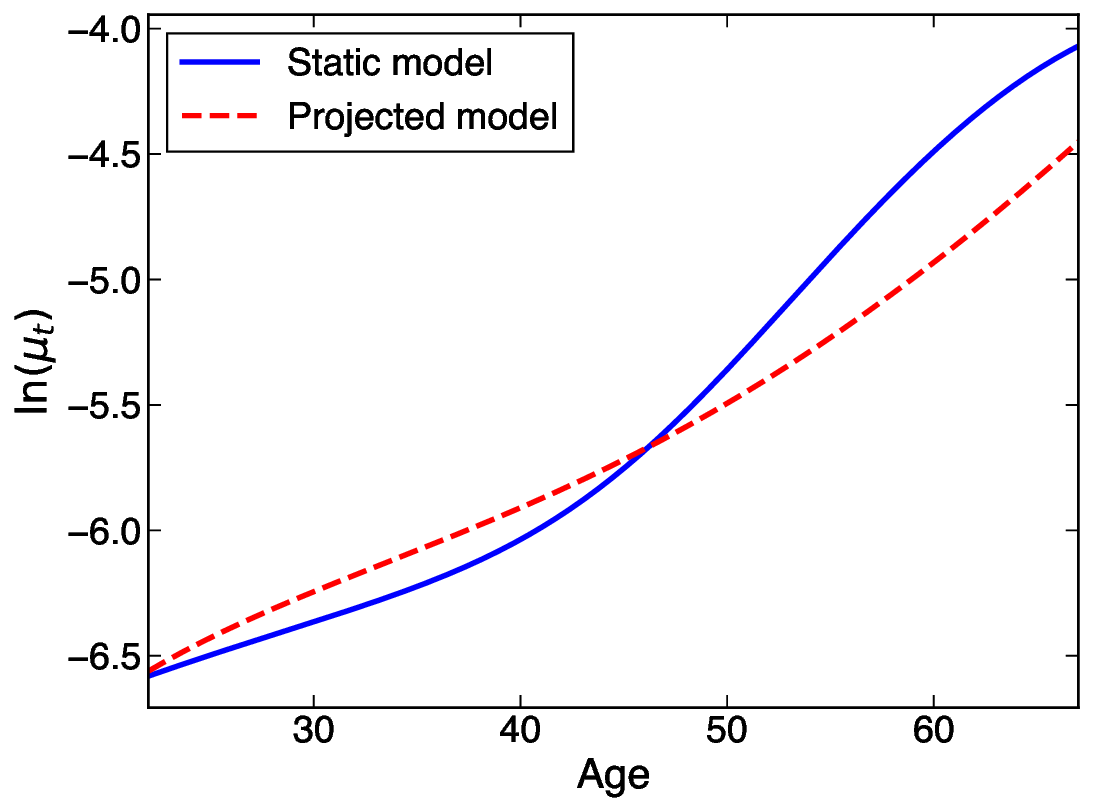}
        \caption{Natural log mortality rates $\ln(\mu_t)$}
    \end{subfigure}
    \caption{Comparison of mortality rates and natural log mortality rates between the static and projected models.}
    \label{fig:analysis4_mu}
\end{figure}
\FloatBarrier

The corresponding optimal strategies are summarized in Figure~\ref{fig:analysis4}. Figures \ref{fig:analysis4_piB} and \ref{fig:analysis4_piS} plot the difference in investment strategy (strategy in the projected model minus that in the static model). We find that mortality improvements overall make pension members more risk-seeking in their investments. Specifically, the pension member tends to short more bonds and purchase more stocks at an early age, buy more bonds and stocks at a middle age, and buy fewer bonds and more stocks as they approach retirement. However, as shown in Figures~\ref{fig:analysis4_prop_piB} and \ref{fig:analysis4_prop_piS}, this investment change is relatively small and doesn't change the proportions allocated between bond and stock. In contrast, the mortality improvement significantly changes the life insurance strategy both in amount and proportion (see Figures \ref{fig:analysis4_I} and \ref{fig:analysis4_prop_I}). In particular, a longer life expectancy leads to a higher expected present value of future income. This pushes the protection needs toward early adulthood, and because of this early shift, mid-to-late-life insurance allocation drops.

\begin{figure}[!htbp]
    \centering
    \begin{subfigure}[b]{0.32\textwidth}
        \centering
        \includegraphics[width=\textwidth]{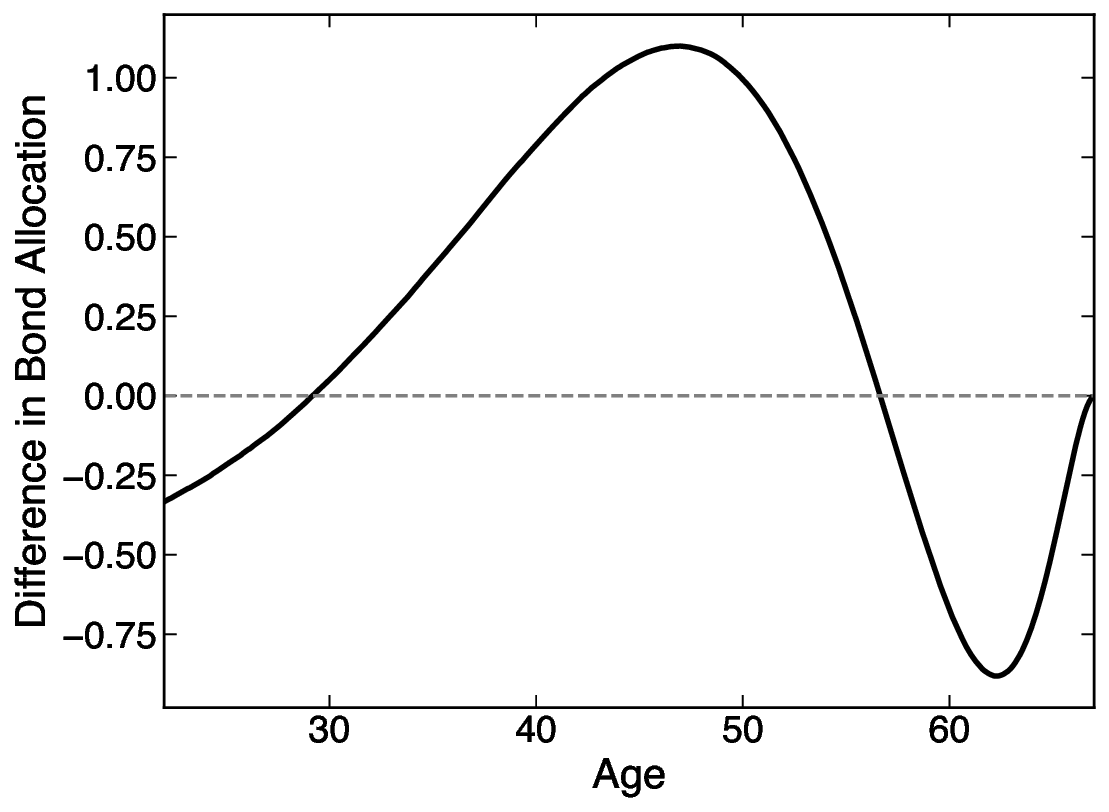}
        \caption{Difference in Bond Allocation}
        \label{fig:analysis4_piB}
    \end{subfigure}
    \hfill
    \begin{subfigure}[b]{0.32\textwidth}
        \centering
        \includegraphics[width=\textwidth]{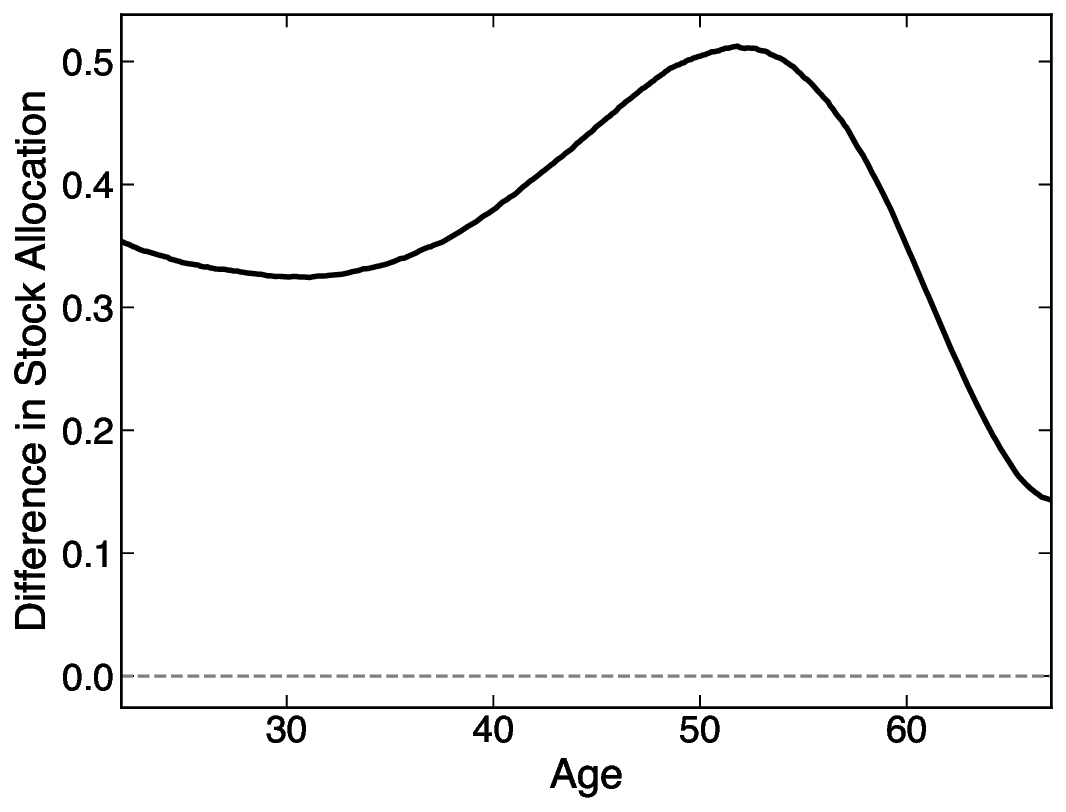}
        \caption{Difference in Stock Allocation}
        \label{fig:analysis4_piS}
    \end{subfigure}
    \hfill
    \begin{subfigure}[b]{0.32\textwidth}
        \centering
        \includegraphics[width=\textwidth]{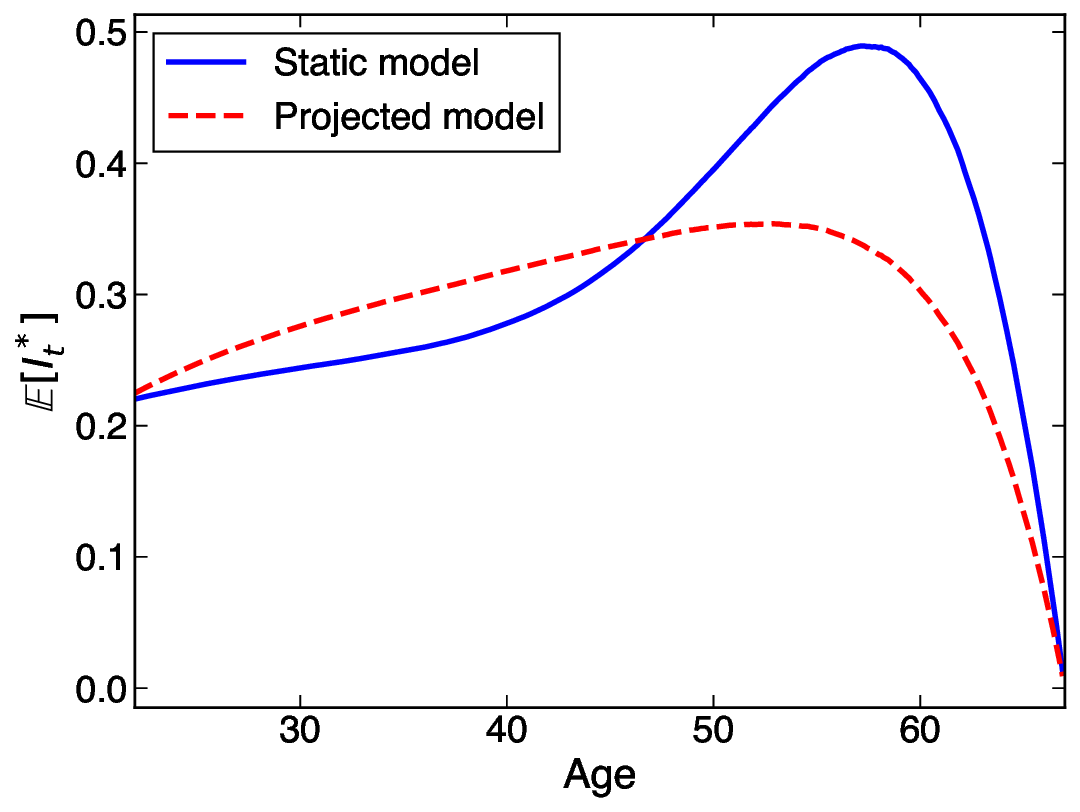}
        \caption{Insurance Premium}
        \label{fig:analysis4_I}
    \end{subfigure}
    
    \vspace{0.6em}
    
    \begin{subfigure}[b]{0.32\textwidth}
        \centering
        \includegraphics[width=\textwidth]{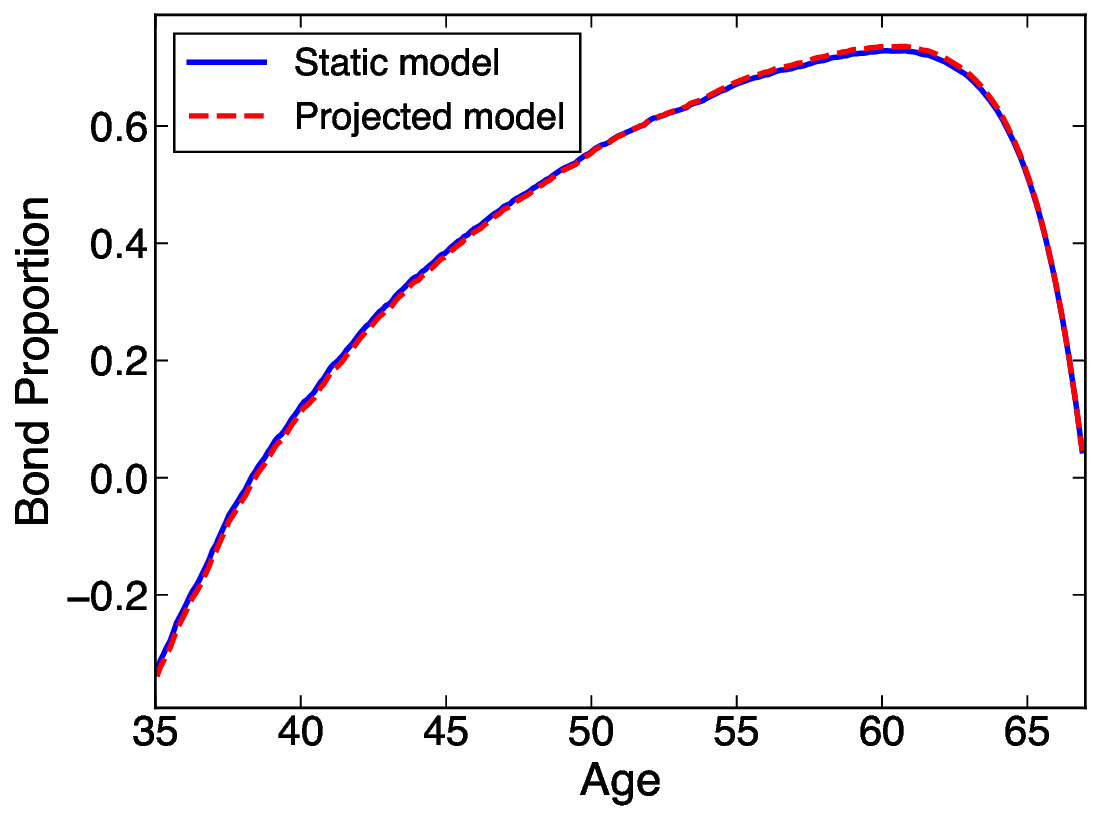}
        \caption{Bond Proportion}
        \label{fig:analysis4_prop_piB}
    \end{subfigure}
    \hfill
    \begin{subfigure}[b]{0.32\textwidth}
        \centering
        \includegraphics[width=\textwidth]{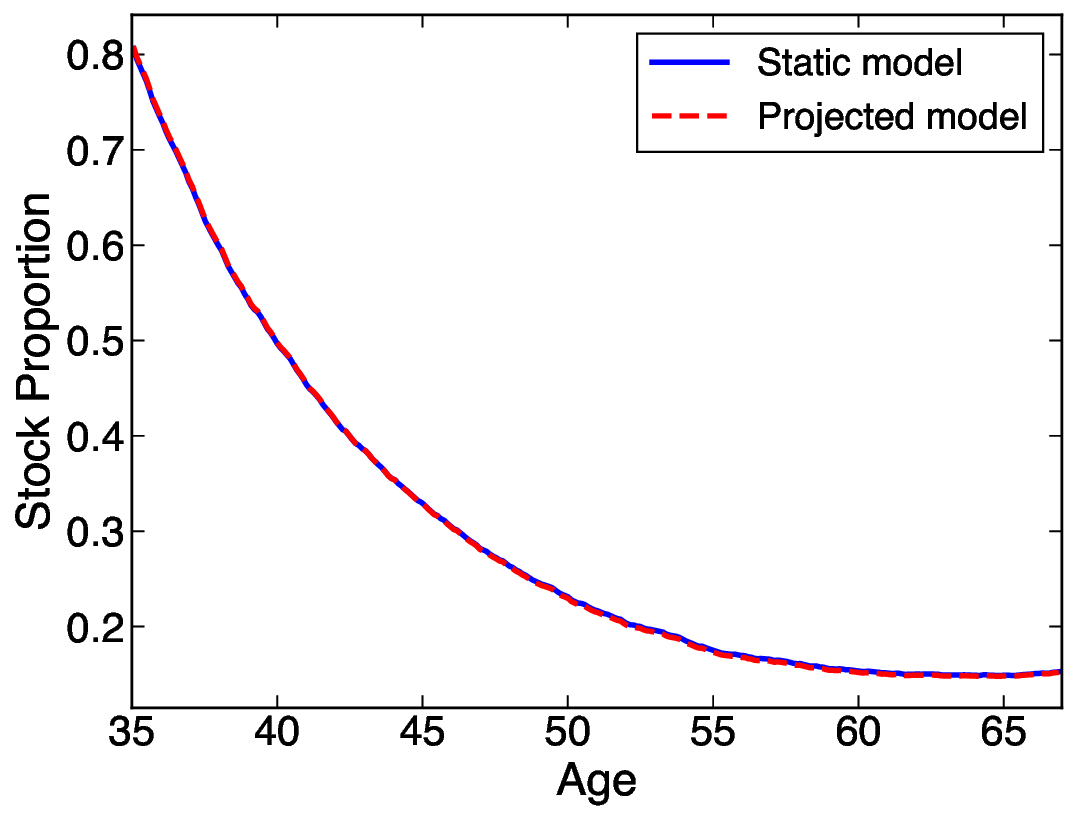}
        \caption{Stock Proportion}
        \label{fig:analysis4_prop_piS}
    \end{subfigure}
    \hfill
    \begin{subfigure}[b]{0.32\textwidth}
        \centering
        \includegraphics[width=\textwidth]{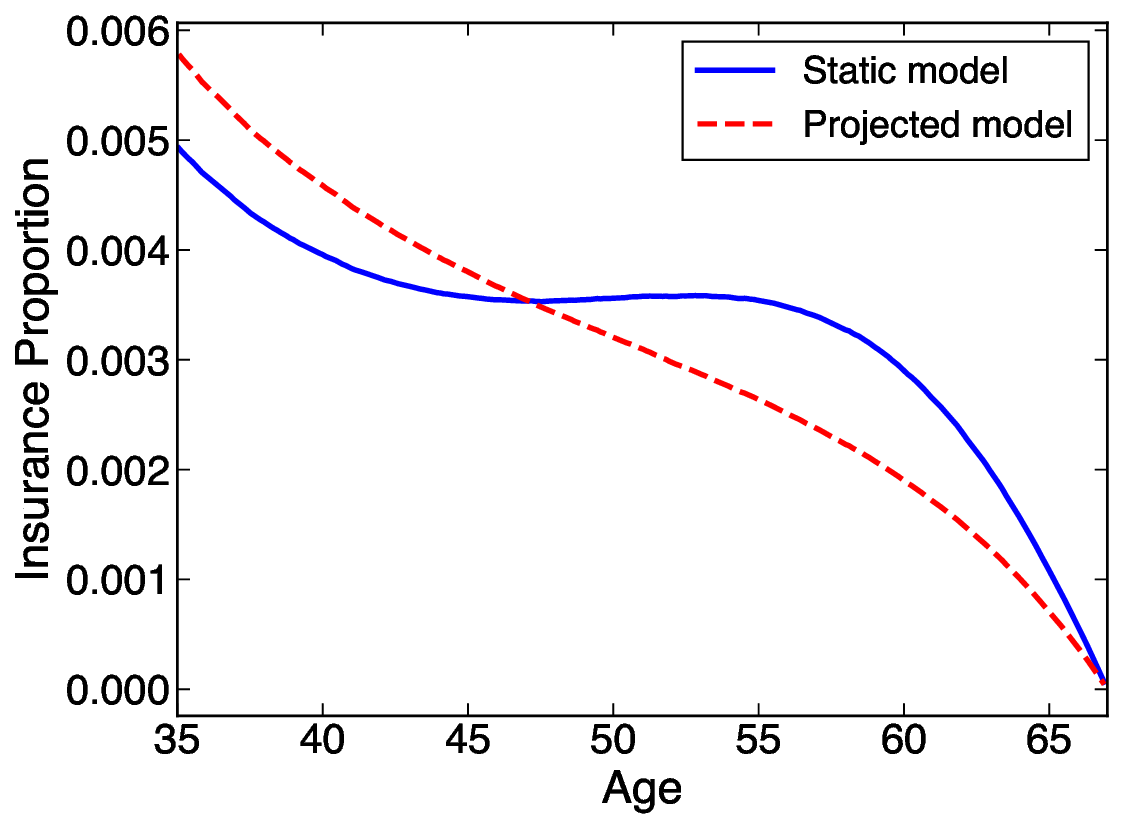}
        \caption{Insurance Proportion}
        \label{fig:analysis4_prop_I}
    \end{subfigure}
    \caption{Static and projected mortality model comparison}
    \label{fig:analysis4}
\end{figure}

\subsection{Sensitivity analysis}
\label{subsec:sensitivity-analysis}

In this section, we conduct a comprehensive sensitivity analysis to examine how key economic and model setting parameters affect optimal strategies $(\pi_t^*, I_t^*)$. In each subsection, we only vary one parameter while holding all others fixed at their baseline values in Table~\ref{financial_market_estimate_table}.

\subsubsection{Impact of target expected wealth ($\mathcal{K}$)}

Figure~\ref{fig:sens_G1_K} suggests that a higher wealth target will motivate pension members to take more risks, or in other words, more risks have to be taken in order to achieve a higher target. At early ages, individuals short more bonds (Figure~\ref{fig:G1_K_piB}) to finance more risky asset positions (Figure~\ref{fig:G1_K_piS}). As retirement approaches, the optimal strategy shifts from risk-taking to a more conservative strategy to secure current wealth. On the insurance side, even though the dollar amount of insurance premium paid is higher for a larger \(\mathcal{K}\) (Figure~\ref{fig:G1_K_I}), the relative proportion to wealth is actually smaller since the wealth level is higher under a larger \(\mathcal{K}\) (Figure~\ref{fig:G1_K_prop_I}).

\begin{figure}[!htbp]
    \centering
    \begin{subfigure}[b]{0.32\textwidth}
        \centering
        \includegraphics[width=\textwidth]{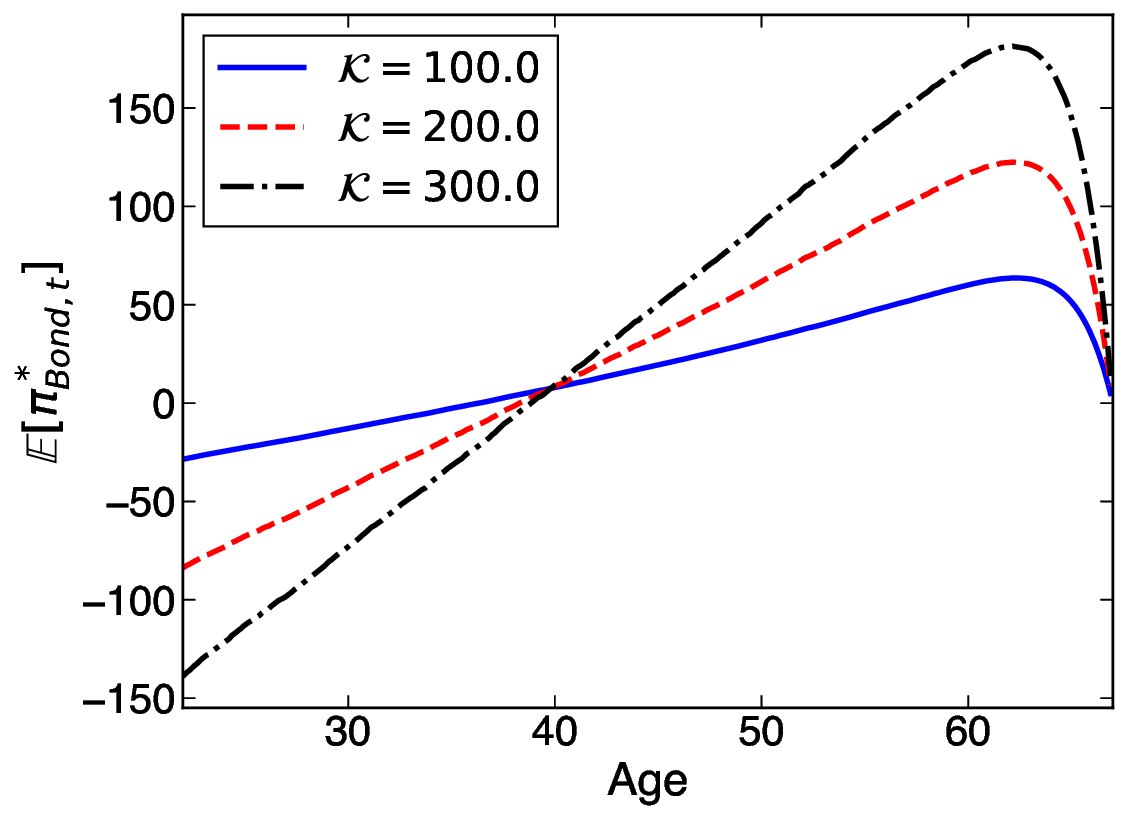}
        \caption{Bond Allocation ($\pi_{Bond}^*$)}
        \label{fig:G1_K_piB}
    \end{subfigure}
    \hfill
    \begin{subfigure}[b]{0.32\textwidth}
        \centering
        \includegraphics[width=\textwidth]{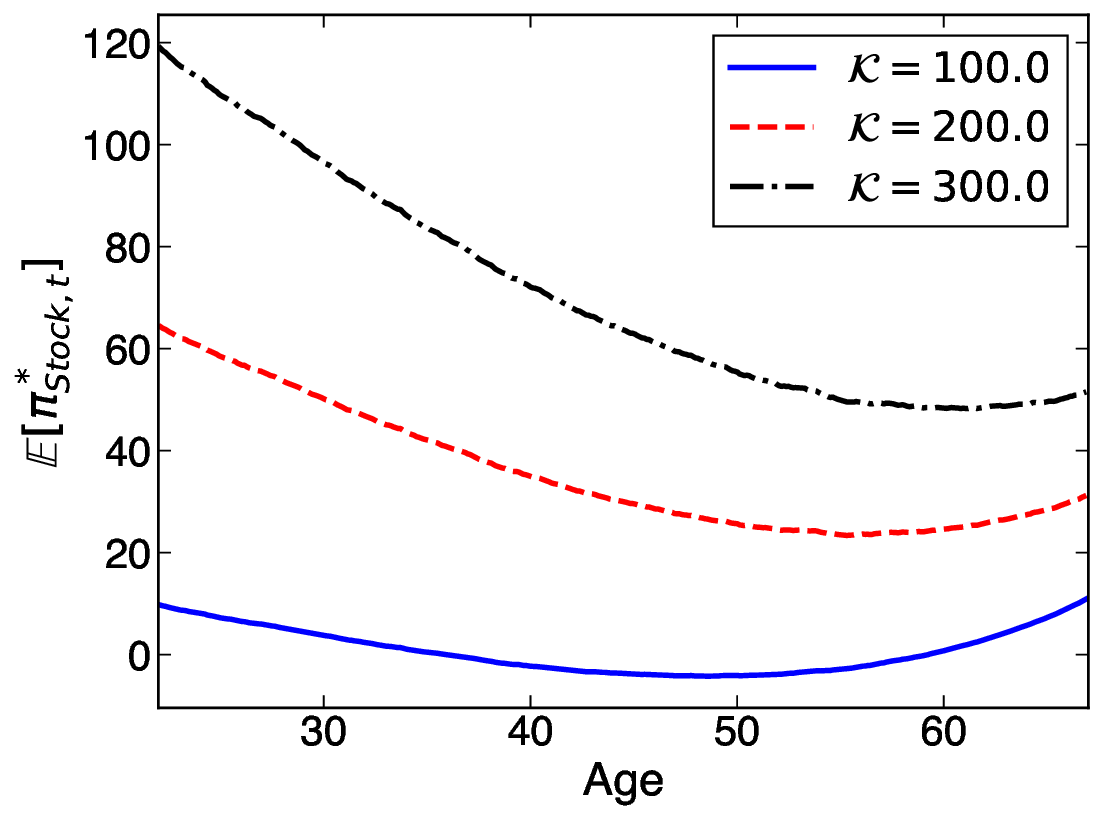}
        \caption{Stock Allocation ($\pi_{Stock}^*$)}
        \label{fig:G1_K_piS}
    \end{subfigure}
    \hfill
    \begin{subfigure}[b]{0.32\textwidth}
        \centering
        \includegraphics[width=\textwidth]{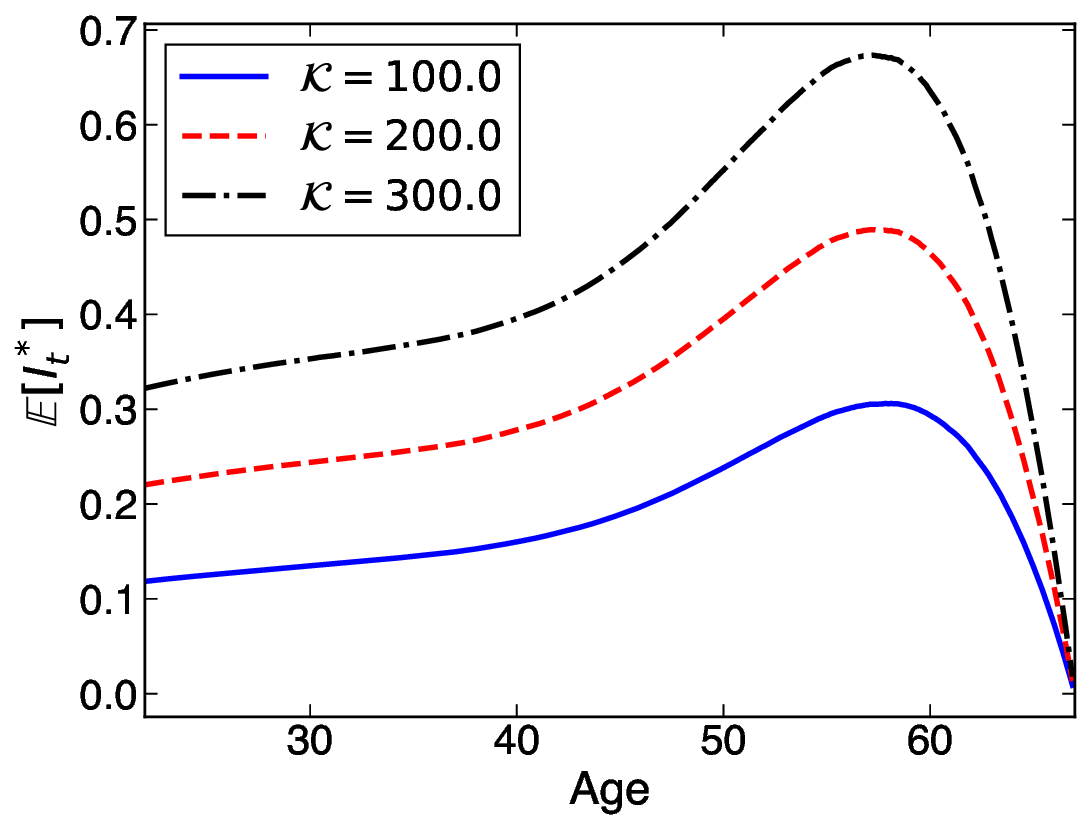}
        \caption{Insurance Premium ($I_t^*$)}
        \label{fig:G1_K_I}
    \end{subfigure}
    
    \vspace{0.6em}
    
    \begin{subfigure}[b]{0.32\textwidth}
        \centering
        \includegraphics[width=\textwidth]{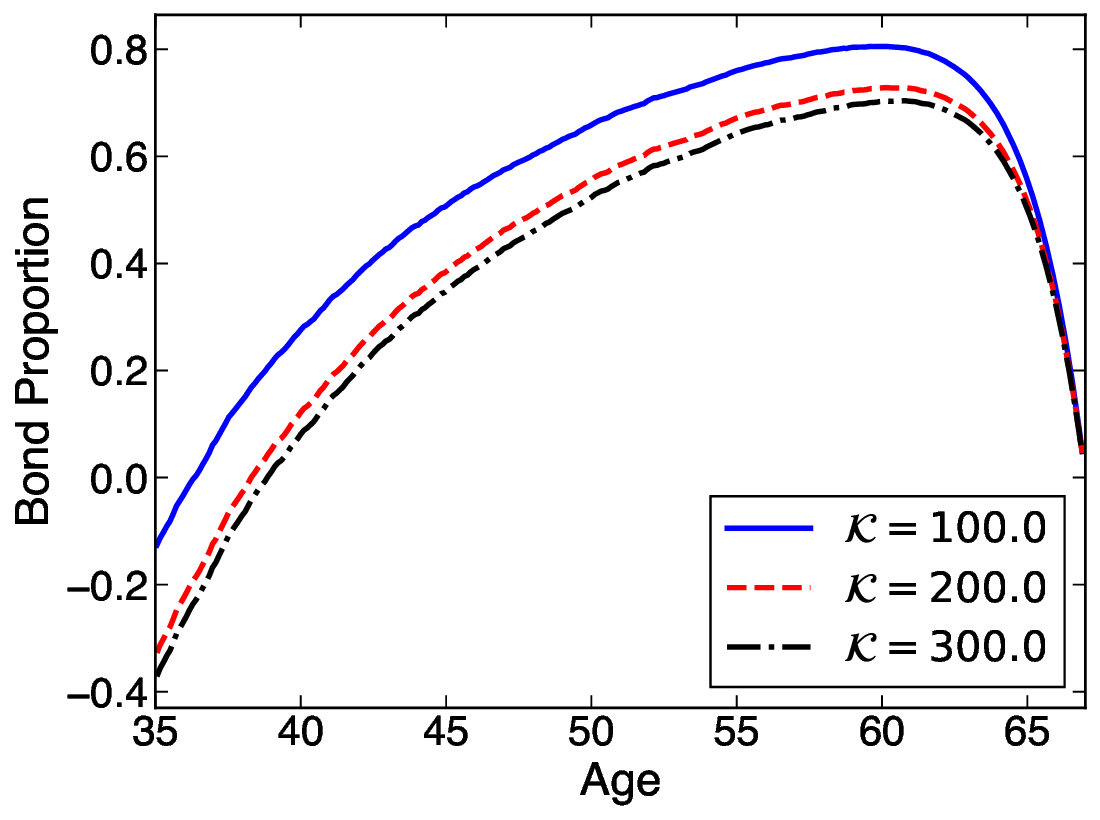}
        \caption{Bond Proportion}
        \label{fig:G1_K_prop_piB}
    \end{subfigure}
    \hfill
    \begin{subfigure}[b]{0.32\textwidth}
        \centering
        \includegraphics[width=\textwidth]{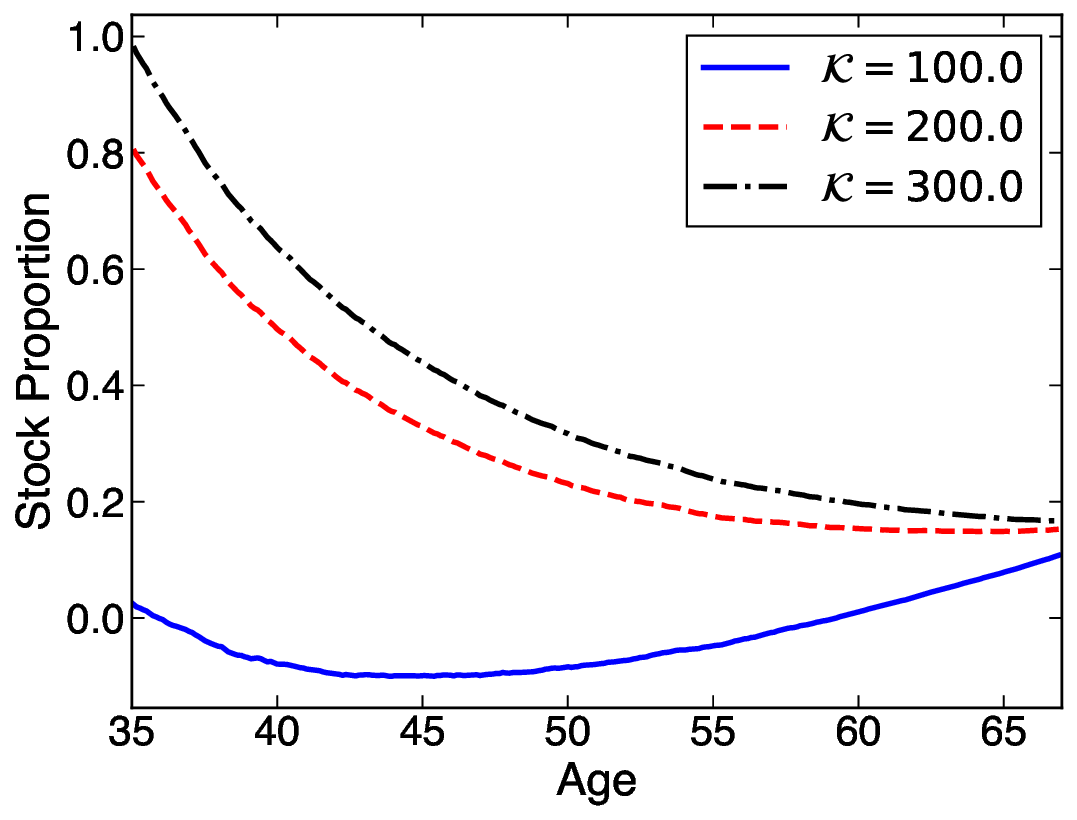}
        \caption{Stock Proportion}
        \label{fig:G1_K_prop_piS}
    \end{subfigure}
    \hfill
    \begin{subfigure}[b]{0.32\textwidth}
        \centering
        \includegraphics[width=\textwidth]{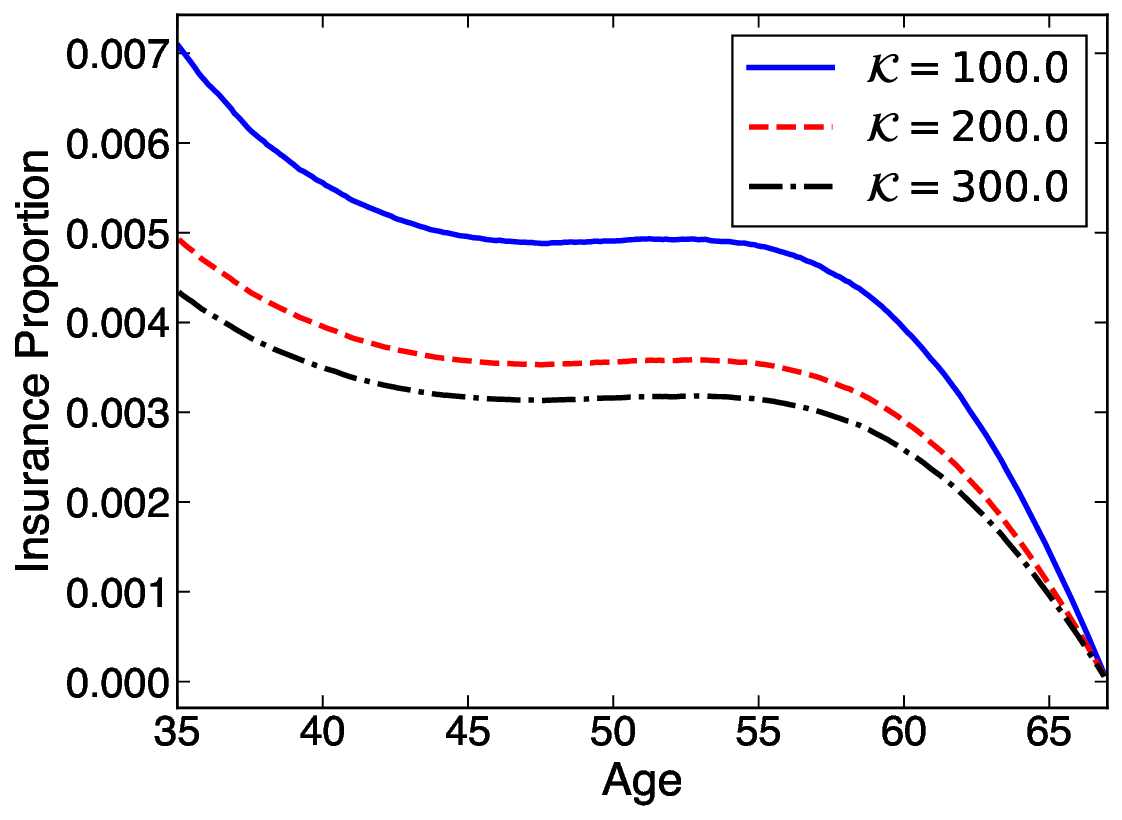}
        \caption{Insurance Proportion}
        \label{fig:G1_K_prop_I}
    \end{subfigure}
    \caption{Sensitivity to the target expected payout $\mathcal{K}$. The top row reports amounts, and the bottom row reports proportions.}
    \label{fig:sens_G1_K}
\end{figure}

\subsubsection{Impact of Sharpe ratios $\lambda_r$ and $\lambda_S$}

As shown in Figure~\ref{fig:sens_G2_lamr}, a higher interest rate Sharpe ratio $\lambda_r$ raises the return on bonds and partially replaces the speculative demand for stocks. Therefore, $\pi_{Bond}^*$ rises and shifts from a short position to a long position at an earlier age (Figure~\ref{fig:G2_lamr_piB}), while $\pi_{Stock}^*$ drops moderately (Figure~\ref{fig:G2_lamr_piS}). The effect on insurance needs is very limited, suggesting that $\lambda_r$ influences the investment strategy more significantly. The proportion-to-wealth plots in Figures~\ref{fig:G2_lamr_prop_piB}--\ref{fig:G2_lamr_prop_I} also support this.

\begin{figure}[!htbp]
    \centering
    \begin{subfigure}[b]{0.32\textwidth}
        \centering
        \includegraphics[width=\textwidth]{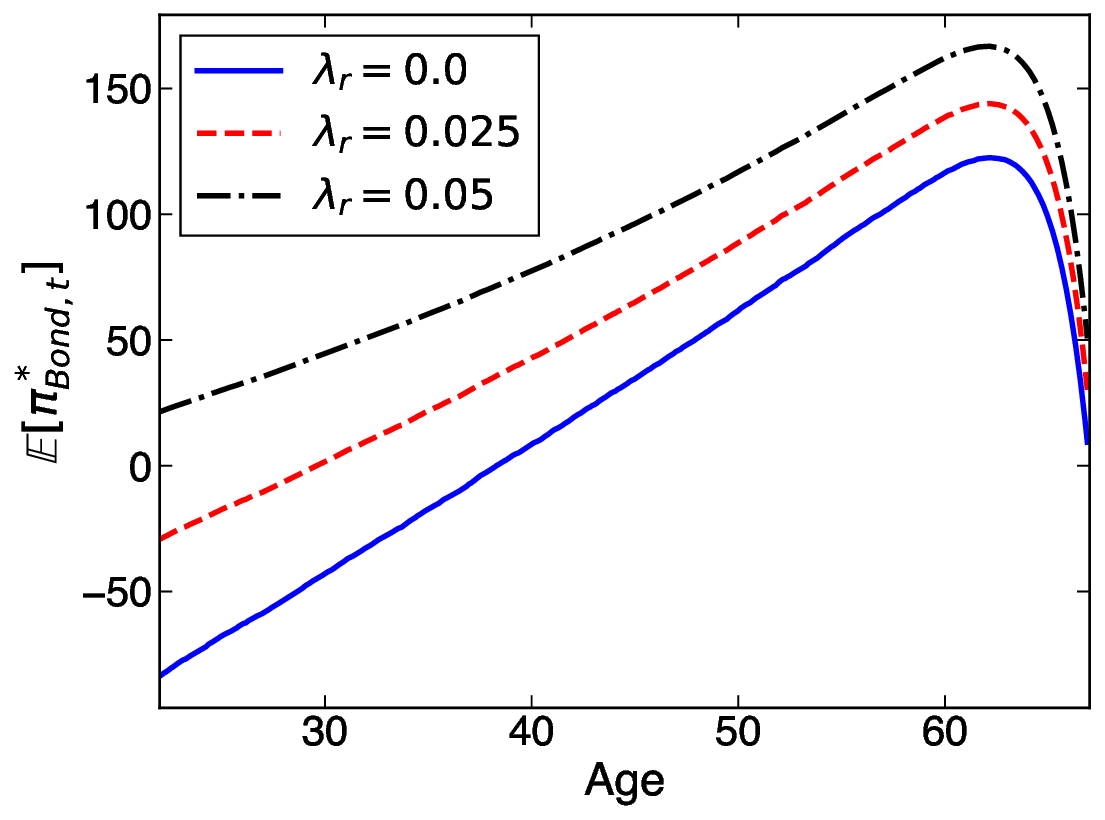}
        \caption{Bond Allocation ($\pi_{Bond}^*$)}
        \label{fig:G2_lamr_piB}
    \end{subfigure}
    \hfill
    \begin{subfigure}[b]{0.32\textwidth}
        \centering
        \includegraphics[width=\textwidth]{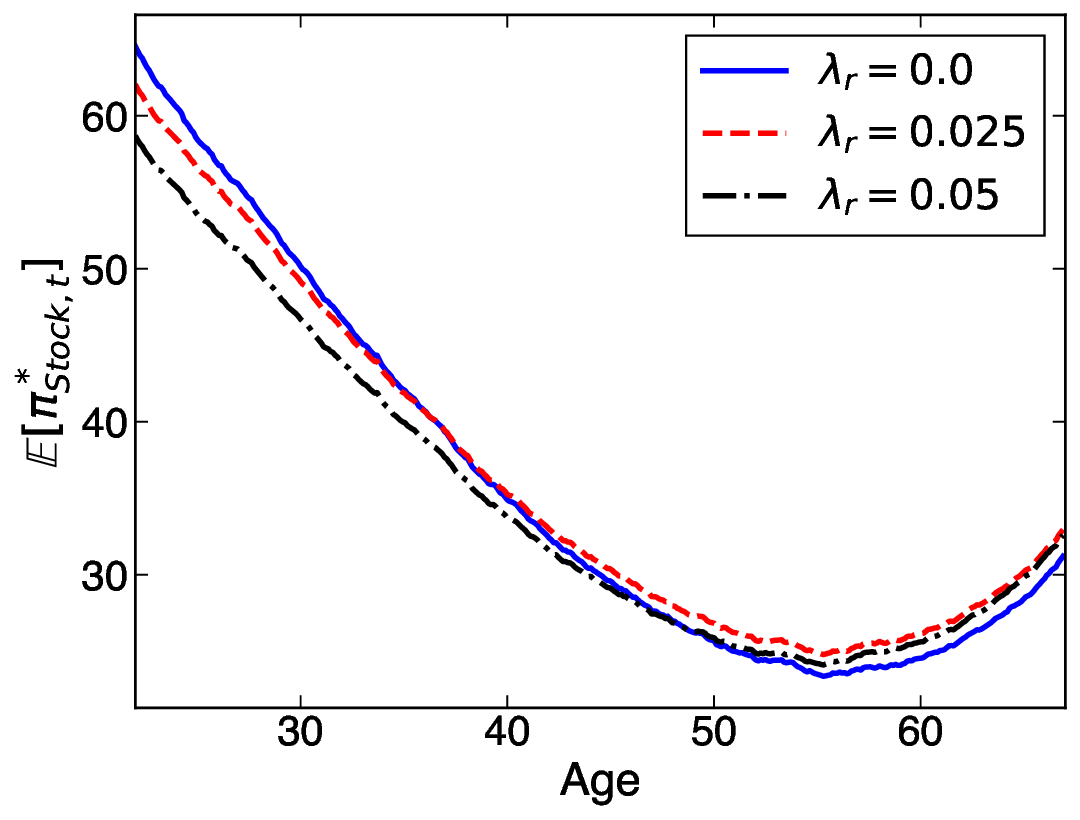}
        \caption{Stock Allocation ($\pi_{Stock}^*$)}
        \label{fig:G2_lamr_piS}
    \end{subfigure}
    \hfill
    \begin{subfigure}[b]{0.32\textwidth}
        \centering
        \includegraphics[width=\textwidth]{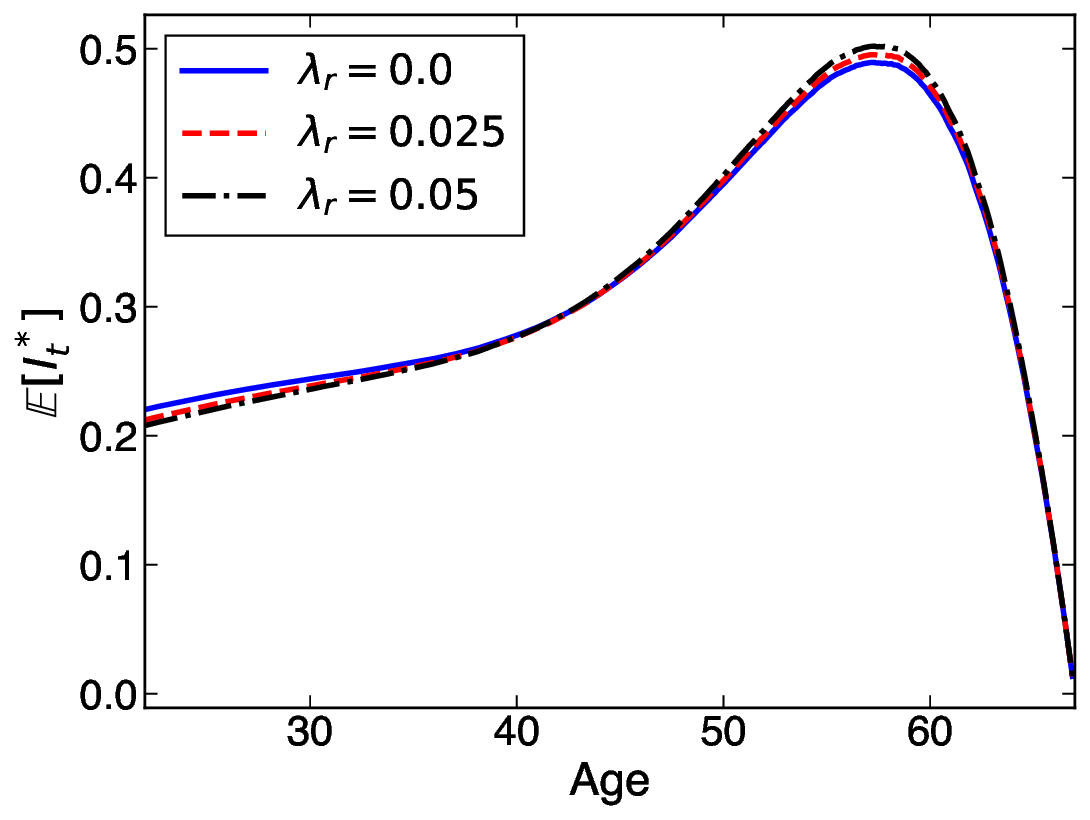}
        \caption{Insurance Premium ($I_t^*$)}
        \label{fig:G2_lamr_I}
    \end{subfigure}
    
    \vspace{0.6em}
    
    \begin{subfigure}[b]{0.32\textwidth}
        \centering
        \includegraphics[width=\textwidth]{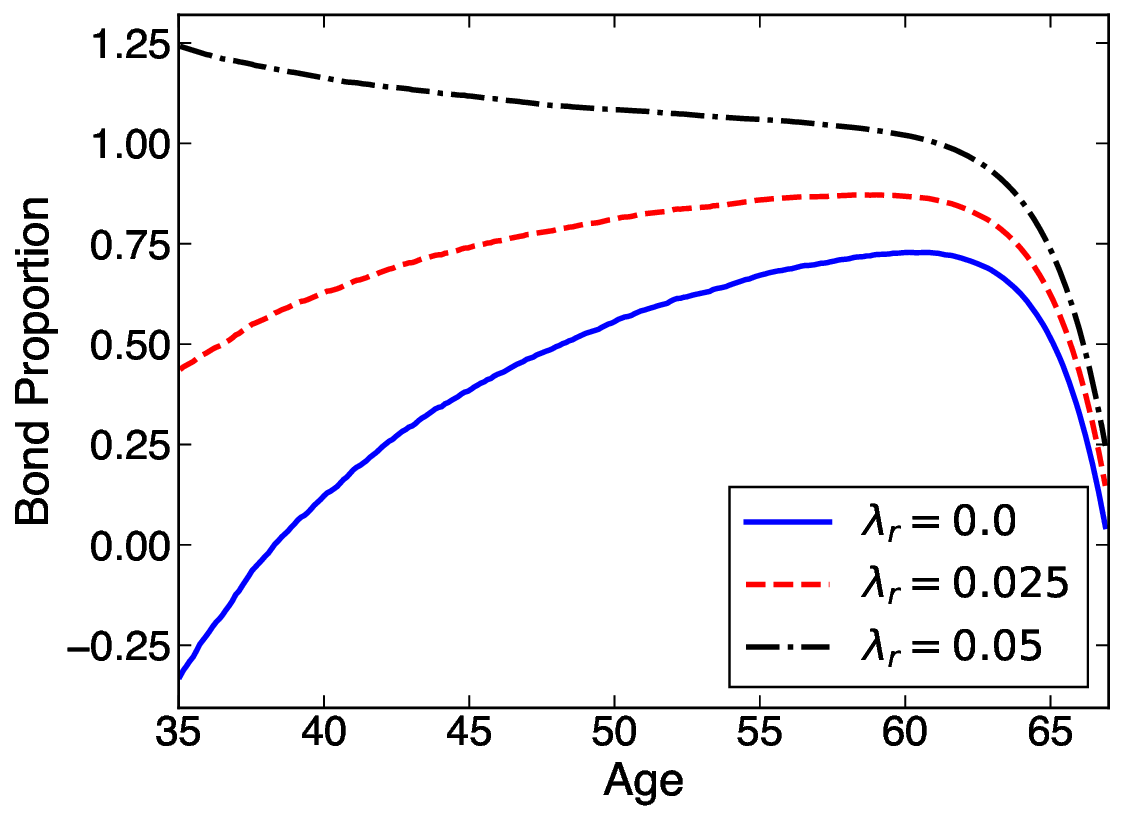}
        \caption{Bond Proportion}
        \label{fig:G2_lamr_prop_piB}
    \end{subfigure}
    \hfill
    \begin{subfigure}[b]{0.32\textwidth}
        \centering
        \includegraphics[width=\textwidth]{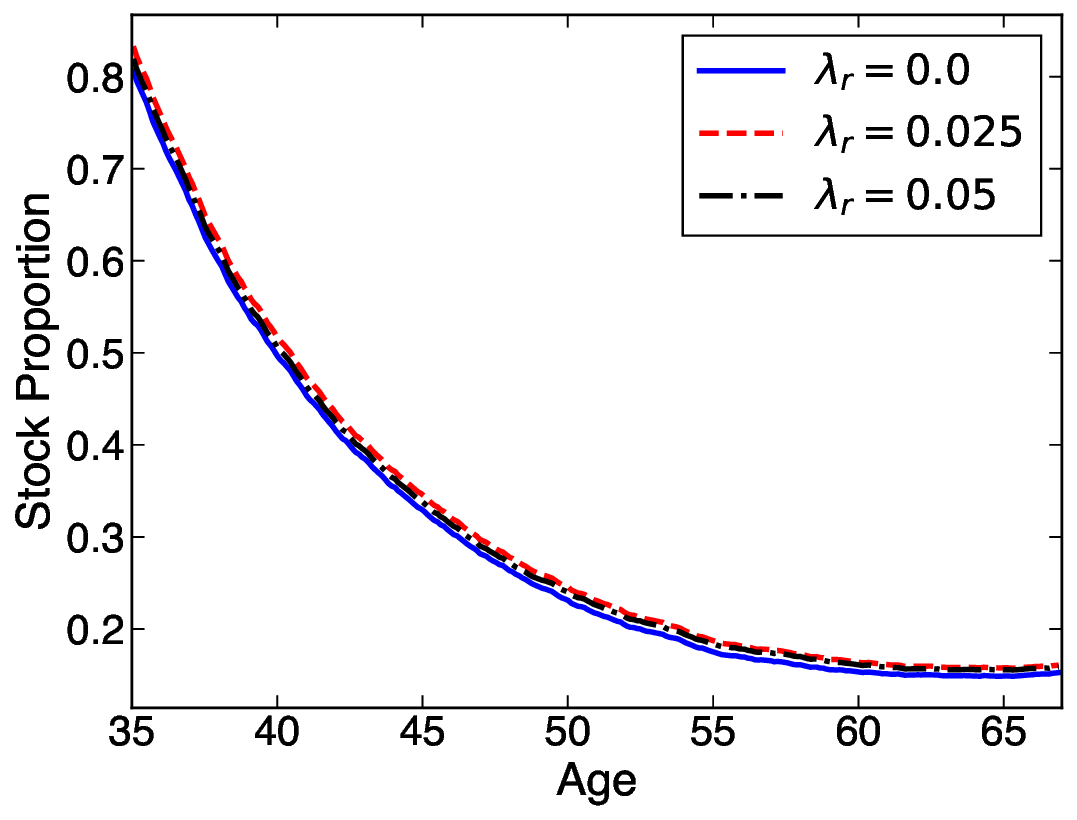}
        \caption{Stock Proportion}
        \label{fig:G2_lamr_prop_piS}
    \end{subfigure}
    \hfill
    \begin{subfigure}[b]{0.32\textwidth}
        \centering
        \includegraphics[width=\textwidth]{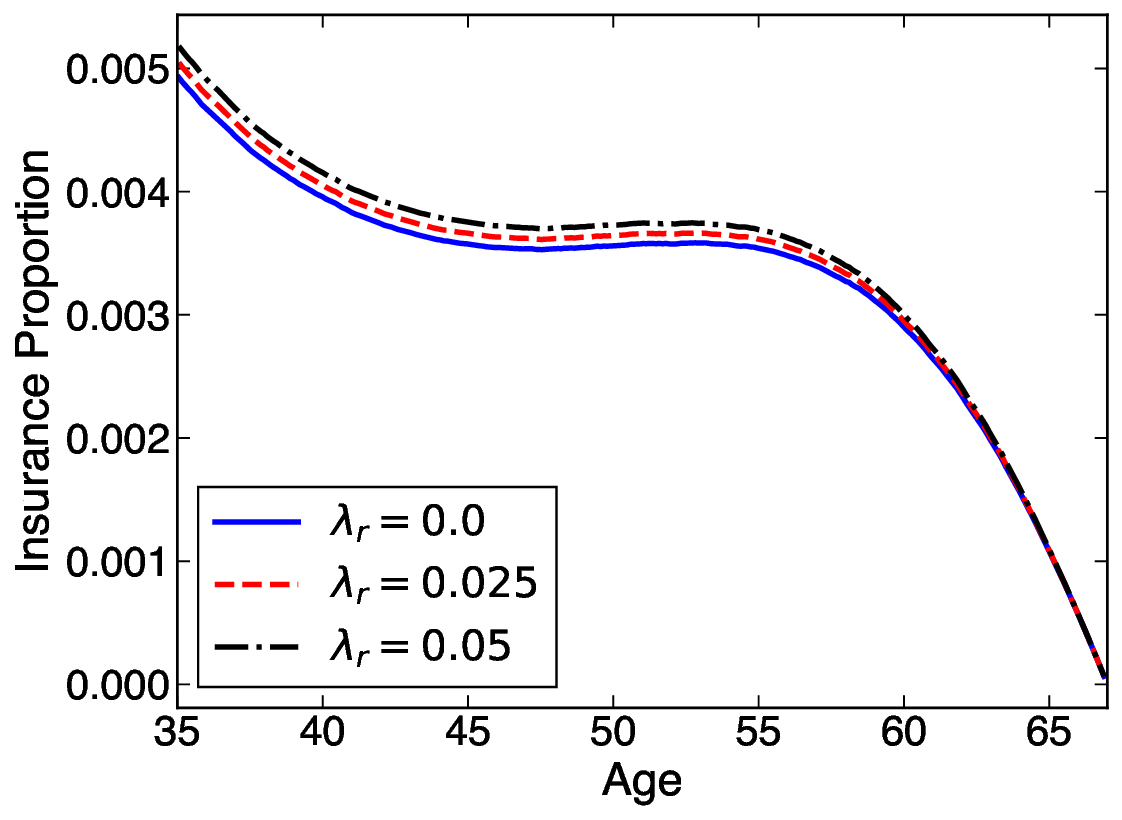}
        \caption{Insurance Proportion}
        \label{fig:G2_lamr_prop_I}
    \end{subfigure}
    \caption{Sensitivity to the interest-rate market price of risk $\lambda_r$. The top row reports amounts, and the bottom row reports proportions.}
    \label{fig:sens_G2_lamr}
\end{figure}

\begin{figure}[!htbp]
    \centering
    \begin{subfigure}[b]{0.32\textwidth}
        \centering
        \includegraphics[width=\textwidth]{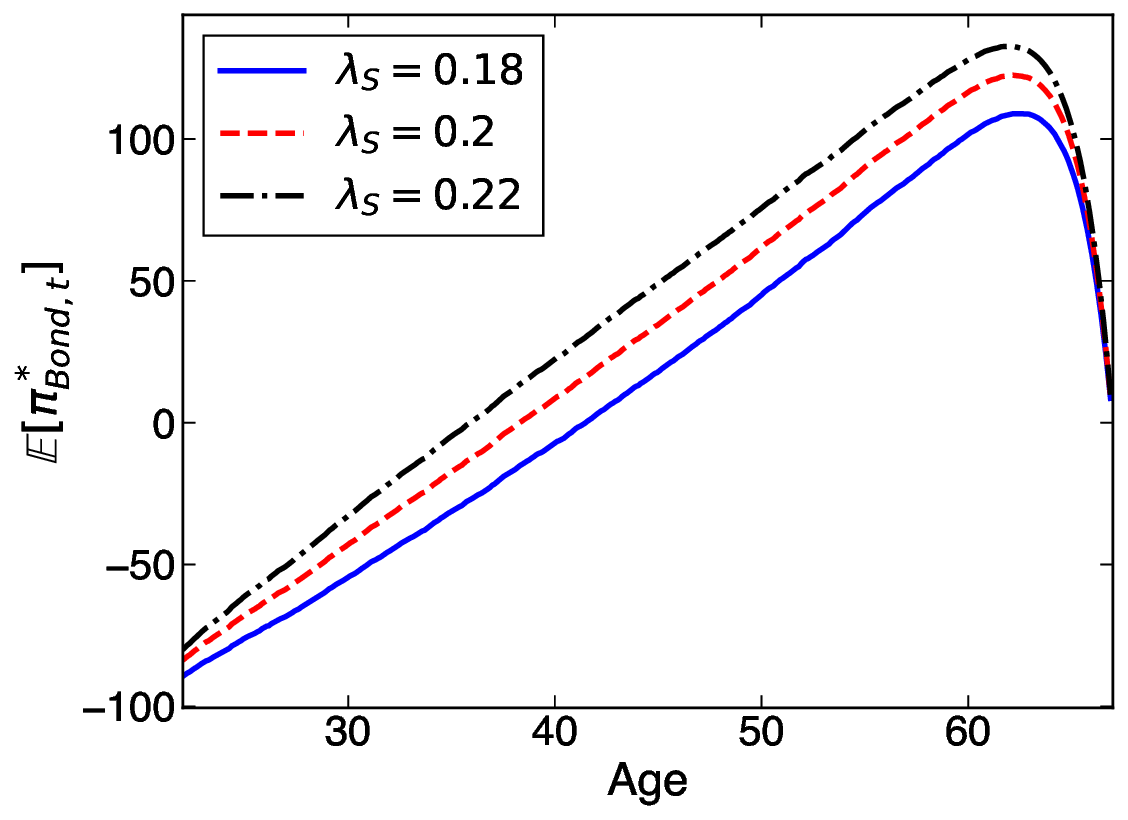}
        \caption{Bond Allocation ($\pi_{Bond}^*$)}
        \label{fig:G3_lamS_piB}
    \end{subfigure}
    \hfill
    \begin{subfigure}[b]{0.32\textwidth}
        \centering
        \includegraphics[width=\textwidth]{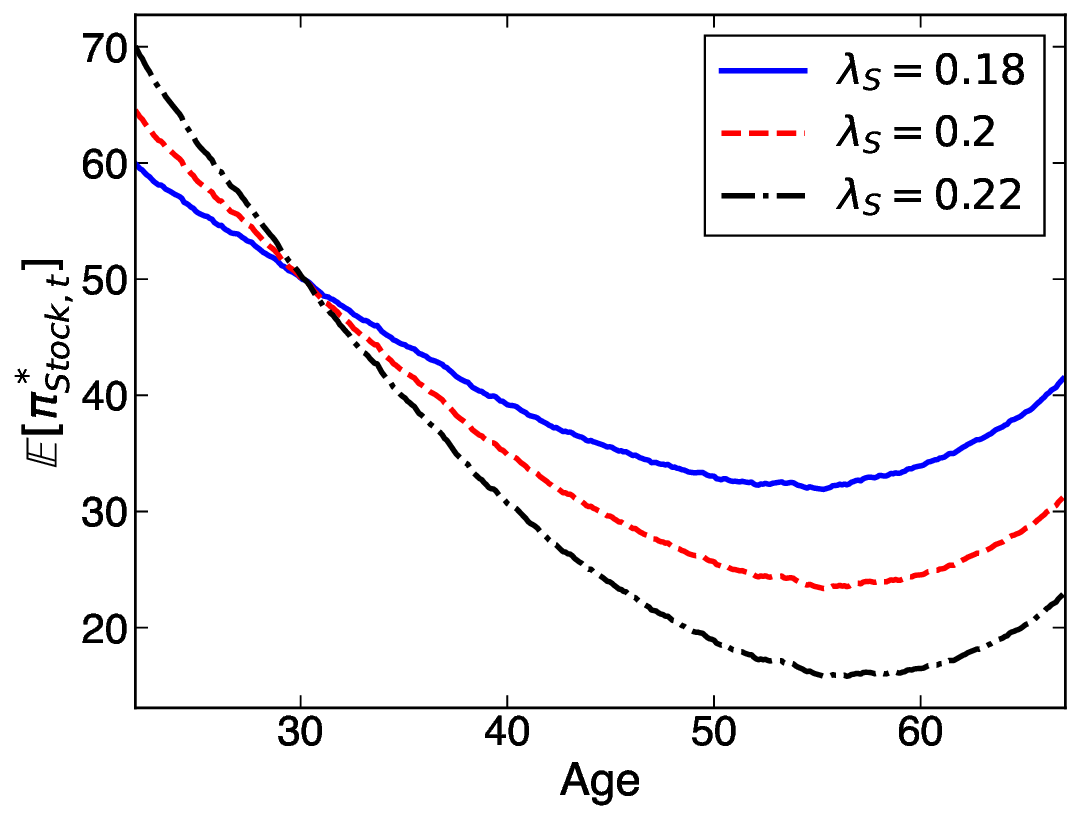}
        \caption{Stock Allocation ($\pi_{Stock}^*$)}
        \label{fig:G3_lamS_piS}
    \end{subfigure}
    \hfill
    \begin{subfigure}[b]{0.32\textwidth}
        \centering
        \includegraphics[width=\textwidth]{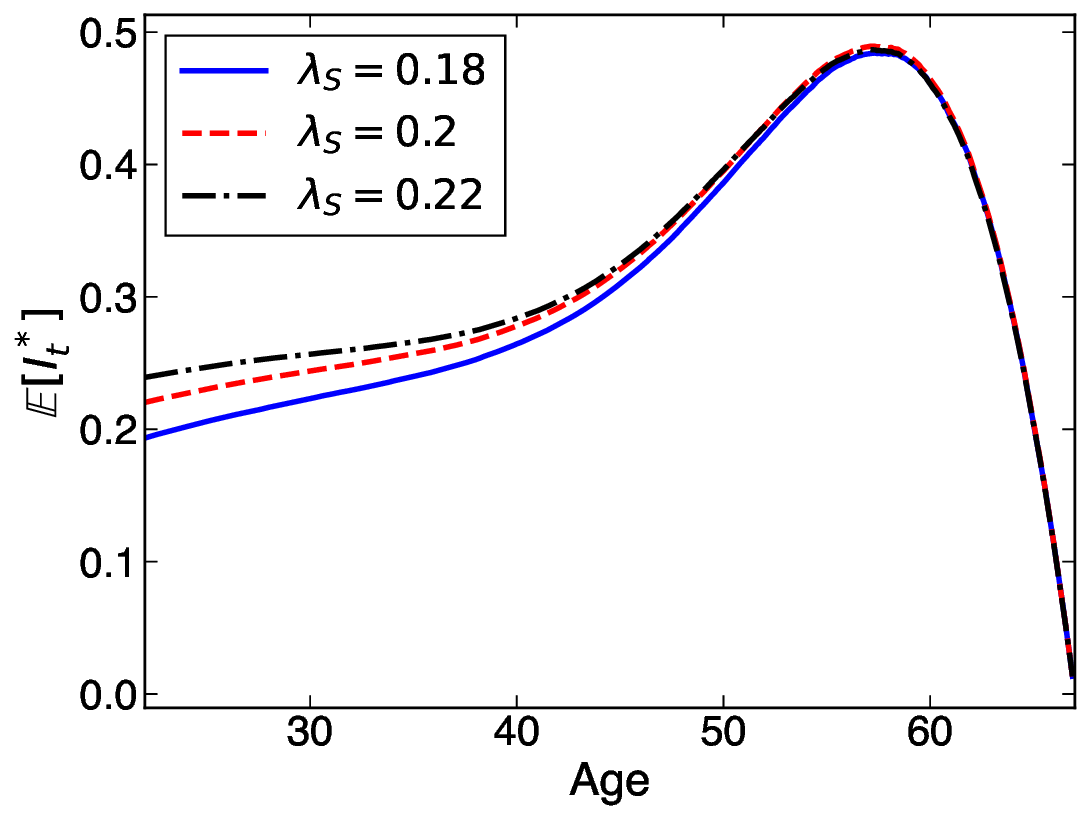}
        \caption{Insurance Premium ($I_t^*$)}
        \label{fig:G3_lamS_I}
    \end{subfigure}
    
    \vspace{0.6em}
    
    \begin{subfigure}[b]{0.32\textwidth}
        \centering
        \includegraphics[width=\textwidth]{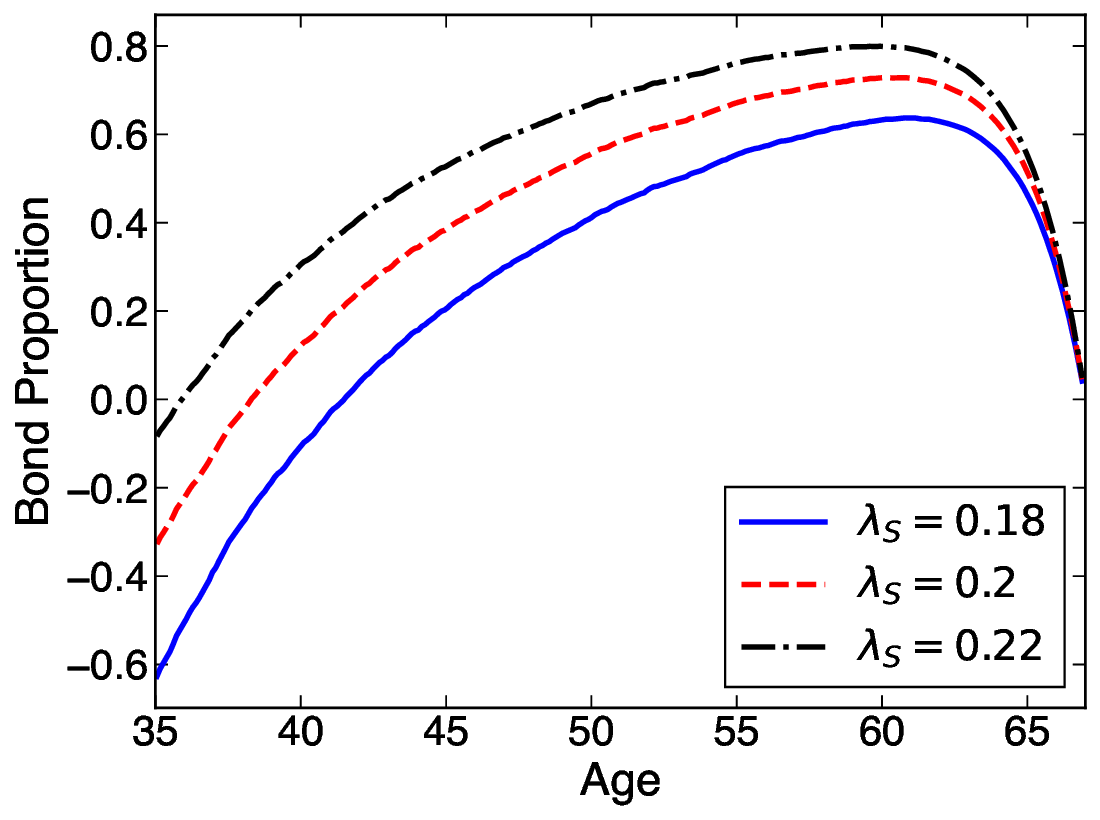}
        \caption{Bond Proportion}
        \label{fig:G3_lamS_prop_piB}
    \end{subfigure}
    \hfill
    \begin{subfigure}[b]{0.32\textwidth}
        \centering
        \includegraphics[width=\textwidth]{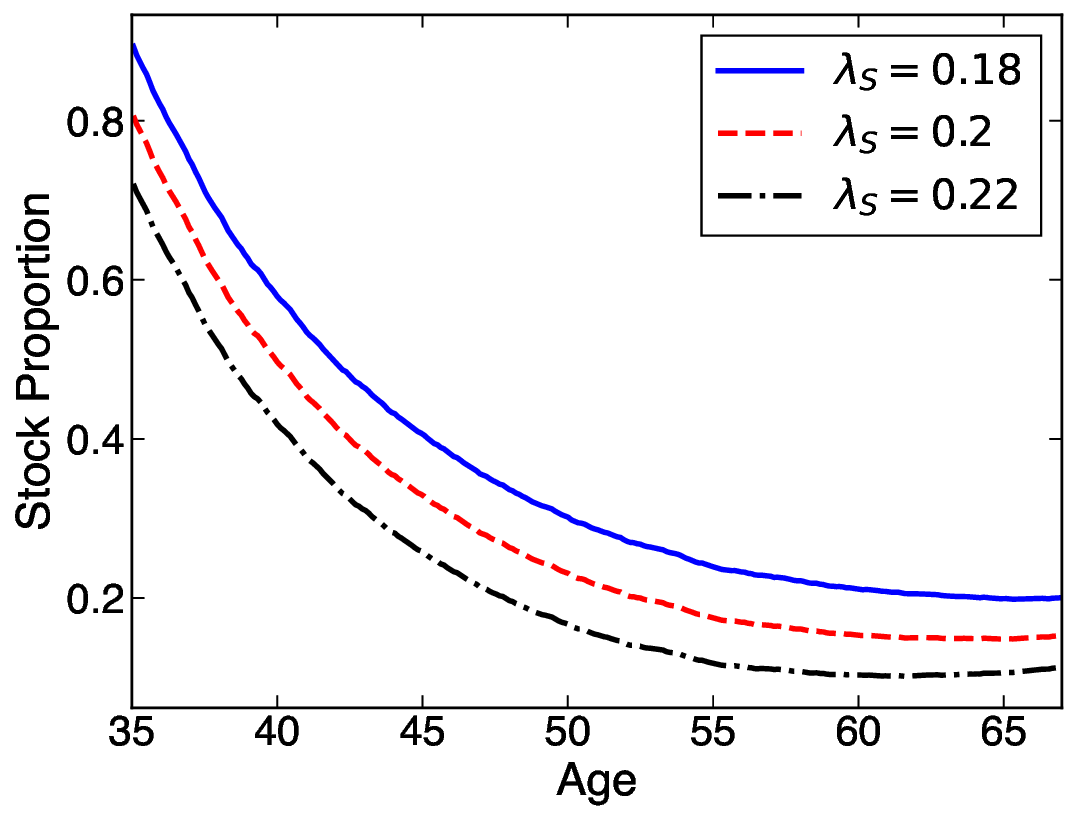}
        \caption{Stock Proportion}
        \label{fig:G3_lamS_prop_piS}
    \end{subfigure}
    \hfill
    \begin{subfigure}[b]{0.32\textwidth}
        \centering
        \includegraphics[width=\textwidth]{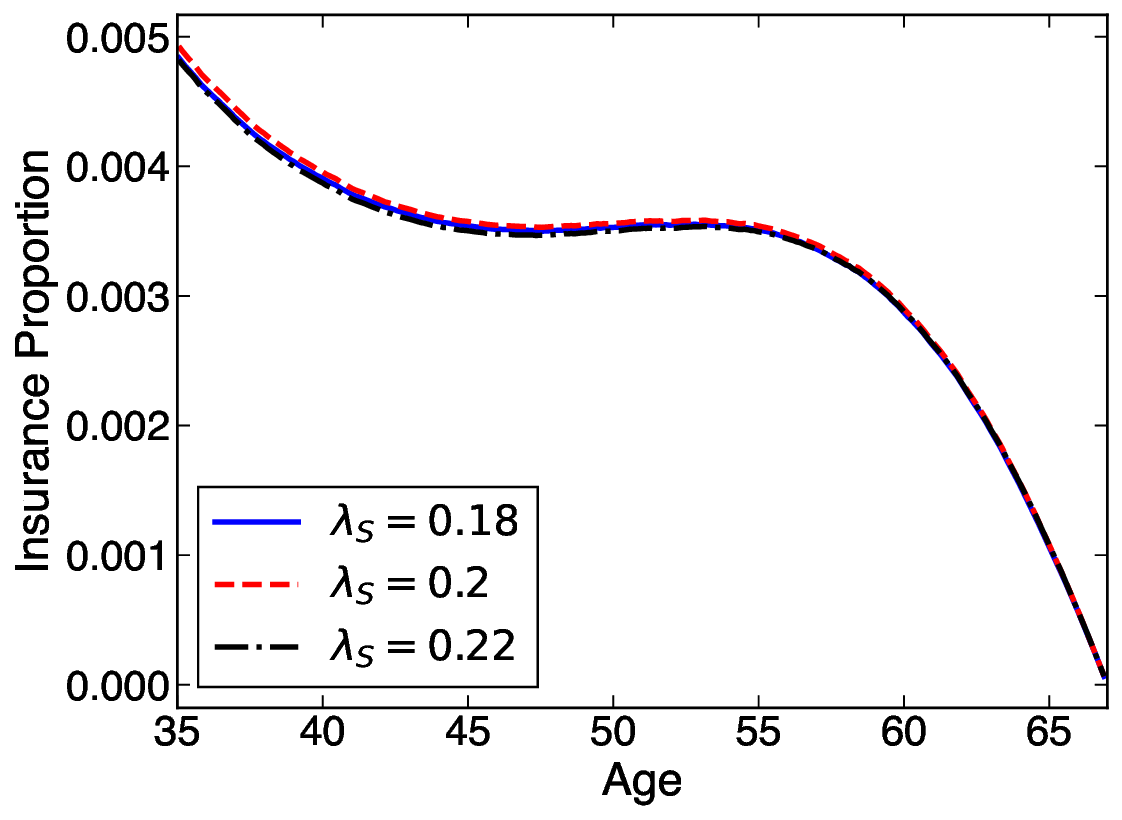}
        \caption{Insurance Proportion}
        \label{fig:G3_lamS_prop_I}
    \end{subfigure}
    \caption{Sensitivity to the equity market price of risk $\lambda_S$. The top row reports amounts, and the bottom row reports proportions.}
    \label{fig:sens_G3_lamS}
\end{figure}

A higher stock Sharpe ratio $\lambda_S$ raises the expected return of stocks per position. Since our retirement target is fixed, a higher per-position return on stocks means individuals can achieve the same total return with smaller positions. Therefore, $\pi_{Bond}^*$ is larger and shifts to long positions at an earlier stage because the financing demand for bonds is reduced, as shown in both Figure~\ref{fig:G3_lamS_piB} and Figure~\ref{fig:G3_lamS_prop_piB}. Stock positions drop accordingly, as in Figure~\ref{fig:G3_lamS_piS} and Figure~\ref{fig:G3_lamS_prop_piS}. The effect of $\lambda_S$ on life insurance is limited, compared to the more significant impact on investment strategy.

\subsubsection{Impact of labor income dynamics ($\xi_0$ and $\sigma_C$)}

A higher labor income drift parameter $\xi_0$ leads to faster wage growth. Since the individual has a fixed wealth target independent of income increases, this change in expected contribution covers a larger proportion of the wealth target. Under the mean-variance framework, this leads to a reduced position in risky assets in order to minimize variance. Therefore, as shown in Figure~\ref{fig:G4_xi0_piS}, $\pi_{Stock}^*$ drops significantly. Higher expected income increases the potential loss from premature death, and more insurance coverage is needed to protect the more valuable human capital. Insurance demand rises accordingly, as shown in both Figures~\ref{fig:G4_xi0_I} and ~\ref{fig:G4_xi0_prop_I}.

\begin{figure}[!htbp]
    \centering
    \begin{subfigure}[b]{0.32\textwidth}
        \centering
        \includegraphics[width=\textwidth]{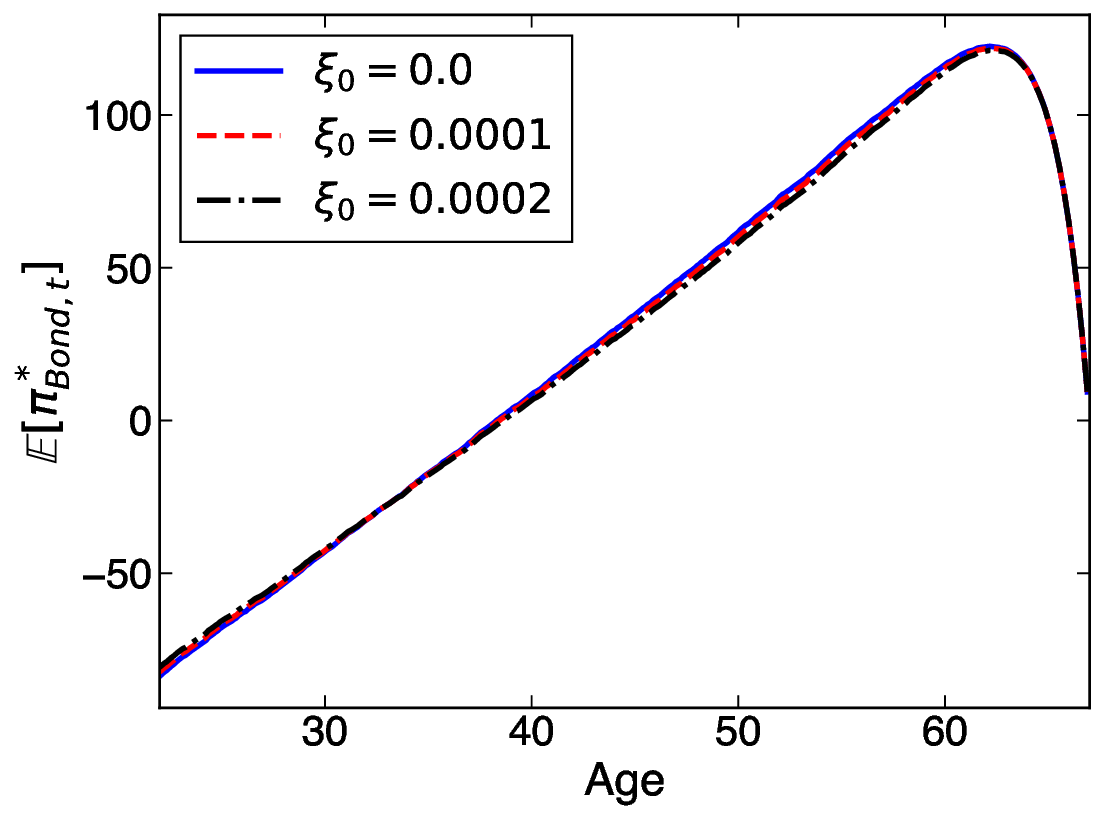}
        \caption{Bond Allocation ($\pi_{Bond}^*$)}
        \label{fig:G4_xi0_piB}
    \end{subfigure}
    \hfill
    \begin{subfigure}[b]{0.32\textwidth}
        \centering
        \includegraphics[width=\textwidth]{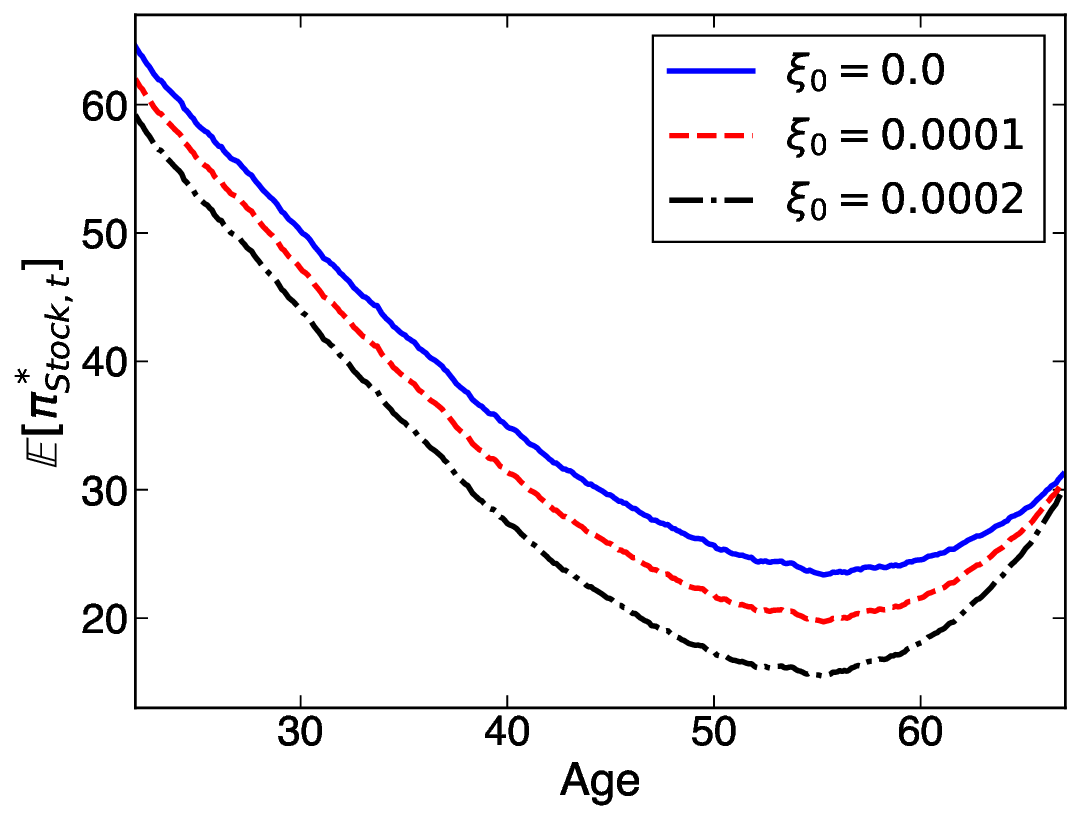}
        \caption{Stock Allocation ($\pi_{Stock}^*$)}
        \label{fig:G4_xi0_piS}
    \end{subfigure}
    \hfill
    \begin{subfigure}[b]{0.32\textwidth}
        \centering
        \includegraphics[width=\textwidth]{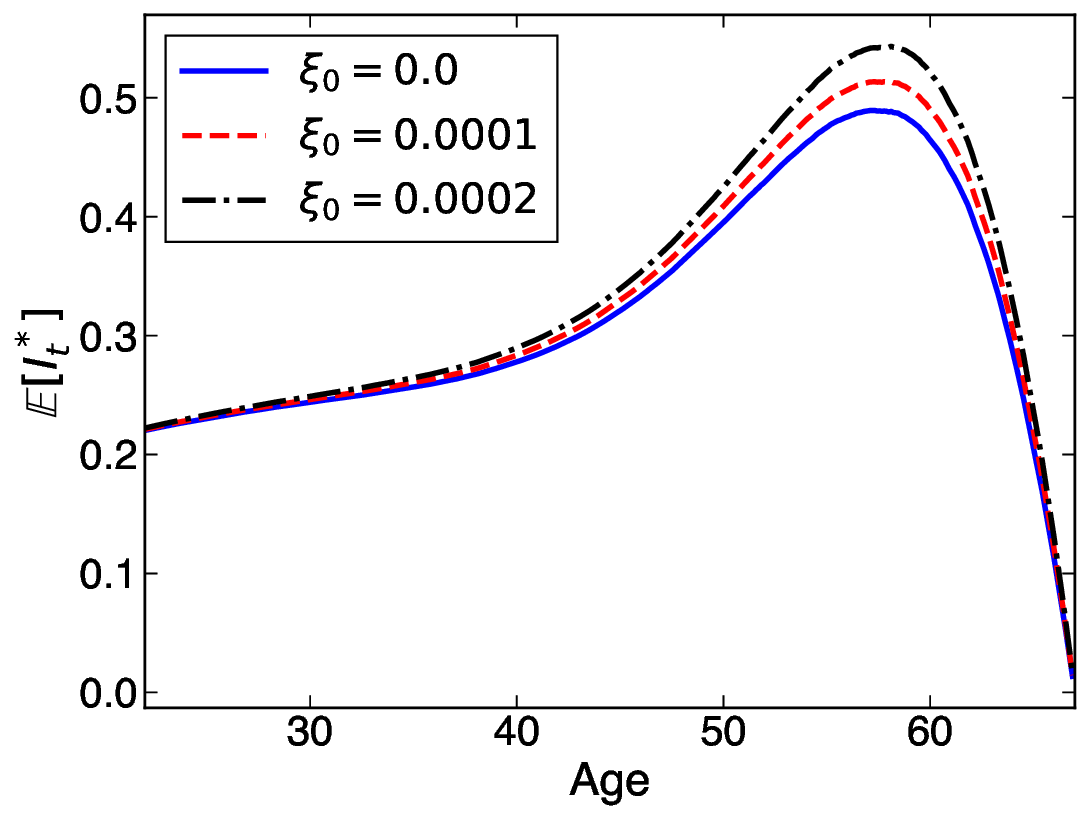}
        \caption{Insurance Premium ($I_t^*$)}
        \label{fig:G4_xi0_I}
    \end{subfigure}
    \begin{subfigure}[b]{0.32\textwidth}
        \centering
        \includegraphics[width=\textwidth]{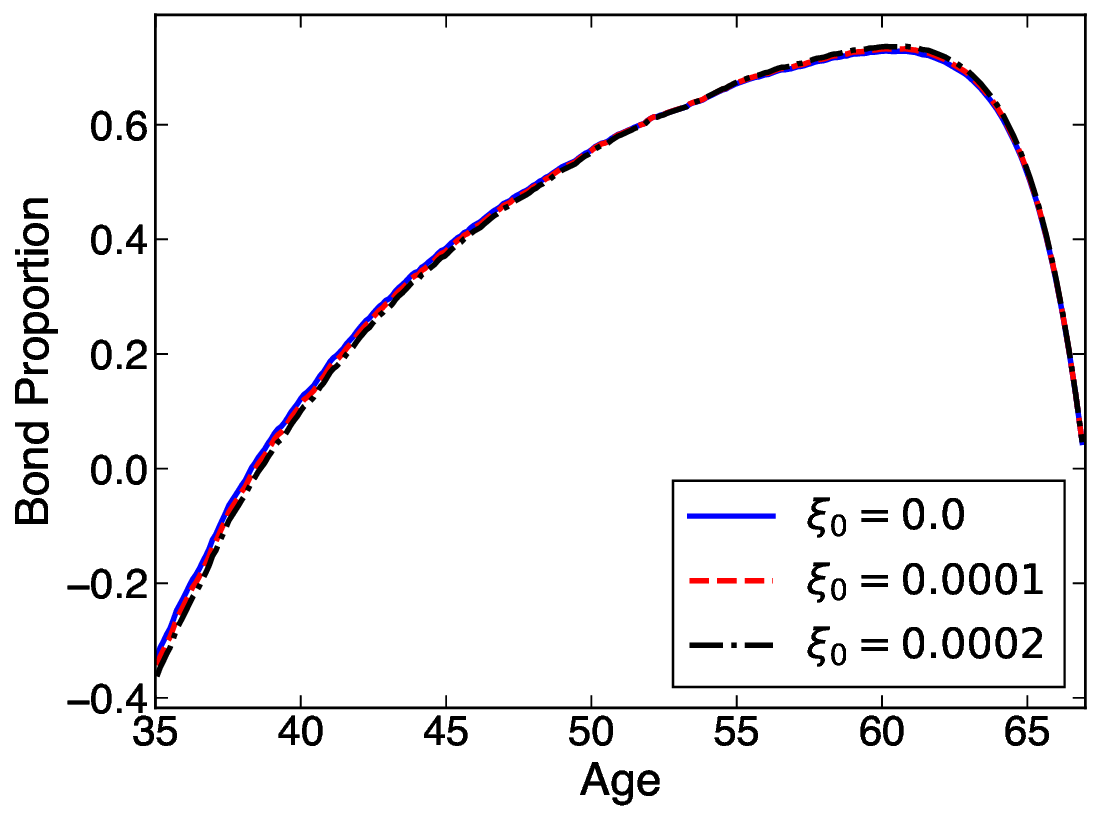}
        \caption{Bond Proportion}
        \label{fig:G4_xi0_prop_piB}
    \end{subfigure}
    \hfill
    \begin{subfigure}[b]{0.32\textwidth}
        \centering
        \includegraphics[width=\textwidth]{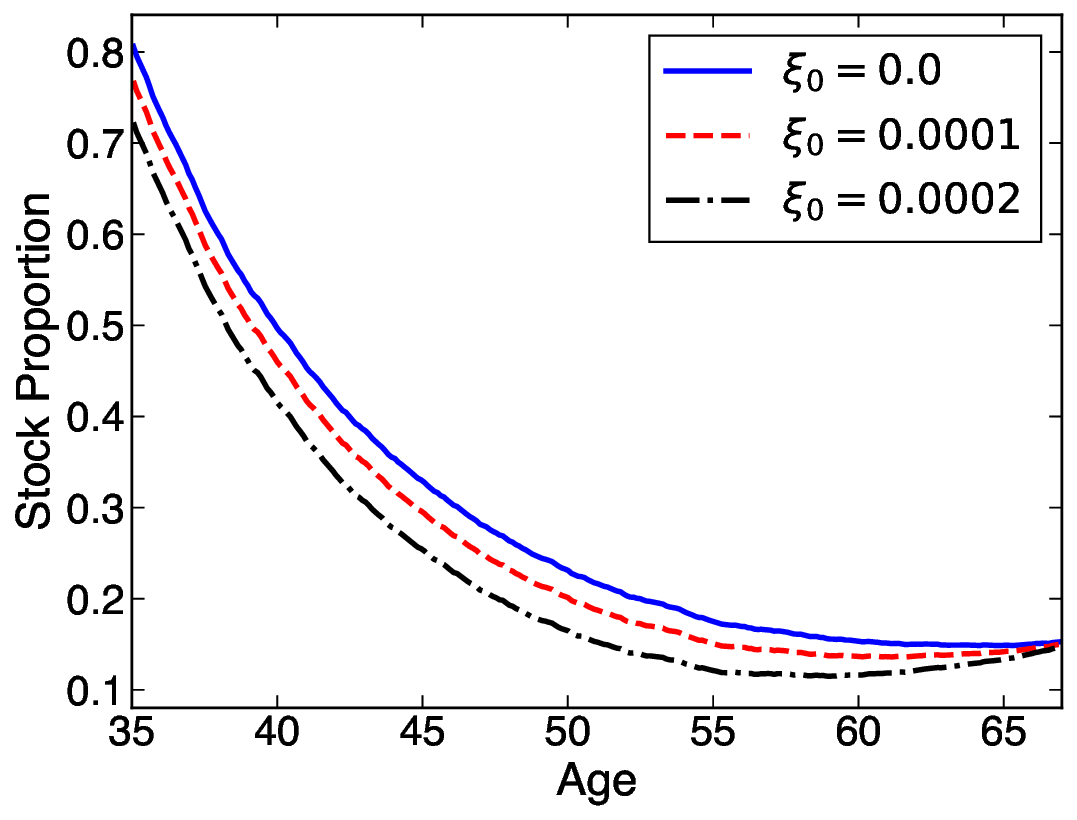}
        \caption{Stock Proportion}
        \label{fig:G4_xi0_prop_piS}
    \end{subfigure}
    \hfill
    \begin{subfigure}[b]{0.32\textwidth}
        \centering
        \includegraphics[width=\textwidth]{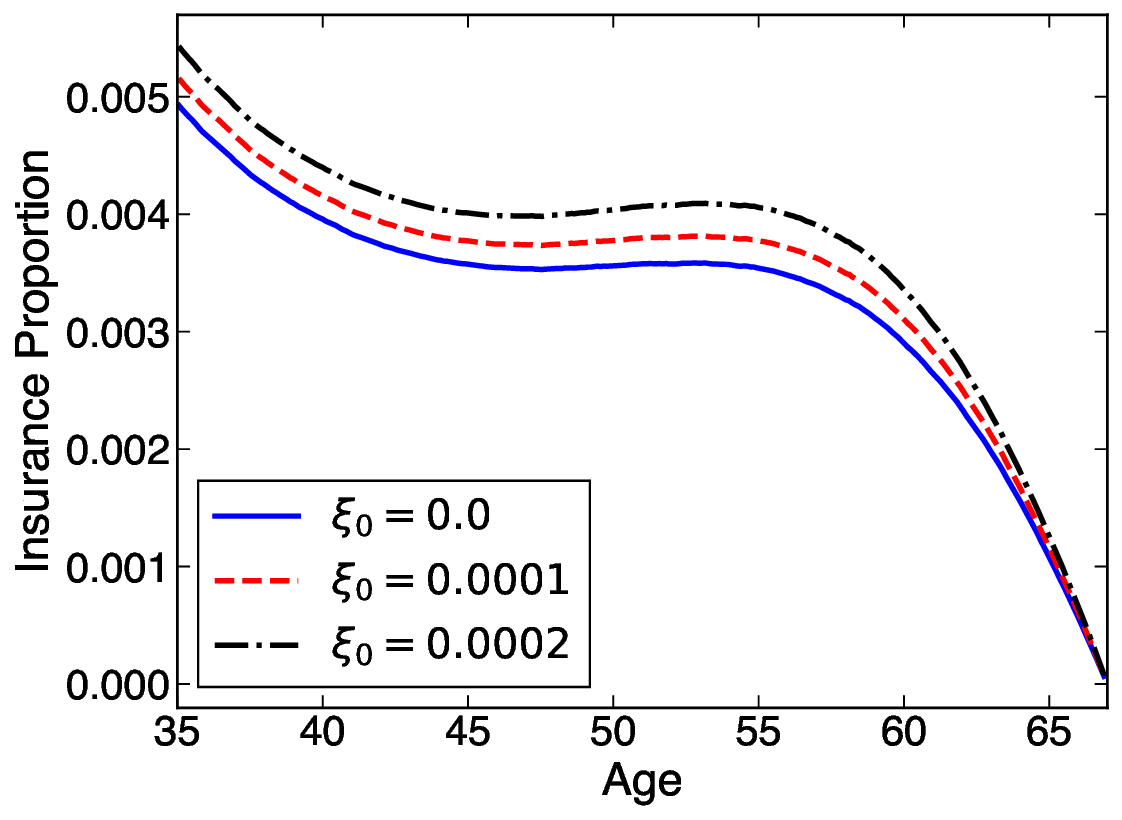}
        \caption{Insurance Proportion}
        \label{fig:G4_xi0_prop_I}
    \end{subfigure}
    \caption{Sensitivity to the labor-income drift parameter $\xi_0$. The top row reports amounts, and the bottom row reports proportions.}
    \label{fig:sens_G4_xi0}
\end{figure}
\vspace{-0.1in}
\begin{figure}[!htbp]
    \centering
    \begin{subfigure}[b]{0.32\textwidth}
        \centering
        \includegraphics[width=\textwidth]{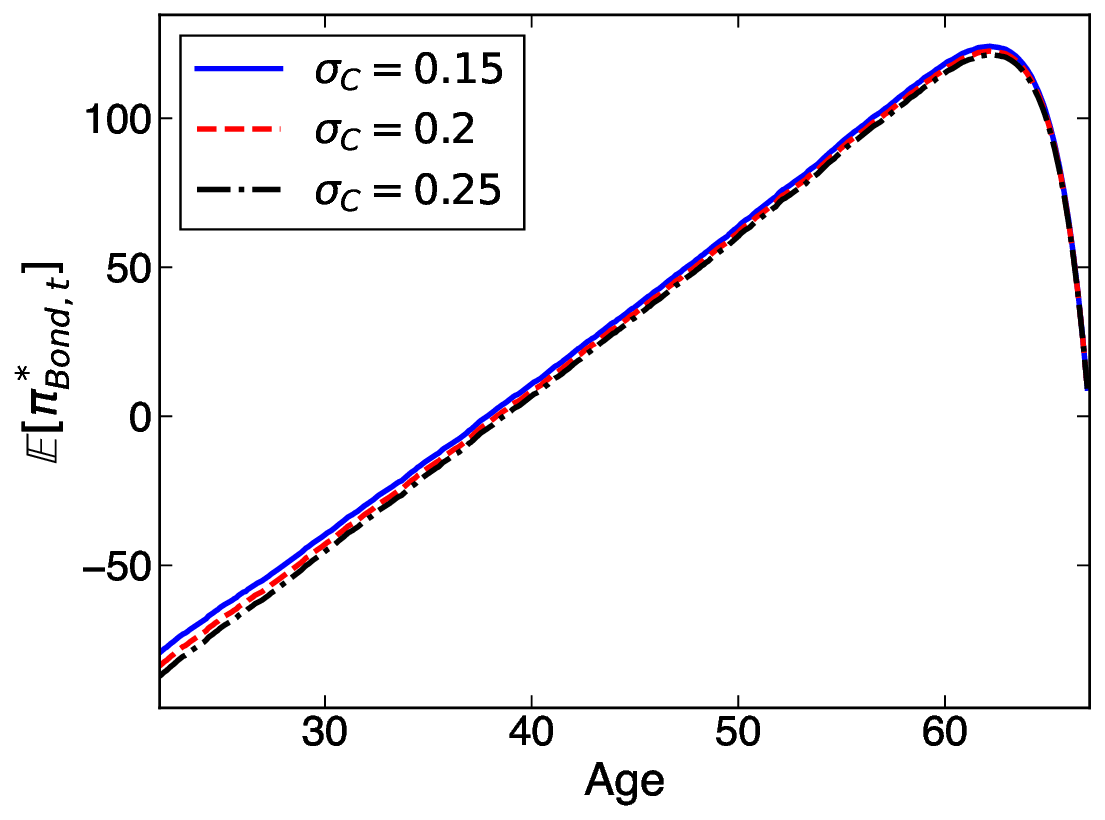}
        \caption{Bond Allocation ($\pi_{Bond}^*$)}
        \label{fig:G4_sigC_piB}
    \end{subfigure}
    \hfill
    \begin{subfigure}[b]{0.32\textwidth}
        \centering
        \includegraphics[width=\textwidth]{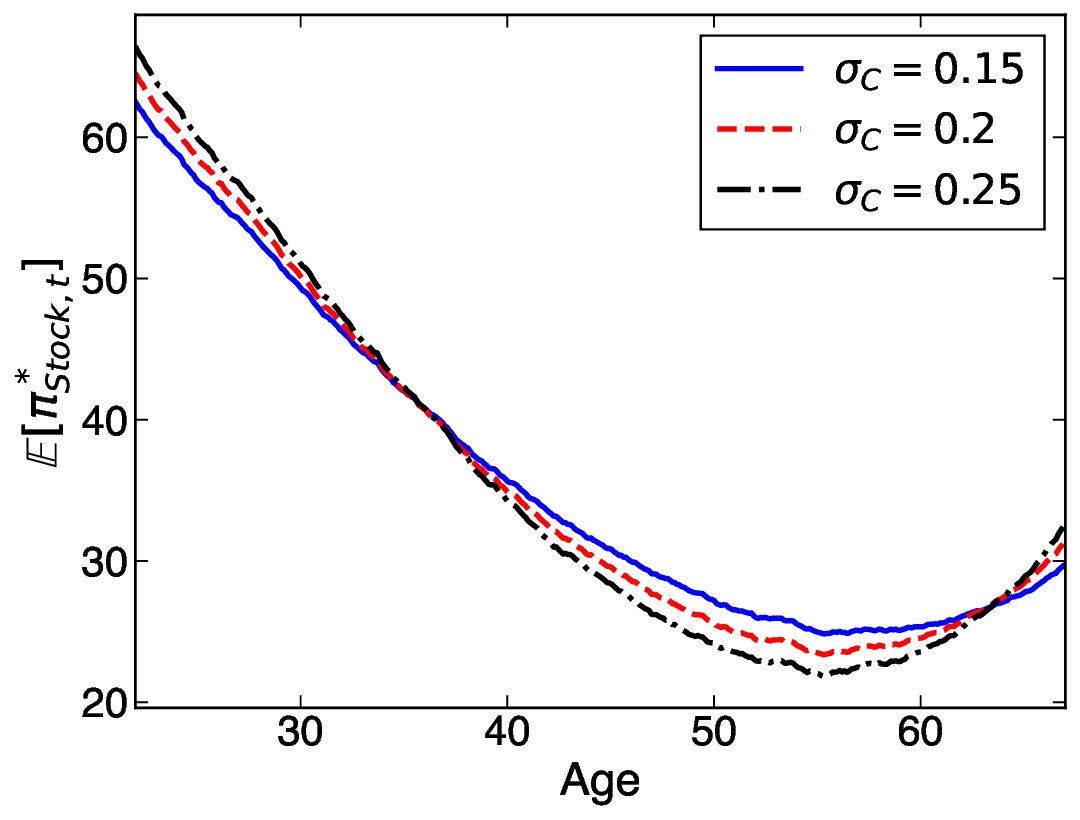}
        \caption{Stock Allocation ($\pi_{Stock}^*$)}
        \label{fig:G4_sigC_piS}
    \end{subfigure}
    \hfill
    \begin{subfigure}[b]{0.32\textwidth}
        \centering
        \includegraphics[width=\textwidth]{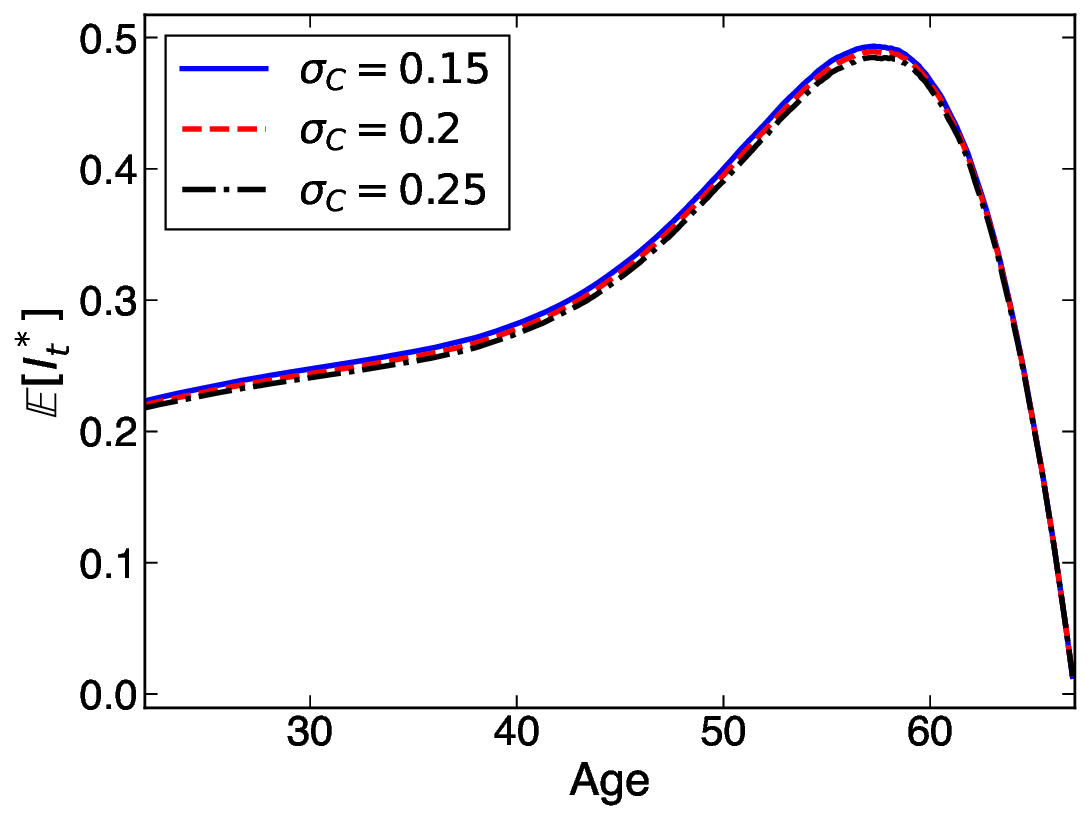}
        \caption{Insurance Premium ($I_t^*$)}
        \label{fig:G4_sigC_I}
    \end{subfigure}
    
    \begin{subfigure}[b]{0.32\textwidth}
        \centering
        \includegraphics[width=\textwidth]{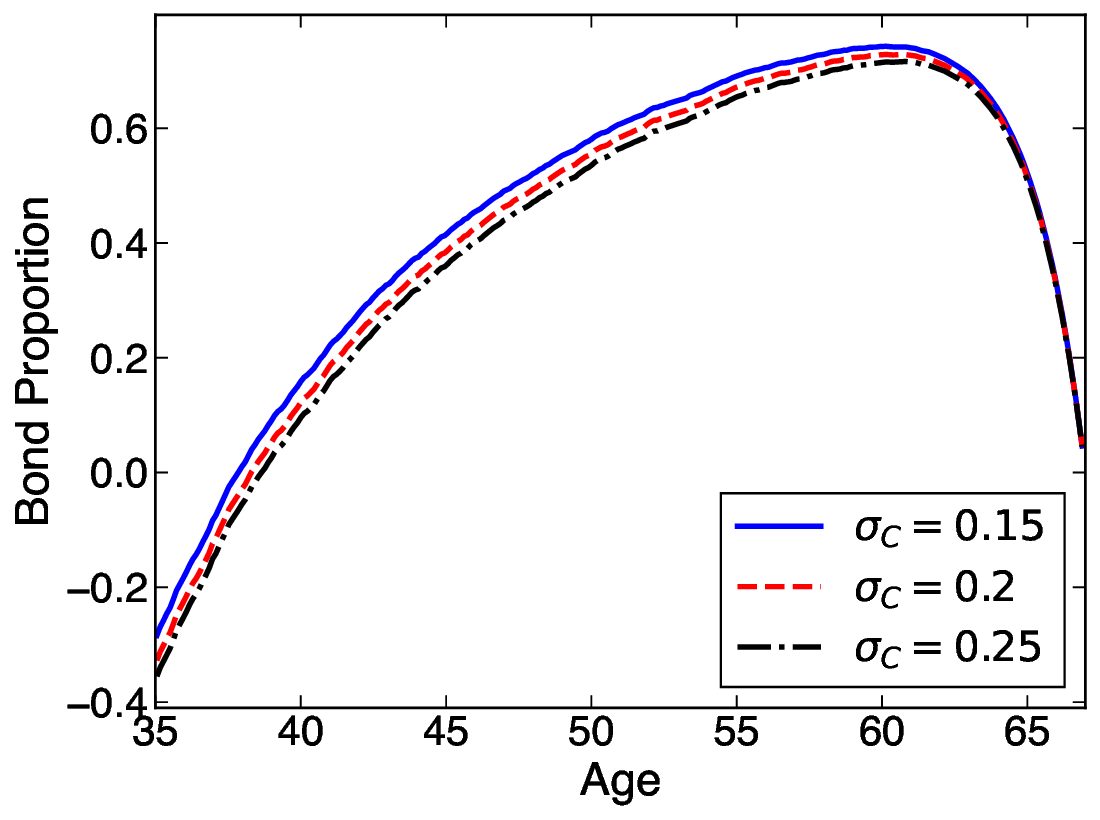}
        \caption{Bond Proportion}
        \label{fig:G4_sigC_prop_piB}
    \end{subfigure}
    \hfill
    \begin{subfigure}[b]{0.32\textwidth}
        \centering
        \includegraphics[width=\textwidth]{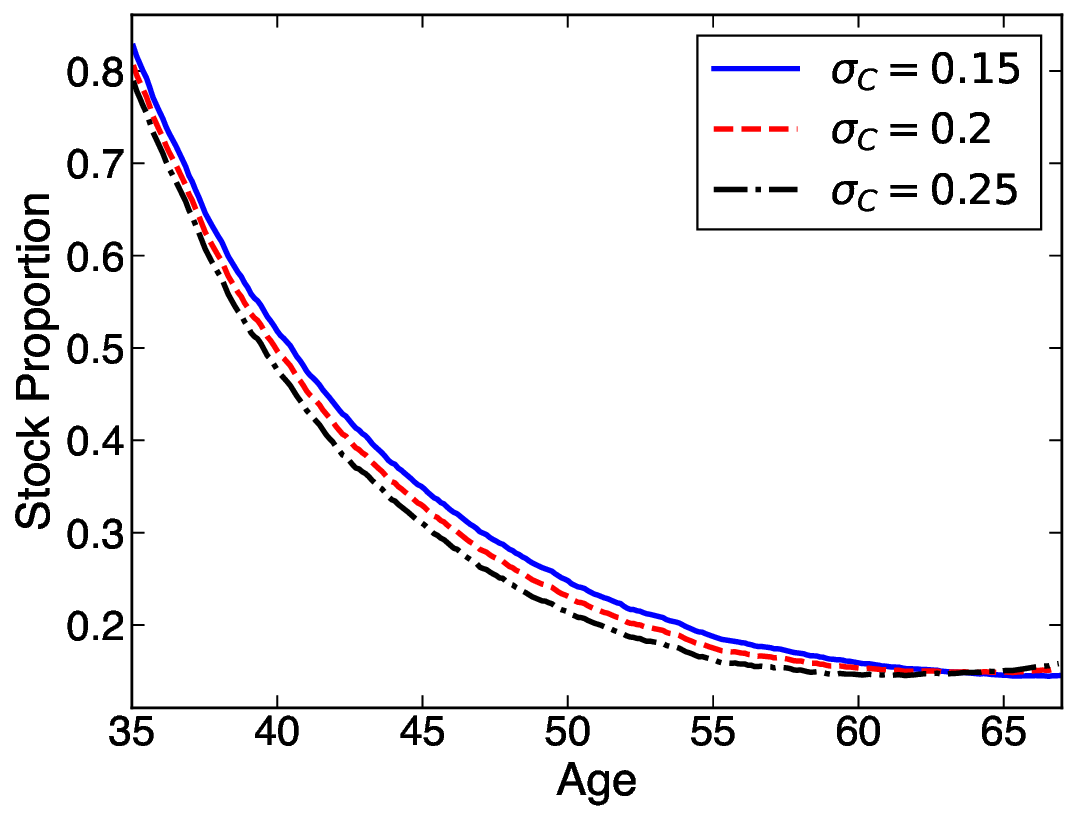}
        \caption{Stock Proportion}
        \label{fig:G4_sigC_prop_piS}
    \end{subfigure}
    \hfill
    \begin{subfigure}[b]{0.32\textwidth}
        \centering
        \includegraphics[width=\textwidth]{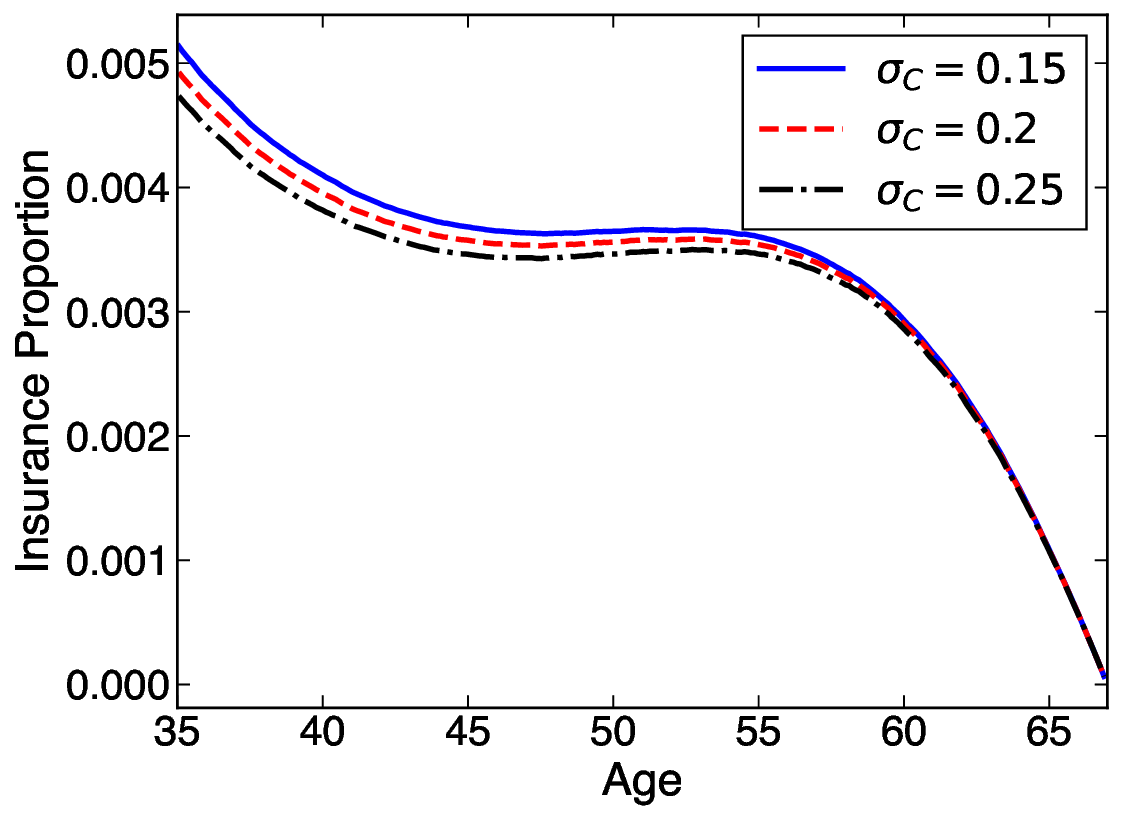}
        \caption{Insurance Proportion}
        \label{fig:G4_sigC_prop_I}
    \end{subfigure}
    \caption{Sensitivity to labor-income volatility $\sigma_C$. The top row reports amounts, and the bottom row reports proportions.}
    \label{fig:sens_G4_sigC}
\end{figure}

Figure~\ref{fig:sens_G4_sigC} examines how changes in labor income volatility $\sigma_C$ affect investment decisions. We find that income uncertainty reduces all allocations to bonds, stocks, and life insurance. Although the stock amount increases a little in the pension member's early age, the proportion invested in the stock overall decreases with the income volatility. Lastly, a less stable future income also reduces the value of human capital, making insurance protection less attractive.

\section{Conclusion}
\label{sec:conclusion}
This paper studies the pre-retirement death in a continuous-time mean-variance DC pension management problem. Pension participants face stochastic interest rates, stochastic contributions, and mortality risks. To hedge such risks, they are allowed to invest in bonds and stocks and purchase life insurance for protection. Using the martingale approach, we transform this dynamic control problem into a static one and derive closed-form optimal terminal wealth, bequest, dynamic portfolio allocations, and insurance strategies. We further prove the global optimality of our solution with a verification theorem. Our numerical analysis suggests a de-risking pattern with time in optimal financial allocation, while the optimal insurance premium is hump-shaped. This is consistent with classical results. Furthermore, we show that mortality improvement associated with the longevity trend reshapes financial and actuarial decisions by increasing the present value of future labor income. The insurance strategy is particularly sensitive to this mortality change compared with bonds and stocks. These findings provide theoretical guidance for DC pension design considering pre-retirement death and also offer practical insights for pension participants to optimize their retirement payoff.

\subsection*{Acknowledgments}
This paper is in memory of Professor Pengyu Wei, a true scholar in Actuarial Science and Mathematical Finance.


\bibliographystyle{apalike}
\bibliography{ref}

\vspace{0.3in}
\noindent Yueman Feng (Corresponding Author)\\
Department of Statistics and Actuarial Science\\
The University of Hong Kong\\
Pokfulam Road, Hong Kong SAR, China\\
Email: fengym@connect.hku.hk\\

\noindent Wenyuan Li\\
Department of Statistics and Actuarial Science\\
The University of Hong Kong\\
Pokfulam Road, Hong Kong SAR, China\\
Email: wylsaas@hku.hk\\

\noindent Mengyi Xu\\
School of Risk and Actuarial Studies\\
UNSW Business School\\
UNSW Sydney, Australia\\
Email: m.xu@unsw.edu.au\\

\noindent Pengyu Wei (In Memoriam)\\
Division of Banking and Finance\\
Nanyang Business School\\
Nanyang Technological University, Singapore\\
E-mail: pengyu.wei@ntu.edu.sg

\appendix

\counterwithin{figure}{section}
\counterwithin{table}{section}
\counterwithin{equation}{section}


\section{Proof of Proposition~\ref{prop:static-budget-necessity}}
\label{app:proof-prop-static-necessity}

\begin{proof}[Proof of Proposition~\ref{prop:static-budget-necessity}]
Define the mortality-adjusted pricing kernel
\[
\Xi_t = {}_t p_x \phi_t .
\]
Since $d({}_t p_x)=-{}_t p_x\mu_{x+t}\,dt$ and $\phi_t$ satisfies \eqref{eq:pricing-kernel-sde}, we have
\begin{equation}
\label{eq:mortality-adjusted-kernel-dynamics}
d\Xi_t
= -\Xi_t(r_t+\mu_{x+t})\,dt-\Xi_t\Lambda^\top dZ_t .
\end{equation}
Applying It\^o's product rule to $\Xi_t W_t$ and using the wealth dynamics \eqref{eq:wealth-sde-M}, the drift terms involving $r_t+\mu_{x+t}$ and $\pi_t^\top\Sigma_F\Lambda$ cancel, yielding
\begin{equation}
\label{eq:discounted-wealth-product}
d(\Xi_t W_t)
= \Xi_t(C_t-\mu_{x+t}M_t)\,dt
+ \Xi_t\bigl(\pi_t^\top\Sigma_F-W_t\Lambda^\top\bigr)dZ_t .
\end{equation}
By the Definition~\ref{def:admissible}, together with the finite-moment property of the pricing kernel over the finite horizon, integrating over $[0,T]$ and taking expectations gives
\begin{equation}
\label{eq:budget-necessity-appendix}
\mathbb{E}\left[ {}_T p_x \phi_T W_T \right]
- \mathbb{E}\left[
\int_0^T {}_s p_x \phi_s (C_s - \mu_{x+s} M_s) \, ds
\right] = w_0.
\end{equation}
Rearranging the terms yields the static budget constraint \eqref{eq:budget-necessity}, completing the necessity proof.
\end{proof}

\section{Proof of Proposition~\ref{prop:static-budget-sufficiency}}
\label{app:proof-prop-static-sufficiency}

\begin{proof}[Proof of Proposition~\ref{prop:static-budget-sufficiency}]
Define the mortality-adjusted pricing kernel
\[
\Xi_t = {}_t p_x \phi_t.
\]

For a given $\mathcal{G}_T$-measurable random variable $\eta$ and an intermediate process $M_t$ satisfying the static constraint \eqref{eq:budget-sufficiency}, define the conditional budget process
\begin{equation}
\label{eq:N-process}
N_t = \mathbb{E} \left[
\int_0^T \Xi_s (\mu_{x+s} M_s - C_s) \, ds
+ \Xi_T \eta \,\Big|\, \mathcal{G}_t
\right].
\end{equation}
The static budget constraint implies $N_0 = w_0$. Under the stated square-integrability assumptions, $N_t$ is a square-integrable $\mathbb{P}$-martingale on the market-spanned filtration. By the Martingale Representation Theorem, there exists a square-integrable predictable process $\vartheta_t$ such that
\begin{equation}
\label{eq:martingale-representation}
N_t = w_0 + \int_0^t \vartheta_s^\top dZ_s.
\end{equation}
Define the candidate wealth process by
\begin{equation}
\label{eq:replicating-wealth}
W_t = \frac{1}{\Xi_t} \left[
N_t - \int_0^t \Xi_s (\mu_{x+s} M_s - C_s) \, ds
\right].
\end{equation}
By construction, $W_T=\eta$. Applying It\^o's formula to \eqref{eq:replicating-wealth}, using \eqref{eq:mortality-adjusted-kernel-dynamics}, and matching the diffusion term in \eqref{eq:wealth-sde-M}, we choose
\begin{equation}
\label{eq:replicating-pi}
\pi_t = (\Sigma_F \Sigma_F^\top)^{-1} \Sigma_F \left(
\frac{\vartheta_t}{\Xi_t} + \Lambda W_t
\right).
\end{equation}
The vector $\vartheta_t/\Xi_t+\Lambda W_t$ has zero third component, so it lies in the row space of $\Sigma_F$ under the no-unspanned-risk setting. The existence of the inverse $(\Sigma_F \Sigma_F^\top)^{-1}$ is guaranteed by the strictly positive definite constant financial market volatility matrix in \eqref{eq:sigma-F}. 

With this choice of $\pi_t$, the dynamics of $W_t$ coincide with \eqref{eq:wealth-sde-M}. Finally, setting $I_t=\mu_{x+t}(M_t-W_t)$ gives the bequest identity \eqref{eq:bequest-M}. The square-integrability assumptions and the martingale representation ensure that the constructed controls satisfy the admissibility requirements, so the static target is dynamically attainable.
\end{proof}

\section{Proof of Lemma~\ref{lem:conditional-expectations}}
\label{app:proof-lem-conditional-expectations}

\begin{proof}[Proof of Lemma~\ref{lem:conditional-expectations}]
We demonstrate the derivation for the first conditional expectation \eqref{eq:Et-phiu}. Write $\tau = u - t$. By the definition of conditional expectation, the process
\[
\widetilde{N}_t = \mathbb{E}_t[\phi_u]
\]
must be a $\mathbb{P}$-martingale. We conjecture the solution takes the exponential affine form
\[
\widetilde{N}_t = \phi_t F(t, r_t),
\quad
F_1(t, r_t) = \exp[f_1(\tau) + f_2(\tau) r_t].
\]
Recall the dynamics of the pricing kernel and the short rate from \eqref{eq:pricing-kernel-sde} and \eqref{eq:short-rate}:
\begin{align}
\frac{d\phi_t}{\phi_t} &= - r_t \, dt - \Lambda^\top dZ_t, \quad \phi_0 = 1,\\
dr_t &= \kappa(\bar{r} - r_t) \, dt - \bar{\sigma}_r^\top dZ_t.
\end{align}
Applying It\^o's lemma to $F_1(t, r_t)$ yields:
\begin{equation}
\label{eq:dF}
dF_1(t,r_t) = F_1(t,r_t)\Big[
-(f_1'(\tau)+f_2'(\tau)r_t)
+ \kappa(\bar r-r_t)f_2(\tau)
+ \tfrac{1}{2}f_2(\tau)^2\|\bar\sigma_r\|^2
\Big]dt
- F_1(t,r_t)f_2(\tau)\bar\sigma_r^\top dZ_t.
\end{equation}
Using the product rule for It\^o processes, the dynamics of $\widetilde{N}_t = \phi_t F_1(t, r_t)$ are given by:
\begin{equation}
\label{eq:dNF}
\begin{aligned}
d(\phi_tF_1(t,r_t))
&= \phi_tF_1(t,r_t)\Big[
-(f_1'(\tau)+f_2'(\tau)r_t)
+ \kappa(\bar r-r_t)f_2(\tau)
+ \tfrac{1}{2}f_2(\tau)^2\bar{\sigma}_r^\top \bar{\sigma}_r
- r_t \\
&\quad + f_2(\tau)\Lambda^\top\bar\sigma_r\Big]dt - \phi_tF_1(t,r_t)\big(\Lambda+f_2(\tau)\bar\sigma_r\big)^\top dZ_t.
\end{aligned}
\end{equation}
For $\widetilde{N}_t$ to be a martingale, the drift term must be identically zero for all $r_t$. This yields the following system of ODEs:
\begin{equation}
\label{eq:ODE-f1-f2}
\begin{cases}
f_2'(\tau)+\kappa f_2(\tau)+1=0, & f_2(0)=0, \\[2pt]
f_1'(\tau)=\kappa\bar r f_2(\tau)
+ \tfrac{1}{2}f_2(\tau)^2\bar\sigma_r^\top\bar\sigma_r
+ f_2(\tau)\bar\sigma_r^\top\Lambda, & f_1(0)=0.
\end{cases}
\end{equation}
Solving these ODEs gives the closed-form expressions for $f_1(\tau)$ and $f_2(\tau)$.

Following a similar procedure for the second expectation, we conjecture
\[
\mathbb{E}_t[\phi_u^2] = \phi_t^2 F_2(t, r_t)
\quad \text{with} \quad
F_2(t, r_t) = \exp[f_3(\tau) + f_4(\tau) r_t].
\]
Applying It\^o's lemma to $\phi_t^2 F_2(t,r_t)$ and setting the drift to zero leads to:
\begin{equation}
\label{eq:ODE-f3-f4}
\begin{cases}
f_4'(\tau)+\kappa f_4(\tau)+2=0, & f_4(0)=0, \\[2pt]
f_3'(\tau)=\kappa\bar r f_4(\tau)
+ \tfrac{1}{2}f_4(\tau)^2\bar\sigma_r^\top\bar\sigma_r
+ \Lambda^\top\Lambda
+ 2f_4(\tau)\bar\sigma_r^\top\Lambda, & f_3(0)=0.
\end{cases}
\end{equation}
The solutions to this system yield $f_3(\tau)$ and $f_4(\tau)$.

Finally, for the third expectation involving the contribution process $C_t$, governed by \eqref{eq:contribution-sde}, we conjecture
\[
\mathbb{E}_t[\phi_u C_u] = \phi_t C_t F_3(t, r_t)
\quad \text{where} \quad
F_3(t, r_t) = \exp[f_5(t,\tau) + f_6(\tau) r_t].
\]
The product rule applied to $\phi_t C_t F_3(t, r_t)$ implies the drift must vanish, generating the following ODEs:
\begin{equation}
\label{eq:ODE-f5-f6}
\begin{cases}
f_6'(\tau)+\kappa f_6(\tau)+1=0,  \\[2pt]
\partial_\tau f_5(t,\tau)=\kappa\bar r f_6(\tau)
+ \tfrac{1}{2} f_6(\tau)^2\bar\sigma_r^\top\bar\sigma_r
+ \xi_0(x+t)+\xi_1-\sigma_3^\top\Lambda
- f_6(\tau)\sigma_3^\top\bar\sigma_r
+ f_6(\tau)\Lambda^\top\bar\sigma_r, \\
f_5(t,0)=f_6(0)=0.
\end{cases}
\end{equation}
Integrating these equations produces the closed-form functions for $f_5(t,\tau)$ and $f_6(\tau)$, thereby completing the proof.
\end{proof}

\section{Proof of Proposition~\ref{prop:static-solution}}
\label{app:proof-prop-static-solution}

\begin{proof}[Proof of Proposition~\ref{prop:static-solution}]
The Lagrangian for the constrained optimization problem \eqref{eq:dynamic-objective}--\eqref{eq:dynamic-constraint} subject to \eqref{eq:budget-necessity} is defined as:
\begin{equation}
\label{eq:Lagrangian}
\begin{aligned}
\mathcal{L}
&= {}_T p_x \mathbb{E}[W_T^2]
+ \int_0^T {}_t p_x \mu_{x+t} \mathbb{E}[M_t^2] \, dt  - \lambda_1 \left(
{}_T p_x \mathbb{E}[W_T]
+ \int_0^T {}_t p_x \mu_{x+t} \mathbb{E}[M_t] \, dt
- \mathcal{K}
\right) \\
&\quad - \lambda_2 \left(
\mathbb{E}\left[
\int_0^T {}_t p_x \phi_t (\mu_{x+t} M_t - C_t) \, dt
+ {}_T p_x \phi_T W_T
\right] - w_0
\right),
\end{aligned}
\end{equation}
where $\lambda_1, \lambda_2 \in \mathbb{R}$ are the Lagrange multipliers. Applying the derivative, the first-order conditions (FOCs) with respect to $W_T$ and $M_t$ yield:
\begin{align}
\label{eq:FOC-WT}
\frac{\partial \mathcal{L}}{\partial W_T}
&= {}_T p_x \left( 2 W_T - \lambda_1 - \lambda_2 \phi_T \right) = 0
\quad \Rightarrow \quad
W_T^* = \frac{\lambda_1}{2} + \frac{\lambda_2}{2} \phi_T,\\
\label{eq:FOC-Mt}
\frac{\partial \mathcal{L}}{\partial M_t}
&= {}_t p_x \mu_{x+t}
\left( 2 M_t - \lambda_1 - \lambda_2 \phi_t \right) = 0
\quad \Rightarrow \quad
M_t^* = \frac{\lambda_1}{2} + \frac{\lambda_2}{2} \phi_t.
\end{align}
Substituting $W_T^*$ and $M_t^*$ back into the expectation constraint \eqref{eq:dynamic-constraint} and the static budget constraint \eqref{eq:budget-necessity}, and applying Lemma~\ref{lem:conditional-expectations} through the definitions of $A$, $B$, and $H$ in \eqref{eq:A-B-H-def}, yields a system of linear equations for $\lambda_1$ and $\lambda_2$. Solving this system provides the explicit expressions \eqref{eq:lambda1-star}--\eqref{eq:lambda2-star} for $\lambda_1^*$ and $\lambda_2^*$, which completes the proof.
\end{proof}

\section{Proof of Theorem~\ref{thm:optimal-strategies}}
\label{app:proof-thm-optimal-strategies}

\begin{proof}[Proof of Theorem~\ref{thm:optimal-strategies}]
The static budget constraint \eqref{eq:budget-necessity} implies that at any time $t \in [0, T]$, the conditional expectation satisfies
\begin{equation}
\label{eq:dynamic-budget-conditional}
\mathbb{E}_t \left[
\int_t^T {}_{u-t} p_{x+t} \mu_{x+u}
\frac{\phi_u}{\phi_t} M_u \, du
+ {}_{T-t} p_{x+t} \frac{\phi_T}{\phi_t} W_T
\right]
= W_t + \mathbb{E}_t \left[
\int_t^T {}_{u-t} p_{x+t}
\frac{\phi_u}{\phi_t} C_u \, du
\right].
\end{equation}
Substituting the FOCs \eqref{eq:WT-star}--\eqref{eq:Mt-star} from Proposition~\ref{prop:static-solution} into this conditional budget constraint gives:
\begin{equation}
\label{eq:Wt-star-start}
\begin{aligned}
W_t^*
&= \mathbb{E}_t \Bigg[
\int_t^T {}_{u-t} p_{x+t} \mu_{x+u}
\frac{\phi_u}{\phi_t}
\left(\frac{\lambda_1^*}{2} + \frac{\lambda_2^*}{2} \phi_u\right) \, du
+ {}_{T-t} p_{x+t}
\frac{\phi_T}{\phi_t}
\left(\frac{\lambda_1^*}{2} + \frac{\lambda_2^*}{2} \phi_T\right)
\Bigg] \\
&\quad - \mathbb{E}_t \left[
\int_t^T {}_{u-t} p_{x+t}
\frac{\phi_u}{\phi_t} C_u \, du
\right].
\end{aligned}
\end{equation}
Noting that
\[
\frac{\phi_u}{\phi_t} \phi_u = \phi_t \left(\frac{\phi_u}{\phi_t}\right)^2,
\]
we apply Lemma~\ref{lem:conditional-expectations} to rewrite the equation in terms of $A(t, r_t)$, $B(t, r_t)$, and $H(t, r_t)$ defined in \eqref{eq:A-B-H-def}, which simplifies directly to the explicit form \eqref{eq:Wt-star-dynamic} of $W_t^*$.

To derive $\pi_t^*$, we apply It\^o's formula to
\[
W_t^* = W^*(t, r_t, \phi_t, C_t).
\]
The diffusion term of $dW_t^*$ is:
\begin{equation}
\label{eq:diff-Wt-star}
\text{Diff}(dW_t^*) =
\left[
- \bar{\sigma}_r^\top \frac{\partial W_t^*}{\partial r}
- \phi_t \Lambda^\top \frac{\partial W_t^*}{\partial \phi}
+ C_t \sigma_3^\top \frac{\partial W_t^*}{\partial C}
\right] dZ_t.
\end{equation}
By matching this diffusion term with the continuous wealth dynamics $dW_t$ governed by $\pi_t^\top \Sigma_F dZ_t$ in \eqref{eq:wealth-sde-M}, we obtain
\[
\pi_t^\top \Sigma_F dZ_t = \text{Diff}(dW_t^*).
\]
Multiplying by $\Sigma_F^\top(\Sigma_F \Sigma_F^\top)^{-1}$ yields the optimal investment strategy $\pi_t^*$ in \eqref{eq:pi-star}.

Finally, the relation between the bequest and the wealth process is $M_t = W_t + I_t/\mu_{x+t}$ from \eqref{eq:bequest-M}. Therefore, the optimal insurance premium is
\[
I_t^* = \mu_{x+t} (M_t^* - W_t^*).
\]
Substituting $M_t^*$ from Proposition~\ref{prop:static-solution} and $W_t^*$ from \eqref{eq:Wt-star-dynamic} completes the proof of \eqref{eq:I-star}.
\end{proof}

\section{Proof of Theorem~\ref{thm:efficient-frontier}}
\label{app:proof-thm-efficient-frontier}

\begin{proof}[Proof of Theorem~\ref{thm:efficient-frontier}]
From the expectation constraint \eqref{eq:dynamic-constraint} and the static budget constraint \eqref{eq:budget-necessity}, we obtain
\begin{equation}
\label{eq:EF-constraints}
\mathbb{E}[Y_T] = \frac{\lambda_1^*}{2} + \frac{\lambda_2^*}{2} A(0, r_0) = \mathcal{K}, \quad
\frac{\lambda_1^*}{2} A(0, r_0) + \frac{\lambda_2^*}{2} B(0, r_0) = W_0^{\textnormal{eff}}.
\end{equation}
The expected squared payoff is
\begin{equation}
\label{eq:EF-second-moment}
\mathbb{E}[Y_T^2]
= \left(\frac{\lambda_1^*}{2}\right)^2
+ 2\left(\frac{\lambda_1^*}{2}\right)\left(\frac{\lambda_2^*}{2}\right) A(0, r_0)
+ \left(\frac{\lambda_2^*}{2}\right)^2 B(0, r_0).
\end{equation}
The variance is then:
\begin{equation}
\label{eq:EF-variance}
\text{Var}(Y_T)
= \mathbb{E}[Y_T^2] - (\mathbb{E}[Y_T])^2
= \left(\frac{\lambda_2^*}{2}\right)^2 \bigl(B(0, r_0) - A(0, r_0)^2\bigr).
\end{equation}
Solving the linear constraint system \eqref{eq:EF-constraints} for $\lambda_2^*$ yields
\begin{equation}
\label{eq:lambda2-star-alternative}
\frac{\lambda_2^*}{2}
= \frac{W_0^{\textnormal{eff}} - \mathcal{K} A(0, r_0)}{B(0, r_0) - A(0, r_0)^2}.
\end{equation}
Substituting this expression back into the variance equation \eqref{eq:EF-variance} provides:
\begin{equation}
\label{eq:EF-variance-final}
\text{Var}(Y_T)
= \frac{\bigl(W_0^{\textnormal{eff}} - \mathcal{K} A(0, r_0)\bigr)^2}{B(0, r_0) - A(0, r_0)^2}.
\end{equation}
Taking the square root and rearranging the equation in terms of the expected target $\mathcal{K}$ yields the specified efficient frontier \eqref{eq:efficient-frontier}. This completes the proof.
\end{proof}

\section{Proof of Theorem~\ref{thm:verification}}
\label{app:proof-thm-verification}

\begin{proof}[Proof of Theorem~\ref{thm:verification}]
The argument follows the martingale verification method (see Theorems 3.5 and 6.3 in \cite{karatzas1998methods}). For any real numbers $x$ and $y$,
\begin{equation*}
    x^2-2xy \ge -y^2,
\end{equation*}
with equality if and only if $x=y$. Since
\[
2W_T^*=\lambda_1^*+\lambda_2^*\phi_T,
\qquad
2M_t^*=\lambda_1^*+\lambda_2^*\phi_t,
\]
we have, pointwise,
\begin{align}
W_T^2-(\lambda_1^*+\lambda_2^*\phi_T)W_T
&\ge
(W_T^*)^2-(\lambda_1^*+\lambda_2^*\phi_T)W_T^*,\\
M_t^2-(\lambda_1^*+\lambda_2^*\phi_t)M_t
&\ge
(M_t^*)^2-(\lambda_1^*+\lambda_2^*\phi_t)M_t^* .
\end{align}
Multiplying the first inequality by ${}_T p_x$, the second by
${}_t p_x\mu_{x+t}$, integrating over $t\in[0,T]$, and taking expectations gives
\begin{equation}
\label{eq:verification-big}
\begin{aligned}
&\mathbb{E}\left[
{}_T p_x (W_T)^2
+ \int_0^T {}_t p_x \mu_{x+t} (M_t)^2 \, dt
\right] - \lambda_1^* \left(
{}_T p_x \mathbb{E}[W_T]
+ \int_0^T {}_t p_x \mu_{x+t} \mathbb{E}[M_t] \, dt
\right) \\
&\quad - \lambda_2^* \left(
{}_T p_x \mathbb{E}[\phi_T W_T]
+ \int_0^T {}_t p_x \mu_{x+t} \mathbb{E}[\phi_t M_t] \, dt
\right) \\
&\ge \mathbb{E}\left[
{}_T p_x (W_T^*)^2
+ \int_0^T {}_t p_x \mu_{x+t} (M_t^*)^2 \, dt
\right]  - \lambda_1^* \left(
{}_T p_x \mathbb{E}[W_T^*]
+ \int_0^T {}_t p_x \mu_{x+t} \mathbb{E}[M_t^*] \, dt
\right) \\
&\quad - \lambda_2^* \left(
{}_T p_x \mathbb{E}[\phi_T W_T^*]
+ \int_0^T {}_t p_x \mu_{x+t} \mathbb{E}[\phi_t M_t^*] \, dt
\right).
\end{aligned}
\end{equation}
Both $(W_T,M_t)$ and $(W_T^*,M_t^*)$ satisfy the expectation constraint and the static budget constraint. The $\lambda_1^*$ terms are therefore identical. For the $\lambda_2^*$ terms, the static budget constraint implies
\[
{}_T p_x\mathbb{E}[\phi_TW_T]
+\int_0^T {}_t p_x\mu_{x+t}\mathbb{E}[\phi_tM_t]\,dt
=
w_0+\mathbb{E}\left[\int_0^T {}_t p_x\phi_t C_t\,dt\right],
\]
which is also identical for both pairs. Hence all linear terms cancel, and
\[
\mathbb{E}[(Y_T)^2]\ge \mathbb{E}[(Y_T^*)^2].
\]
This completes the proof.
\end{proof}

\end{document}